\newcommand{\hideshow}[1]{{\mbox{}}}
\newtheorem{definition}{Definition}
\newtheorem{proposition}{Proposition}
\newtheorem{lemma}{Lemma}
\newtheorem{counterexample}{Counterexample}
\newtheorem{notation}{Notation}
\newtheorem{remark}{Remark}
\newtheorem{corollary}{Corollary}
\newtheorem{theorem}{Theorem}
\definecolor{mygreen}{rgb}{0,0.6,0}
\definecolor{mygray}{rgb}{0.5,0.5,0.5}
\definecolor{mymauve}{rgb}{0.58,0,0.82}
\newcommand{\p}[1]{\mbox{$[\![#1]\!]$}}
\newcommand{\pI}[1]{\mbox{$[\![#1]\!]^{\mathcal I}$}}
\newcommand{\reguno}[2]
  {
  $  \ \ \frac{\textstyle #1}{\textstyle #2} $
  }
\newcommand{\E}[2]{\ensuremath{{\epsilon}}}
\newcommand {\eg}        {{\textit{e}.\textit{g}.}}
\newcommand {\ie}        {{\textit{i}.\textit{e}.}}
\newtheorem{thermo}{Theorem}
\newtheorem{example}{Example}
\begin{document}

\title{Principal Types as Partial Involutions\thanks{A version of this work is currently under review for publication in the journal ``Mathematical Structures in Computer Science'' (\burl{https://www.cambridge.org/core/journals/mathematical-structures-in-computer-science}).}}

\author{Furio Honsell, Marina Lenisa, Ivan Scagnetto}
\date{Universit\`{a} di Udine, Italy\\
  \texttt{\{furio.honsell, marina.lenisa, ivan.scagnetto\}@uniud.it}}

\maketitle

\begin{abstract}
We show that the {\em principal types} of the closed terms of the affine fragment of $\lambda$-calculus, with respect to a simple type discipline, are structurally isomorphic to  their interpretations, as {\em partial involutions}, in a natural {\em Geometry of Interaction} model  \`a la Abramsky.
This permits to explain in elementary terms the somewhat awkward notion of {\em linear application} arising in {\em Geometry of Interaction}, simply as the {\em resolution} between principal types using an alternate unification algorithm. As a consequence,
we provide an answer, for the purely affine fragment,  to the open problem raised by Abramsky of characterising those partial involutions which are denotations of combinatory terms.
\end{abstract}

\noindent\textbf{Keywords:} Linear Affine $\lambda$-calculus; Combinatory Algebras; Principal Types; Partial Involutions; Geometry of Interaction.

\section{Introduction}
The purpose of this paper is to provide an explanation of \emph{Geometry of Interaction (GoI)} semantics, first introduced in the seminal papers by \cite{Girard1,Girard2}, in terms of the more ordinary notions of \emph{type assignment systems}, \emph{principal types} and \emph{unification}   thereof. We provide, in fact, an {\em elementary alternative route} to the results based on Girard's work, where proof nets and $C^*$-algebras are used, as in \eg\  \cite{Baillot01}. For simplicity, we focus on the affine fragment of untyped  $\lambda$-calculus, where only variables which occur \emph{at most once} can be abstracted, and we provide full proofs of the results in \cite{CHLS18} (see also  \cite{CDGHLS19}). Our approach is completely elementary and could be extended to the full $\lambda$-calculus (as outlined in \eg\ \cite{HLS20}), and to all its complexity-oriented subcalculi; but ``dotting all i's and crossing all t's', even for this simpler fragment is not immediate.

Our line of reasoning stems from realizing a natural {\em structural analogy} between the {\em principal type} of a $\lambda$-term, actually a combinatory logic term, in the {\em simple types discipline}, and its interpretation as a  {\em partial involution} in a GoI model \emph{\`a la} \cite{Abr05}. Namely, labelled paths used in denoting occurrences of type-variables in types and  moves in the strategy denoting the term as a partial involution are isomorphic. This derives from the connection between paths to type variables in principal types and paths to term variables in normal forms. Such kind of relations have always been very fruitful in $\lambda$-calculus, starting from the well-known connection between Levy labels and
types,~\cite{L78}, and then between paths and
labels.
This analogy permits to explain simply as {\em resolution between principal types} the somewhat involved and puzzling, {\em particle style linear application} between involutions (see \cite{Abr96,Abr97,AHS02,AL05,Abr05}),   namely the notion of application used  in GoI and games semantics, based on Girard's Execution Formula.
We prove that GoI linear application amounts to {\em unifying} the left-hand side of the principal type of the operator with the principal type  of the operand, and applying the resulting substitution to the
 right-hand side of the operator. The jist of GoI is that it implements a novel \emph{bottom-up, variable-occurrence oriented}  unification algorithm alternate to the traditional one, which is top-down. This was pointed out also in \eg\ \cite {H23}.

In our view, this analysis sheds new light on the nature of Game Semantics itself and unveils the equivalence of three conceptually independent accounts  of application in $\lambda$-calculus: {\em $\beta$-reduction}, the GoI application of involutions based on symmetric feedback/Girard's {\em Execution Formula}, and {\em unification} of principal types.

As an interesting by-product, this paper provides an answer, for the  affine part, to the open problem raised in \cite{Abr05} of characterising the partial involutions
which are denotations of  combinator terms or, equivalently, arising from the bi-orthogonal pattern matching automata, which are there introduced. Namely, these are the partial involutions which encode the principal types of the combinator terms, \ie\ the theorems of minimal affine logic.

More precisely, we proceed as follows.

We introduce a type system for assigning principal types to affine $\lambda$-terms, where application is explained in terms of  resolution of principal types. We show that this system satisfies a number of interesting properties, such as a restricted form of {\em subject conversion}, and {\em uniqueness} of principal types. Moreover, we prove that principal types are {\em binary}, \ie\ types where each variable appears at most twice. Many of these results have straightforward  proofs, some of which had appeared before in the literature, see \eg\ \cite{H89}, however they had never been connected to GoI or put to use as we do in this paper.

We show then that binary types induce {\em immediately} an algebra of partial involutions ${\mathcal I}$, \ie\ history-free strategies on a suitable set of moves, in the style of \cite{Abr05}, where application amounts to GoI linear application. ${\mathcal I}$ turns out to be an {\em affine combinatory algebra} (a {\bf BCK}-algebra), \ie\ a model of affine combinatory logic. Moreover, we show that it is a {\em $\lambda$-algebra}, \ie\ a model of $\lambda$-calculus, only in the purely linear case. From this it becomes apparent that GoI models amount to models of combinatory algebras, but do not provide directly semantics for $\lambda$-calculus,  \ie\ deep $\beta$-reductions.

Closed affine $\lambda$-terms  are interpreted in the algebra of partial involutions ${\mathcal I}$ via a standard abstraction procedure which maps $\lambda$-terms into terms of combinatory logic and preserves  principal types.

The main result of this paper consists in showing, by elementary arguments, that the game theoretic semantics of any closed affine $\lambda$-term, not necessarily in normal form, amounts to the partial involution formally corresponding to  its principal type. This is achieved by proving that the partial involution obtained via the GoI application between the partial involutions induced by the principal type of closed terms $M,N$, corresponds to the principal types of the $\lambda$-term $MN$ obtained via type resolution.

Once the above correspondence between resolution of principal types and GoI application of the induced partial involutions has been established, properties of the type system can be deduced from properties  of partial involutions, and vice versa.
For instance, we can derive that  the use of combinators is not really a matter of choice in dealing with GoI. Namely, since full subject reduction holds for principal types only on the purely linear fragment, but not in the affine case, we derive that the model of partial involutions is a $\lambda$-algebra only on the purely linear fragment, while it is only a combinatory algebra already on the affine fragment.   This is  the reason for the extra heavy quotienting machinery which needs to be introduced in the literature on game semantics in order to achieve cartesian closed categories.

In \cite{HLS20}  we describe the tool $\Lambda$-symsym, available at \burl{https://lambda-iot.uniud.it/automata/} which
allows for computing with partial involutions and their corresponding principal types, even for a larger fragment than purely affine $\lambda$-calculus. One can use it to readily  machine check ``experimentally'' that all Curry's equations hold for purely linear combinators  ${\mathbf B}, {\mathbf C},{\mathbf I}$, thereby showing that a purely linear combinatory algebra is actually a purely linear $\lambda$-algebra.
\smallskip

\noindent {\bf Related Work.}  Geometry of Interaction for proof nets was introduced by Girard in a series of papers, (\cite{Girard1,Girard11,Girard2}), and further developed in various directions, such as that of token machines (see \eg\ \cite{DH96,M95}), context semantics (see \eg\ \cite{Gal92}), and traced monoidal categories (see \eg\ \cite{AJ94,AJ94bis,AHS02}). In particular, in \cite{DR93}, a computational view of GoI has been developed, by providing a compositional translation of the $\lambda$-calculus into a form of reversible abstract machine.

Partial involutions and their variants for building GoI models have been introduced in \cite{Abr96,Abr97}. These models can be viewed as instances of a general categorical GoI construction based on traced monoidal categories (\cite{AHS02}). Partial involutions and their variants have been used for providing GoI semantics for different  type and untyped  theories, and  various models of computation  (see \cite{AL00,AL05,AL01,Abr05,CHLS18,LC,CDGHLS19}), and as a model of reversible computation in \cite{Abr05}.

A formulation of GoI on proof nets using resolution has been introduced in \cite{Girard2}. That approach, as spelled out in \cite{Baillot01}, allows to derive the ultimate results of the present paper,  but only up to appropriate reformulations, and through a completely different route.  Those results rely on a rather complex framework, based on proof nets and $C^*$-algebras. Our approach, on the other hand, is based on the apparently hitherto unappreciated analogy between partial involutions and principal types, and it permits therefore to achieve those results using the very basic framework of combinators and type substitutions, thus providing an alternative elementary explanation of GoI. In this paper we show that GoI application, and hence the denotational semantics of affine terms, is simply resolution of principal types, and, as shown in Section \ref{twoviews}, GoI application is just a variable occurrence oriented way of carrying out unification. This is a contribution to the conceptual understanding of GoI.

Finally, we point out that in a recent paper by \cite{DL22}, principal types have been related to another GoI model, namely $\lambda$-nets. The problem has been explored for a specific type assignment system,  including intersection and bang operators on types.
The principal types for this system together with unification have been shown to correspond to $\lambda$-nets with a non-standard notion of cut-elimination.  This correspondence allows for deriving  properties of the type system, such as typability, subject reduction, and inhabitability, from properties of $\lambda$-nets, and vice-versa.
In our view, the work in \cite{DL22} and the present paper, by relating two different GoI models to principal types, pave the way to study the connections between principal types and many other GoI models.  General forms of principal types are an essential tool for exploring the fine structure of a plethora of GoI models arising in different contexts.
\smallskip

\noindent {\bf Synopsis.} In Section~\ref{sec-calcolo}, we  recall the notions of affine $\lambda_A$-calculus and affine combinatory logic, and the abstraction algorithm for encoding $\lambda_A$-calculus into combinatory logic.
In Section~\ref{tasy}, we introduce the type system for assigning principal types to affine $\lambda$-terms, and we study its properties. In Section~\ref{modin}, we introduce the model of partial involutions induced by binary types, and we study its fine structure. In Section~\ref{rgs}, we study the relationship between the GoI semantics of partial involutions and principal types, and we
 formalize the alternate characterization of unification in terms of GoI which emerges.
 Final remarks and directions for future work appear in Section~\ref{finrem}.
\smallskip

\noindent {\bf Acknowledgments.} The authors would like to express their gratitude to the referees for their patient work and suggestions.

\section{Affine $\lambda$-calculus and Combinatory Logic}\label{sec-calcolo}
In this section, we  recall the notions of affine $\lambda_A$-calculus and affine combinatory logic, and the abstraction algorithm for encoding $\lambda_A$-calculus into combinatory logic. For further details see \cite{HS86}.

\begin{definition}[Affine $\lambda$-calculus, Combinatory Logic]\hfill  \label{lambda}
\\ (i)
The language {$\mathbf{\Lambda}_{A}$} of the {\em  $\lambda_A$-calculus} is inductively defined from variables $x,y,z, \ldots$ and it is closed under  the following  formation rules:\smallskip \\
\begin{tabular}{ll}
  \reguno{M \in \mathbf{\Lambda}_A\ \ \  N \in \mathbf{\Lambda}_A }{MN \in \mathbf{\Lambda}_A} $(\mbox{app})$
& \ \ \ \ \ \ \ \ \
 \reguno{M \in \mathbf{\Lambda}_A \ \ \ o(x,M)\leq 1 }{\lambda x.M \in \mathbf{\Lambda}_A} $(\lambda)$
\end{tabular} \smallskip

\noindent where $o(x,M)$ denotes  the number of occurrences of the variable $x$ in  $M $.
 \emph{Closed terms} in $\mathbf{\Lambda}_{A}$ are denoted by $\mathbf{\Lambda}_A^{0}$.
\\
The  {\em reduction rules} of the $\lambda_{A}$-calculus are the following: \smallskip \\
\begin{tabular}{ll}
  $(\lambda x. M)N \rightarrow_{A} M[N/x]$   $(\beta )$
  &   \reguno{M\rightarrow_{A} N \ \  \lambda x. M \in \mathbf{\Lambda_{A}}}{\lambda x. M \rightarrow_{A} \lambda x.N} $(\xi)$    \medskip \\
 \reguno{M_1  \rightarrow_{A} M'_1 }{M_1 M_2 \rightarrow_{A} M'_1 M_2 }  $(\mathit{app_L})$   &     \reguno{M_2  \rightarrow_{A} M'_2 }{M_1 M_2 \rightarrow_{A} M_1 M'_2 }  $(\mathit{app_R})$ \medskip
\end{tabular}\smallskip \\
We denote by $ \rightarrow_{A}^*$ the reflexive and transitive closure of  $\rightarrow_{A} $, and by $=_A$ the conversion relation.
\\ {\em Normal forms} are  non-reducible terms.
\\ (ii)  The set of terms  $\mathbf{CL}_A$ of {\em affine combinatory logic} includes variables, combinators $ {\bf B}, {\bf C}, {\bf I}, {\bf K}$, and it is closed under application. Closed $\mathbf{CL}_A$-terms are denoted by
$ \mathbf{CL}_A^0$.
\\  The {\em reduction rules} of $\mathbf{CL}_A$ are the following:
\[   {\bf B} MNP  \rightarrow_A M(NP) \ \ \  \ \ \ \  {\bf C} MNP \rightarrow_A   (MP)N \ \ \ \ \ \ \ {\bf I} M \rightarrow_A M \ \ \  \ \  \ \  {\bf K} MN  \rightarrow_A  M \ .\]
\noindent We denote by $ \rightarrow_{A}^*$ the reflexive and transitive closure of  $\rightarrow_{A} $, and by $=_A$ the conversion relation.
\end{definition}

\noindent {\bf Notation.}  Throughout the paper, we use $\equiv$ to denote syntactic equality. Free and bound variables are defined in the standard way (see \cite{Bar84} for more details); for a given term $M$, we denote by $FV(M)$ the set of free variables of $M$.
 Given a relation $R$, we denote by $R^{op}$ the symmetric relation, \ie\ $\{ (b,a)\ |\ (a,b) \in R\}$. \medskip

It is well-known that the affine $\lambda_A$-calculus can be encoded into combinatory logic,  preserving {\em top-level $\beta$-reduction}, \ie\  $\beta$-reduction not inside $\lambda$'s:

\begin{definition}
We define  two homomorphisms w.r.t. application:
\\ (i) $(\ )_{\lambda}: \mathbf{CL}_A\rightarrow \mathbf{\Lambda}_A$, given a term $M$ of   $ \mathbf{CL}_A$, yields the term of $ \mathbf{\Lambda}_A$ obtained from $M$
by substituting, in place of each combinator, the corresponding $ \lambda_A$-term as follows
\\ \begin{tabular}{llll}
 $ ({\bf B})_{\lambda}= \lambda x y z. x(yz) $ \ \ \ \ \ \ \ \ & $   ({\bf C})_{\lambda} = \lambda x y z.(xz)y   $  \ \ \ \ \ \ \ \ &
$  ({\bf I})_{\lambda}= \lambda x. x  $ \ \ \ \ \ \ \ \ &  $   ({\bf K})_{\lambda} = \lambda x  y . x   $
 \end{tabular}
\\ (ii)  $(\ )_{CL}: \mathbf{\Lambda}_A\rightarrow \mathbf{CL}_A$, given a term  $M$ of the  $\lambda_A$-calculus, replaces each $\lambda$-abstraction
by a $\lambda^*$-abstraction.  Terms with $\lambda^*$-abstractions  amount to  $\mathbf{CL}_A$-terms obtained via the \emph{Abstraction Operation} defined below. \end{definition}

\begin{definition}[Affine Abstraction Operation] \label{op} The following operation is defined by induction on  terms of $\mathbf{CL}_A$:
\\  $\lambda^{*}x. x  = {\bf I}$ \ \ \   $\lambda^{*}x. y  = {\bf K}y$\ , \ for   $x\neq y$
\\ $\lambda^* x. MN = \begin{cases}
{\bf C} (\lambda^* x.M) N & \mbox{ if } x\in FV(M)
\\ {\bf B}M (\lambda^* x.N)  & \mbox{ if }  x\in FV(N)
\\ {\bf K}(MN) & \mbox{ otherwise.}
\end{cases}$
\end{definition}

\begin{thermo}[Affine Abstraction, \cite{HS86}] \label{absth}
For all terms $M,N \in \mathbf{CL}_A$, $(\lambda^* x.M) N =_A M[N/x]$.
\end{thermo}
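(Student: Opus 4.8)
The plan is to prove the statement $(\lambda^{*}x.M)\,N =_A M[N/x]$ by induction on the structure of $M \in \mathbf{CL}_A$, following exactly the case analysis of the Affine Abstraction Operation (Definition~\ref{op}). In each case I unfold the definition of $\lambda^{*}x.M$, apply the relevant combinator reduction rule from Definition~\ref{lambda}(ii), and check that the result coincides with $M[N/x]$, appealing to the induction hypothesis where needed.

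First I would dispatch the base cases. If $M \equiv x$, then $\lambda^{*}x.x = {\bf I}$, and ${\bf I}\,N \rightarrow_A N \equiv x[N/x]$, as desired. If $M \equiv y$ with $x \neq y$, then $\lambda^{*}x.y = {\bf K}y$, and $({\bf K}y)\,N \rightarrow_A y \equiv y[N/x]$, using that $x$ does not occur in $y$. These are immediate from the reduction rules for ${\bf I}$ and ${\bf K}$.

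Next I would handle the inductive case $M \equiv PQ$, splitting along the three clauses of the definition. Here I use that the abstraction is affine, so $x$ occurs in \emph{at most one} of $P,Q$.
\begin{itemize}
\item If $x \in FV(P)$, then $\lambda^{*}x.PQ = {\bf C}\,(\lambda^{*}x.P)\,Q$, and $({\bf C}\,(\lambda^{*}x.P)\,Q)\,N \rightarrow_A ((\lambda^{*}x.P)\,N)\,Q$. By the induction hypothesis $(\lambda^{*}x.P)\,N =_A P[N/x]$, so this equals $P[N/x]\,Q$. Since $x \notin FV(Q)$, we have $Q[N/x] \equiv Q$, hence $P[N/x]\,Q \equiv (PQ)[N/x]$.
\item If $x \in FV(Q)$, then $\lambda^{*}x.PQ = {\bf B}\,P\,(\lambda^{*}x.Q)$, and $({\bf B}\,P\,(\lambda^{*}x.Q))\,N \rightarrow_A P((\lambda^{*}x.Q)\,N)$. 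By the induction hypothesis this equals $P(Q[N/x])$, and since $x \notin FV(P)$ we get $P \equiv P[N/x]$, so the result is $(PQ)[N/x]$.
\item Otherwise $x \notin FV(PQ)$, so $\lambda^{*}x.PQ = {\bf K}(PQ)$, and $({\bf K}(PQ))\,N \rightarrow_A PQ \equiv (PQ)[N/x]$, since substituting for a non-occurring variable is the identity.
\end{itemize}

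The only subtlety, which I regard as the main point to verify carefully rather than a genuine obstacle, is the bookkeeping of free variables: in each binary case one must confirm that $x$ occurs in the designated subterm and \emph{not} in the other, so that substitution acts as the identity on the untouched component. This is precisely where the affine constraint $o(x,M) \leq 1$ is used, and it is what makes the mutually exclusive clauses of Definition~\ref{op} exhaustive and well-defined. Once this is tracked, each case closes by a single combinator reduction together with the induction hypothesis, and the equalities combine via transitivity and congruence of $=_A$ under application (rules $(\mathit{app_L})$, $(\mathit{app_R})$).
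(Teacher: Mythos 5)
Your proof is correct and follows exactly the route the paper takes: the paper's proof is stated as ``by straightforward induction on the definition of $\lambda^*$,'' and your case analysis on the clauses of Definition~\ref{op}, closing each case with one combinator reduction and the induction hypothesis, is precisely the intended argument spelled out in full.
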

\begin{proof}By straightforward induction on the definition of $\lambda^*$.
\end{proof}

Here we recall the notion of {\em affine combinatory algebra}, \ie\  model of affine combinatory logic:

\begin{definition}[Affine Combinatory Algebra, {\bf BCK}-algebra]\label{lca}
\hfill
\\ (i) An {\em  affine combinatory algebra}  $\mathcal{ A}= (A, \cdot)$ is
an applicative structure with distinguished
elements (combinators) ${\bf B}, {\bf C}, {\bf I}, {\bf K}$ satisfying the following
equations: for all $x,y,z \in A$,
\smallskip

\begin{tabular}{llll}
  $ {\bf B}xyz =_{\mathcal A} x(yz) $ \ \ \ \ \ \ \ \ &
$  {\bf C} xyz =_{\mathcal A} (xz)y   $ \ \ \ \   \ \ \ \ & $  {\bf I} x =_{\mathcal A} x  $    \ \ \ \ \ \ \ \ &  $   {\bf K}  x  y =_{\mathcal A} x  $
\end{tabular}
\\ (ii) For an affine combinatory algebra $\mathcal{ A}$, we define  $\p{\ }^{\mathcal{A}}: \mathbf{CL}^0_A\rightarrow \mathcal{A}$ as  the natural interpretation of closed terms of  $ \mathbf{CL}_A$ into $\mathcal{A}$.
\\ (iii) Closed terms of the $\lambda_A$-calculus are interpreted on ${\mathcal A}$ via the encoding into $\mathbf{CL}_A$, \ie, for any $M\in \mathbf{\Lambda}^0_A$, we define, by a small abuse of notation,
$\p{\ }^{\mathcal A}: {\mathbf \Lambda}^0_A \rightarrow {\mathcal A}$ by: $\p{M}^{\mathcal A} = \p{(M)_{CL}}^{\mathcal A}$.
\end{definition}

Notice that combinator $\mathbf{I}$ is redundant in the above definition of affine combinatory algebra, namely it can be defined in terms of the other combinators, \eg\ as $(\mathbf{C}\mathbf{K})\mathbf{C}$.

\section{The Type Assignment System for Principal Types} \label{tasy}
In this section, we introduce the type system for assigning principal types to affine $\lambda$-terms, and we study its properties.

We start by defining  the language of types and a language for denoting variable occurrences in types; the latter is necessary to permit  a fine analysis of variable occurrences, which we need for establishing the correspondence between types and partial involutions.

\begin{definition}[Types]\hfill
\\ (i) {\em Types} $T_\Sigma$ are binary trees whose leaves are variables $\alpha, \beta, \ldots\in TVar $, and nodes are denoted by $\multimap$, \ie\
\[ (T_\Sigma \ni)\ \sigma, \tau :: = \ \alpha \ | \ \beta \ | \ \ldots \ | \ \sigma \multimap \tau \ . \]
\\ (ii)  A type $\sigma$  is \emph{binary} if each variable in $\sigma$ occurs at most twice.
\\ (iii)  \emph{Occurrences} of variables in types are denoted by terms of the shape:
 $$(O_\Sigma \ni)\ u[\alpha]::= [\alpha ] \mid lu[\alpha]\mid ru[\alpha]\ ,$$
  \noindent where
  \begin{itemize}
  \item $[\alpha]$ denotes the occurrence of the variable $\alpha$ in the type $\alpha$,
  \item if $u[\alpha]$ denotes an occurrence of $\alpha$ in $\sigma_1$ ($\sigma_2$), then $lu[\alpha]$ ($ru[\alpha]$) denotes the corresponding occurrence of $\alpha$ in $\sigma_1 \multimap \sigma_2$.
  \end{itemize}
\noindent (iv)  The {\em path} of an occurrence $u[\alpha ]$ is $u$.
\end{definition}

In Proposition~\ref{occurence-vs-types} below, we clarify the relationships between binary types and partial involutions. We recall that a partial function  $f$ on a set $A$, $f:A\rightarrow A$, is a partial involution if $f=f^{op}$.
 The correspondence  between binary types and partial involutions  arises from viewing a type as a set of variable occurrences. If the type is binary, moreover,   the set of pairs of occurrences of the same variable can be read as the graph of a partial involution with domain $O_{\Sigma}$. Vice versa, from a set of variable occurrences such that no path is the initial prefix of any other path of a different occurrence, we can build the tree of a type, by tagging possible missing leaves with fresh variables. The following proposition formalizes the structural correspondence between types and sets of variable occurrences. The proof easily follows from the definitions.

\begin{proposition}\label{occurence-vs-types}
\hfill
\\ (i)
 A type  $\tau$ gives rise to a set of variable occurrences $${\mathcal O}(\tau)=\{u[\alpha]\  \mid \ u[\alpha] \mbox{ is an occurrence of the type variable } \alpha \mbox{ in } \tau\}.$$
\\ (ii)  A \emph{binary} type $\tau$ gives rise to a set of pairs of occurrences $${\mathcal R}(\tau)=\{<u[\alpha],v[\alpha] >\ \mid \  u[\alpha],v[\alpha] \mbox{ are different occurrences of } \alpha \mbox{ in } \tau\}.$$
\noindent  Then ${\mathcal R}(\tau)$ is a partial involution on $O_{\Sigma}$.
\\ (iii)  Given a set of variable occurrences $S$ where no path is the initial prefix of any other path of a different occurrence, and let $Z=\{\zeta_1,\ldots,\zeta_i,\ldots\}$ be an infinite set of  fresh variables. The following is a type:
$${\mathcal T}_Z({S})= \begin{cases} \zeta & \mbox{ if } {\mathcal S}=\emptyset \\
\alpha & \mbox{ if } {\mathcal S}=\{ [\alpha] \} \\
{\mathcal T}_Z(\{u\ \mid\  lu\in {\mathcal S} \})\multimap {\mathcal T}_Z(\{u\ \mid\ ru\in {\mathcal S}\}) & \mbox{ otherwise},
\end{cases} $$
\noindent where $Z$-variables in ${\mathcal T}_Z({S})$ are taken all different.
\\ (iv)  For all type $\sigma$, we have $${\mathcal T}_Z({\mathcal O}(\sigma))=\sigma,$$ taking equality of types up-to injective renaming of variables.
\end{proposition}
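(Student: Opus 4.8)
The plan is to establish the four items by structural recursion, treating (i)--(iii) as the definitional scaffolding on which the induction for (iv) rests. First I would make the recursion defining $\mathcal{O}$ explicit: set $\mathcal{O}(\alpha)=\{[\alpha]\}$ and $\mathcal{O}(\sigma_1\multimap\sigma_2)=\{lu[\alpha]\mid u[\alpha]\in\mathcal{O}(\sigma_1)\}\cup\{ru[\alpha]\mid u[\alpha]\in\mathcal{O}(\sigma_2)\}$, which is exactly what the clauses of the occurrence grammar prescribe. This gives (i) at once. For (ii) I would observe that $\mathcal{R}(\tau)$ is symmetric by construction, since its defining condition ``$u[\alpha],v[\alpha]$ are different occurrences of the same $\alpha$'' is symmetric in $u$ and $v$; hence $\mathcal{R}(\tau)=\mathcal{R}(\tau)^{op}$. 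It remains to check that $\mathcal{R}(\tau)$ is single-valued, and this is precisely where binarity is used: each variable $\alpha$ has at most two occurrences in $\tau$, so a given occurrence $u[\alpha]$ is paired with at most one distinct occurrence $v[\alpha]$. A symmetric single-valued relation is a partial function equal to its own converse, i.e.\ a partial involution on $O_\Sigma$.

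Next, for (iii) I would argue that $\mathcal{T}_Z$ is well-defined on every $S$ satisfying the no-prefix hypothesis. The key remark is that the empty path of a leaf occurrence $[\alpha]$ is an initial prefix of every path; hence if $[\alpha]\in S$ the no-prefix condition forces $S=\{[\alpha]\}$. Consequently the three clauses ($S=\emptyset$, $S=\{[\alpha]\}$, and ``otherwise'') are mutually exclusive and exhaustive, so the definition is unambiguous. Well-foundedness of the recursion follows because the two recursive arguments $\{u\mid lu\in S\}$ and $\{u\mid ru\in S\}$ strip one leading symbol from each path, and all paths are finite; an induction on the maximal path length in $S$ then shows $\mathcal{T}_Z(S)\in T_\Sigma$.

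Finally, (iv) is proved by structural induction on $\sigma$. In the base case $\sigma=\alpha$ we have $\mathcal{O}(\alpha)=\{[\alpha]\}$, and the singleton clause gives $\mathcal{T}_Z(\{[\alpha]\})=\alpha$. In the inductive case $\sigma=\sigma_1\multimap\sigma_2$, the recursive description of $\mathcal{O}$ shows that $\mathcal{O}(\sigma)$ contains only $l$- and $r$-prefixed occurrences, and at least one of each, so it is neither empty nor a bare singleton and the ``otherwise'' clause applies. The peeling operations then invert the tagging, i.e.\ $\{u\mid lu\in\mathcal{O}(\sigma)\}=\mathcal{O}(\sigma_1)$ and $\{u\mid ru\in\mathcal{O}(\sigma)\}=\mathcal{O}(\sigma_2)$, whence $\mathcal{T}_Z(\mathcal{O}(\sigma))=\mathcal{T}_Z(\mathcal{O}(\sigma_1))\multimap\mathcal{T}_Z(\mathcal{O}(\sigma_2))$, and the induction hypothesis closes the case. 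I would remark that along this particular recursion the arguments $\mathcal{O}(\sigma_1),\mathcal{O}(\sigma_2)$ are always nonempty, so the empty clause introducing a fresh $Z$-variable is never triggered and the variable names carried by the occurrences are restored verbatim; the qualifier ``up to injective renaming'' is therefore only needed to absorb the standard convention of identifying types modulo renaming of type variables.

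The only genuine care-point throughout is the bookkeeping enforced by the no-prefix condition, namely verifying that the three clauses of $\mathcal{T}_Z$ are selected correctly and that the $l/r$-tagging used to define $\mathcal{O}$ and the $l/r$-peeling used to define $\mathcal{T}_Z$ are mutually inverse. This is the step I expect to require the most attention to detail; everything else reduces to unwinding the definitions, in agreement with the remark that the result follows easily from them.
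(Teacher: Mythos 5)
Your proof is correct and is exactly the routine unwinding of the definitions that the paper has in mind; the paper itself only remarks that the proposition ``easily follows from the definitions'' and gives no further argument. Your explicit recursion for $\mathcal{O}$, the binarity-gives-single-valuedness observation for (ii), the case analysis showing the clauses of $\mathcal{T}_Z$ are exclusive and the recursion well-founded for (iii), and the structural induction for (iv) supply precisely the missing details, with no divergence from the intended route.
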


In the following definition, we introduce two  notions, which will be useful in the sequel. The first if that  of {\em $Z$-ancestor} of a given type $\sigma$: a type $\sigma'$ is a Z-ancestor of $\sigma$ if it coincides with $\sigma$ apart from variables in $Z$, which have to be suitably instantiated to obtain $\sigma$. The second notion is that of  {\em $\Theta$-free $Z$-ancestor} of  a type $\sigma$:
given a set of variables $\Theta$, the  $\Theta$-free $Z$-ancestor of $\sigma$  is the $Z$-ancestor of $\sigma$ obtained by replacing all subtypes of $\sigma$ containing only occurrences of variables in $\Theta$ with variables in $Z$.

\begin{definition} \label{ancestor} Let $\sigma$ be a type and let $\{\zeta_1,\ldots,\zeta_i,\ldots\}$ be an infinite set of  variables not appearing in $\sigma$. \\
(i) The type $\sigma'$ is a {\em $Z$-ancestor} of the type $\sigma$ if there exists a substitution $U$ such that  $U(\sigma')=\sigma$ and only variables in $Z$ are affected by $U$. \\
(ii) Let $\Theta$ be a set of variables. The {\em $\Theta$-free $Z$-ancestor} of $\sigma$, ${\mathcal F}^\Theta_Z(\sigma)$, is defined as follows: \\
${\mathcal F}^\Theta_{Z}(\sigma)= \begin{cases} \zeta & \mbox{ if } var(\sigma)\subseteq\Theta\\
\alpha &\mbox{ if } \sigma=\alpha \ \wedge\ \alpha \notin\Theta\\
{\mathcal F}^\Theta_Z(\sigma_1)\multimap {\mathcal F}^\Theta_Z(\sigma_2)& \mbox{ otherwise,}
\end{cases}$
\\ where $Z$-variables in ${\mathcal F}^\Theta_{Z}(\sigma)$ are taken all different.
\end{definition}

The proof of the following proposition easily follows from Definition \ref{ancestor} and Proposition \ref{occurence-vs-types}.

\begin{proposition}\label{extension}
Given a set of occurrences $S$, where no path is the initial prefix of any other path of a different occurrence, the type ${\mathcal T}_Z({S})$ in Proposition~\ref{occurence-vs-types}(iii) is the unique common $Z$-ancestor of all types which include the set of variable occurrences $S$.
\end{proposition}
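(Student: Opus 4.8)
The plan is to unfold the recursive definition of $\mathcal{T}_Z(S)$ from Proposition~\ref{occurence-vs-types}(iii) and to argue in two movements over the family $\mathcal{G}_S = \{\sigma \mid S \subseteq \mathcal{O}(\sigma)\}$ of all types including $S$: first that $\mathcal{T}_Z(S)$ is a common $Z$-ancestor of every member of $\mathcal{G}_S$, and then that it is the only common $Z$-ancestor which itself lies in $\mathcal{G}_S$, equivalently the least general one. Uniqueness is to be read up to injective renaming of the fresh $Z$-variables, in the sense of Proposition~\ref{occurence-vs-types}(iv). As a preliminary I would record, by induction following the three clauses defining $\mathcal{T}_Z$, that the non-$Z$ variable occurrences of $\mathcal{T}_Z(S)$ are exactly $S$; in particular $S \subseteq \mathcal{O}(\mathcal{T}_Z(S))$, so $\mathcal{T}_Z(S) \in \mathcal{G}_S$. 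Here the prefix-freeness hypothesis on $S$ is precisely what makes the three clauses exhaustive and mutually exclusive: since the empty path is an initial prefix of every path, $[\alpha] \in S$ forces $S = \{[\alpha]\}$, so in the remaining case every path of $S$ has length at least one and the residual sets $\{u \mid lu \in S\}$, $\{u \mid ru \in S\}$ are again prefix-free, letting the recursion proceed.

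For the common-ancestor claim I would induct on the same recursion. Given $\sigma \in \mathcal{G}_S$, I construct a substitution $U$ affecting only $Z$-variables with $U(\mathcal{T}_Z(S)) = \sigma$; intuitively $U$ sends each frontier $Z$-variable of $\mathcal{T}_Z(S)$ to the subtree of $\sigma$ hanging at the corresponding path. The base cases are immediate: $\zeta \mapsto \sigma$ when $S = \emptyset$, and the identity when $S = \{[\alpha]\}$, since then $\sigma \equiv \alpha$. In the inductive case a path of length at least one in $S$ forces $\sigma \equiv \sigma_1 \multimap \sigma_2$ with $\{u \mid lu \in S\} \subseteq \mathcal{O}(\sigma_1)$ and $\{u \mid ru \in S\} \subseteq \mathcal{O}(\sigma_2)$; the induction hypothesis yields substitutions $U_1, U_2$, and because the $Z$-variables on the two sides are all distinct their union is a well-defined substitution doing the job. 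By Definition~\ref{ancestor}(i), $\mathcal{T}_Z(S)$ is thus a $Z$-ancestor of every $\sigma \in \mathcal{G}_S$.

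For uniqueness, let $\sigma'$ be any common $Z$-ancestor of $\mathcal{G}_S$ with $S \subseteq \mathcal{O}(\sigma')$. Requiring $\sigma'$ to include $S$ forbids over-pruning: every prefix of an $S$-path must carry a $\multimap$ node and every $S$-leaf its prescribed variable, so the non-$Z$ skeleton of $\sigma'$ agrees with that of $\mathcal{T}_Z(S)$ on all $S$-positions and their ancestors. At a frontier position, one not constrained by $S$, the type $\sigma'$ must instead carry a $Z$-variable: were a concrete subtree sitting there, it could not be carried by a $Z$-only substitution onto the two distinct subtrees exhibited by two members of $\mathcal{G}_S$ that differ precisely at that position. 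Combining these two observations pins $\sigma'$ down, so $\sigma' \equiv \mathcal{T}_Z(S)$ up to renaming of the frontier $Z$-variables.

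I expect the delicate point to be the formulation of uniqueness rather than any computation. Without the proviso that the ancestor itself include $S$, equivalently that it be least general, the statement is false for $S \neq \emptyset$, since a bare variable $\zeta$ is vacuously a common $Z$-ancestor of every family. The substantive content is therefore the twofold forcing argument above, showing both that $S$ rigidifies the concrete skeleton and that the unconstrained frontier positions are obliged to be $Z$-variables; the remaining inductive bookkeeping, namely the well-definedness of $\mathcal{T}_Z$ and the disjointness of the freshly chosen $Z$-names, is routine once prefix-freeness is in hand.
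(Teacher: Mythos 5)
Your proof is correct and is essentially the argument the paper intends: the paper offers no explicit proof here, saying only that the result ``easily follows'' from Definition~\ref{ancestor} and Proposition~\ref{occurence-vs-types}, and your unfolding of the recursion defining ${\mathcal T}_Z$ (using prefix-freeness to make the clauses exhaustive) together with the two forcing arguments is precisely that unpacking carried out carefully. Your observation that uniqueness must be read among common $Z$-ancestors that themselves contain $S$ (equivalently, the least general one), since a bare $\zeta$ is vacuously a common $Z$-ancestor of any family, is a correct and worthwhile sharpening of a proviso the paper leaves implicit; the only detail left tacit in your write-up is that distinct frontier positions must carry \emph{distinct} $Z$-variables, which follows by the same two-instance argument you already use.
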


In the following, we recall  basic definitions on  {\em type unifiers} and we present Martelli-Montanari's {\em unification algorithm}, \cite{MM82},  which  refines the original one by Robinson, and  will be used in the definition of the type  system for
assigning principal types.

\begin{definition}[Type Unifiers] Let  $\sigma$ and $\tau$ be  types.
\\ (i)
A \emph{unifier} for $\sigma$ and $\tau$ is a substitution $U$ which differs from the identity on a finite number of variables and such that $U(\sigma)= U(\tau)$.
We call {\em domain} of $U$, $\mathit{dom} (U)$, the finite set of variables on which $U$ is not the identity.
\\ (ii) Given two substitutions $U$ and $V$ whose domains are disjoint, we define the {\it union}, $U\cup V$ as usual.
\\ (iii) Given two substitutions, $U$ and $V$, the composition, $V\circ U$, is defined as usual. 
\\ (iv) Given  two substitutions, $U$ and $V$,  we  define $U\leq V$ if there exists a substitution $U'$ such that $U'\circ U = V$, \ie\ $U$ is more general than $V$.
\\ (v) Given types $\sigma,\tau$, the {\em most general unifier (m.g.u.)} of $\sigma$, $\tau$ is a unifier $\overline{U}$ of $\sigma$ and $\tau$ such that, for any unifier $U$ of $\sigma$ and $\tau$, $\overline{U}\leq U$.
\end{definition}

The following proposition follows immediately from \cite{MM82}, and can be proved by induction on the complexity of pairs of types, using the appropriate measure.
\begin{proposition}[Unification Algorithm]\label{martelli}
Let $E$ be  a set of pairs of types. The following non-deterministic algorithm  computes, if it exists, \emph{the most general unifier} (m.g.u.) of a given set of pairs of types, otherwise it yields {\tt fail}:

\noindent

\begin{tabular}{ccc}
  $MGU(\{ \langle \sigma_1 \multimap \sigma_2 ,  \tau_1 \multimap \tau_2\rangle \} \cup E )$ &$\rightarrow$& $MGU(\{ \langle \sigma_1, \tau_1 \rangle,  \langle \sigma_2 , \tau_2 \rangle\} \cup E) $\\
 $MGU( \{ \langle \alpha, \alpha\rangle  \} \cup E ) $ &$\rightarrow$& $  E $ \\
$MGU(  \{\langle \sigma_1 \multimap \sigma_2 , \alpha \rangle \} \cup E) $ &$\rightarrow$& $  MGU( \{ \langle \alpha, \sigma_1 \multimap \sigma_2\rangle  \} \cup E )$ \\
$ MGU( \{ \langle \alpha , \sigma \rangle\} \cup E ) $ &$\rightarrow$& $
                  MGU( \{ \langle \alpha , \sigma \rangle \} \cup E[\sigma/\alpha ] ),
                    \mbox{      if } \alpha \not\in Var(\sigma) \ \wedge\ \alpha \in Var(E)$ \\
$MGU(  \{ \langle \alpha , \sigma \rangle \} \cup E) $ &$\rightarrow$& $  \mbox{ {\tt fail},     if } \alpha \in Var(\sigma) \ \wedge \  \alpha \neq \sigma$
\end{tabular}
When no rules can be applied, the final set gives the m.g.u. of the initial set of pairs.
\end{proposition}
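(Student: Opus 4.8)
The plan is to reproduce, in the present binary-tree setting, the standard Martelli--Montanari development: prove \emph{termination} by exhibiting a well-founded measure that strictly decreases under every non-failing rule, prove \emph{invariance of the solution set} under each rule (together with a soundness argument for the failure rule), and finally read off the most general unifier from the shape of the terminal configurations.

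First I would prove termination. To each set of pairs $E$ I associate the lexicographically ordered triple $\mu(E)=(n_1,n_2,n_3)$, where $n_1$ is the number of \emph{unsolved} variables of $E$ (a variable $\alpha$ is \emph{solved} if $E$ contains exactly one pair $\langle\alpha,\sigma\rangle$ with $\alpha$ occurring nowhere else in $E$), $n_2$ is the total number of symbols occurring in all types of $E$, and $n_3$ is the number of pairs of the form $\langle\sigma_1\multimap\sigma_2,\alpha\rangle$. Then I check rule by rule that $\mu$ strictly decreases in $(\mathbb{N}^3,<_{lex})$: the decomposition and deletion rules leave $n_1$ unchanged and strictly decrease $n_2$; the orientation rule leaves $n_1,n_2$ unchanged and strictly decreases $n_3$; and the variable-elimination rule turns $\alpha$ into a solved variable, since the side condition $\alpha\in Var(E)$ ensures $\alpha$ was unsolved beforehand while $\alpha\notin Var(\sigma)$ ensures it is solved afterwards, so it strictly decreases $n_1$ (which dominates the possible increase of $n_2$ caused by copying $\sigma$). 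Well-foundedness then yields termination on every input.

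Next I would establish correctness through an \emph{invariance lemma}: for every rewrite step $E\rightarrow E'$ the unifiers of $E$ and of $E'$ coincide. This is immediate for deletion and orientation; for decomposition it uses that $\multimap$ is a free binary constructor, so $U(\sigma_1\multimap\sigma_2)=U(\tau_1\multimap\tau_2)$ iff $U(\sigma_1)=U(\tau_1)$ and $U(\sigma_2)=U(\tau_2)$; and for variable elimination it uses that whenever $U$ unifies $\alpha$ and $\sigma$ one has $U\circ[\sigma/\alpha]=U$, whence $U(E)=U(E[\sigma/\alpha])$. For the failure rule I would argue that if $\alpha\in Var(\sigma)$ and $\alpha\neq\sigma$, then $\alpha$ occurs as a \emph{proper} subtree of $\sigma$, so any putative unifier would force $U(\alpha)$ to be a proper subtree of itself, impossible for finite trees; hence $E$ has no unifier, matching the output $\mathtt{fail}$. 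Since $\multimap$ is the only non-variable constructor, no symbol-clash case arises: two $\multimap$-types always decompose, and a variable/non-variable pair is handled by orientation and elimination. Finally I characterize the terminal configurations: if the procedure halts without failure on $S$, then each pair must have the form $\langle\alpha,\sigma\rangle$ with $\alpha\notin Var(\sigma)$ and $\alpha$ occurring in no other pair, so $S=\{\langle\alpha_1,\sigma_1\rangle,\ldots,\langle\alpha_k,\sigma_k\rangle\}$ is in solved form with the $\alpha_i$ pairwise distinct. The induced substitution $\overline{U}=[\sigma_1/\alpha_1,\ldots,\sigma_k/\alpha_k]$ is a unifier of $S$, hence of $E$ by invariance, and it is most general: any unifier $U$ of $E$ satisfies $U\circ\overline{U}=U$, so $U'=U$ witnesses $\overline{U}\leq U$.

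I expect the main obstacle to be the \emph{termination measure} rather than the correctness reasoning: the variable-elimination rule can increase the total size $n_2$ because it duplicates $\sigma$, so the whole argument hinges on isolating the count $n_1$ of unsolved variables as the dominant lexicographic component, and on verifying carefully, via the two side conditions $\alpha\in Var(E)$ and $\alpha\notin Var(\sigma)$, that each elimination step converts a genuinely unsolved variable into a solved one without un-solving any previously solved variable. The remaining invariance and solved-form checks are routine structural verifications.
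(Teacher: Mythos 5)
Your proof is correct and follows exactly the route the paper has in mind: the paper simply defers to Martelli--Montanari, remarking that the result ``can be proved by induction on the complexity of pairs of types, using the appropriate measure,'' and your lexicographic triple (unsolved variables, total size, misoriented pairs) together with the rule-by-rule invariance of the unifier set and the solved-form characterisation is precisely that standard argument spelled out in full. The only nitpick is that decomposition, deletion and orientation can in fact \emph{decrease} $n_1$ (by turning a variable into a solved one) rather than leave it unchanged, but since a decrease in the dominant component only strengthens the lexicographic descent, this does not affect the termination argument.
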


In the following definition, we extend the definition of unifiers to variable occurrences.
\begin{definition}[Occurrence Unifiers] Let $\sigma$, $\tau$ be types.
\\ (i) Two occurrences $u[\alpha]\in \sigma $ and $v[\beta]\in \tau$ are \emph{unifiable} if $u$ is a prefix of $v$, \ie\ there exists $w$ such that $uw=v$, or vice versa.
\\ (ii) If two occurrences $u[\alpha]\in \sigma $ and $v[\beta]\in \tau$ are \emph{unifiable}, their \emph{occurrence unifier} (occ-unifier) is the most general unifier of  ${\mathcal T}_Z(\{u[\alpha]\})$ and ${\mathcal T}_Z(\{v[\beta]\})$.
\end{definition}

The following notation will be used in the sequel:
\begin{notation}\hfill
\\ (i) Given unifiable occurrences $u[\alpha] \in \sigma$ and $v[\beta]\in \tau$, and their occurrence unifier $U$, by $U(u[\alpha])$ we mean either $u[\alpha]$ in case $u=vw$, or the occurrence $uw[\beta]$, in case $uw=v$, and similarly for $U(v[\beta])$.
\\ (ii) Let  $U$ be the occ-unifier of   $u[\alpha]\in \sigma $ and $v[\beta]\in \tau$, and let $u'[\alpha]$ be a different occurrence of $\alpha \in \tau$, by $U(u'[\alpha])$ we mean either $u'[\alpha]$ or $u'w[\beta]$, in case  $v=uw$.
\end{notation}

\begin{remark}\label{pons-asinorum}
Let $v[\alpha]$ and $w[\beta]$ be  two different occurrences  in the type $\sigma$, then they are not unifiable, and no occ-unifier $U$ involving $\alpha$ will ever make $U(v[\alpha])$ and $U(w[\beta])$  unifiable.
\end{remark}

\subsection{The Principal Type Assignment System and its properties}

We introduce now the type system for assigning principal types:

\begin{definition}[Principal Type Assignment System]  Let $\Vdash_A$ be the following type assignment system:\medskip \\
\reguno{}{x:\alpha\Vdash_A x:\alpha} \ \ (var)\ \ \ \
\reguno{\Gamma, x:\sigma\Vdash_A M:\tau}{\Gamma\Vdash_A \lambda x. M : \sigma\multimap \tau} \ \ (abs)\ \ \ \ \reguno{\Gamma \Vdash_A M:\sigma \ \ \ \alpha \mbox{ fresh }}{\Gamma\Vdash_A \lambda x. M : \alpha\multimap \sigma} \ \ (abs$_\emptyset$)
\bigskip

\noindent \reguno{\begin{tabular}{c} $\Gamma\Vdash_A M: \sigma \ \ \ \Delta\Vdash_A N:\tau \ \ \ (\mathit{dom}(\Gamma) \cap \mathit{dom}(\Delta))= \emptyset  \ \ \  (\mathit{TVar}( \Gamma)\cap \mathit{TVar}( \Delta))=\emptyset   $ \vspace{-0.15cm}
\\    $    (\mathit{TVar}( \sigma)\cap \mathit{TVar}( \tau))=\emptyset  \ \ \  U'= MGU(\sigma, \alpha\multimap \beta)\ \ \ U=MGU(U'(\alpha), \tau)
\ \ \ \alpha, \beta \ \mathit{fresh}$  \end{tabular} }{ U  \circ  U' (\Gamma, \Delta) \Vdash_A MN:U \circ  U'(\beta) }\ \  (app)\medskip
\\ where $MGU(\sigma,\tau)$ denotes the m.g.u.  between the types $\sigma, \tau$, which can be computed, say, via the  unification algorithm of Proposition~\ref{martelli}.
\end{definition}

The type assignment  system $\Vdash_A$ satisfies a number of  remarkable properties:

\begin{enumerate}
\item it assigns a unique type (up-to injective renaming of type variables) to each $\lambda$-term; \label{unop}
\item all judgements derivable in $\Vdash_A$ are {\em binary}, \ie\ each type variable occurs at most twice in a judgement $\Gamma\Vdash_A M:\sigma$; \label{duep}
\item judgements derivable in $\Vdash_A$ are {\em principal} w.r.t. a simple type assignment system $\vdash_A$ (which we introduce below), in the sense that all judgments derivable in $\vdash_A$ are instances of the unique judgement derivable in $\Vdash_A$; \label{tre}
\item principal types are preserved by $\beta$-conversion not within $\lambda$'s; \label{quattro}
\item different normal forms receive different principal types; \label{5}
\item the encoding of $\lambda$-terms into combinatory logic preserves principal types.\label{6}
\end{enumerate}

Most of these results, in slightly different frameworks, have appeared before in the literature, see \eg\ \cite{H89, Hi93, Ma04}, and in some cases references in these papers point to even earlier work.  In order to make the present paper self-contained and homogenous in the formalism, and to give evidence of the elementary nature of our approach, we provide  our own proofs, rather than referring to such results.

In the following proposition, we address Properties \ref{unop} and \ref{duep}. In particular, as far as Property~\ref{unop}, we show that, if a term is typable, then its principal type is unique up-to injective substitution.
Below (see Corollary~\ref{unic}) we complete the proof of Property~\ref{unop}, by showing that all $\lambda$-terms are typable with principal type.

\begin{proposition}\hfill \label{unique}
\\ (i) For any $M\in {\mathbf \Lambda}_A$, if the judgements $\Gamma \Vdash_A M: \sigma$ and $\Gamma' \Vdash_A M: \sigma'$ are derivable, then there exists an injective substitution $U$ such that $U(\Gamma)= \Gamma'$ and $U(\sigma)=  \sigma'$.
\\ (ii) Any judgement  $\Gamma \Vdash_A M: \sigma$ derivable in $\Vdash_A$  is binary.
\end{proposition}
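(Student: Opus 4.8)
The plan is to prove both parts by a single structural induction on $M$ (equivalently, on the derivation), exploiting the fact that $\Vdash_A$ is \emph{syntax-directed}: for a variable only $(var)$ applies; for $\lambda x.M$ exactly one of $(abs)$, $(abs_\emptyset)$ applies according to whether $x\in FV(M)$, since in this affine discipline a variable sits in the context precisely when it actually occurs; and for an application only $(app)$ applies. Hence the shape of any derivation, and of its immediate subderivations, is fixed by $M$, which is what makes the induction go through.

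For part (i) the base case $(var)$ is immediate: $x:\alpha\Vdash_A x:\alpha$ and $x:\alpha'\Vdash_A x:\alpha'$ are related by the injective renaming $\alpha\mapsto\alpha'$. For $(abs)$ the inductive hypothesis on the premise yields an injective $U$ with $U(\Gamma,x:\sigma_1)=(\Gamma',x:\sigma_1')$ and $U(\tau)=\tau'$; reading off $U(\Gamma)=\Gamma'$ and $U(\sigma_1)=\sigma_1'$ gives $U(\sigma_1\multimap\tau)=\sigma_1'\multimap\tau'$. For $(abs_\emptyset)$ the freshly introduced $\alpha,\alpha'$ occur nowhere in the premises, so the injective $U$ from the inductive hypothesis extends by $\alpha\mapsto\alpha'$ while staying injective. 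The only delicate case is $(app)$: the inductive hypothesis supplies injective renamings $V_1,V_2$ relating the two operator- and the two operand-subderivations, and the side conditions $(\mathit{TVar}(\Gamma)\cap\mathit{TVar}(\Delta))=\emptyset$ and $(\mathit{TVar}(\sigma)\cap\mathit{TVar}(\tau))=\emptyset$ force $V_1,V_2$ to have disjoint domains and images, so they combine into one injective $V$, which we further extend on the fresh $\alpha,\beta$ by $\alpha\mapsto\alpha'$, $\beta\mapsto\beta'$. It then remains to check that $V$ transports the two unification steps, i.e.\ that $MGU(V\sigma,V(\alpha\multimap\beta))$ and $MGU(V(U'(\alpha)),V\tau)$ agree, up to renaming, with the conjugates $V\circ U'\circ V^{-1}$ and $V\circ U\circ V^{-1}$. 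This is exactly the assertion that the most general unifier commutes with injective renaming up to renaming, which follows from uniqueness-up-to-renaming of m.g.u.'s as recalled in Proposition~\ref{martelli}. I expect the bookkeeping of the two m.g.u.'s renaming ambiguities to be the routine-but-fiddly heart of part (i).

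For part (ii) I induct again on the derivation. The cases $(var)$ (the single variable occurs exactly twice, once in the context and once in the type), $(abs)$ (the multiset of occurrences is unchanged, $\sigma_1$ merely migrating from context to type), and $(abs_\emptyset)$ (one occurrence of the fresh $\alpha$ is added) visibly preserve binariness. The crux of the whole proposition is the $(app)$ case, where one must show that $U\circ U'$ cannot push any variable above two occurrences. The plan is to treat the two unifications in turn. First, because $\alpha,\beta$ are fresh, $U'=MGU(\sigma,\alpha\multimap\beta)$ is harmless: either $\sigma=\sigma_1\multimap\sigma_2$ and $U'=\{\alpha\mapsto\sigma_1,\beta\mapsto\sigma_2\}$ leaves $\sigma$'s occurrences untouched, or $\sigma=\gamma$ and $U'=\{\gamma\mapsto\alpha\multimap\beta\}$ merely relabels $\gamma$'s (at most two) occurrences; in both cases the operator judgment stays binary and $U'(\sigma)=U'(\alpha)\multimap U'(\beta)$. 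Second, one analyses $U=MGU(U'(\alpha),\tau)$, noting that $\mathit{TVar}(U'(\alpha))$ and $\mathit{TVar}(\tau)$ are disjoint (they lie, respectively, in $\mathit{TVar}(\sigma)\cup\{\alpha,\beta\}$ and in $\mathit{TVar}(\tau)$, kept apart by the side conditions). Unifying two binary types with disjoint variables is a \emph{conservative matching}: each matched variable is sent to the opposite subtree with no further internal substitution, so $U$ is idempotent on those subtrees and duplicates nothing. Since the conclusion discards both the operator's domain $U'(\alpha)$ and the operand's type $\tau$ (which $U$ identifies) and retains only $U\circ U'(\Gamma,\Delta)$ and the codomain $U\circ U'(\beta)$, every doubly-occurring variable has one occurrence consumed inside the discarded interface and the other merely transferred by $U$, so its multiplicity in the conclusion remains $\le 2$.

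I expect making this ``one occurrence consumed, one transferred'' accounting fully rigorous to be the single genuinely non-routine part of the argument, precisely because a variable occurring twice could, in principle, have both occurrences inside the unified interface and force further unification of operand subtrees. The cleanest way to discharge it is to read the whole step through the occurrence/partial-involution correspondence of Proposition~\ref{occurence-vs-types}: a binary judgement is nothing but a partial involution on occurrences, the disjoint operator and operand sides are two such involutions, and the unification along the interface is their relational composition, which manifestly yields again a partial involution, i.e.\ keeps each variable in at most one pair. Phrasing the $(app)$ case this way turns the delicate occurrence count into the structural fact that involutions are closed under this composition, and thereby completes the induction.
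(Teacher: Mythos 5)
Part (i) of your proposal is correct and is essentially the paper's own (one-line) argument: a straightforward induction on derivations, the only point of substance being that m.g.u.'s commute with injective renamings up to renaming; spelling out the bookkeeping is fine.

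Part (ii) is where the content lies, and there your argument has a genuine gap. The claim that unifying two binary types with disjoint variables is a ``conservative matching'' in which ``each matched variable is sent to the opposite subtree with no further internal substitution'' is false. Take $\sigma_1\equiv\alpha\multimap\alpha$ and $\tau\equiv(\gamma\multimap\gamma)\multimap\delta$: the m.g.u.\ sends $\delta$ to $\gamma\multimap\gamma$, a subtree of the operand itself, reached by bouncing through the repeated variable $\alpha$ of the operator; the unified interface $(\gamma\multimap\gamma)\multimap(\gamma\multimap\gamma)$ is not even binary, and whether the surviving material $U\circ U'(\Gamma,\Delta)$ and $U\circ U'(\beta)$ stays binary is exactly the global occurrence count you still owe. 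You do flag this, but your proposed repair --- read the step through Proposition~\ref{occurence-vs-types} as a composition of partial involutions ``which manifestly yields again a partial involution'' --- does not work at this point of the development. That closure property is not manifest: it is Theorem~\ref{clos}, whose proof (Lemma~\ref{coherence}, Proposition~\ref{make-type}) needs a delicate trajectory analysis, in particular to rule out that two distinct trajectories yield unifiable output occurrences; moreover GoI application is not plain relational composition but involves the feedback iteration $(f_{ll}\,\hat{;}\,g)^*$, and the identification of that operation with the single MGU step of rule (app) is itself the paper's main result (Proposition~\ref{cuore}, Lemma~\ref{la}), established downstream of the present proposition. So the fallback is either circular or defers to something substantially harder than what is being proved. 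The paper's actual argument is elementary and local: it establishes, by induction on the steps of a successful run of the Martelli--Montanari procedure of Proposition~\ref{martelli} started on $\langle\sigma_1,\tau\rangle$, the invariant that every variable occurs in at most two pairs of the current equation set $E$, and if in two then exactly once in each; from this the resulting substitution is seen not to push any multiplicity above two in the parts of the judgement that survive. To complete your proof you must replace the ``conservative matching''/``manifest involution'' step by such an invariant on the unification algorithm, or by an equivalent direct occurrence count over the whole judgement $\Gamma,\Delta,\sigma_2$.
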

\begin{proof} \hfill
\\ (i) Straightforward, by induction on type derivations.
\\(ii) We proceed by induction on the structure of the derivation of the type judgement. The only critical case is application. So assume that $U'(\alpha\multimap \beta)=\sigma_1\multimap \sigma_2$, where $U'= MGU(\sigma, \alpha \multimap \beta)$. If $\sigma_1\multimap \sigma_2$ and $\tau$ are binary, then we show that ${U}(\sigma_2)$ is binary, where $U=MGU(\sigma_1, \tau)$.
To this end, we prove simultaneously, by induction on the number of steps in a successful unification procedure in Definition \ref{martelli}, that a variable occurs at most in two different pairs in $E$ and, if this occurs, then it occurs uniquely in each one of them. The base case derives from the assumption that both $\sigma_1$ and $\tau$ are binary and disjoint. The induction step is pleasingly straightforward.
If the procedure terminates, the resulting substitution applied to $\sigma_2$ clearly leaves it binary, since the only variables which can be affected by the substitution must occur only once in $\sigma_2$,  being the type binary.
\end{proof}

In order to prove Property~\ref{tre} above, we introduce a simple type assignment system:

\begin{definition}[Simple Type Assignment System]  Let $\vdash_A$ be the following type assignment system\label{ptsystem}:\medskip \\
\reguno{}{x:\sigma \vdash_A x:\sigma} \ \ (var)\ \ \ \
\reguno{\Gamma, x:\sigma\vdash_A M:\tau}{\Gamma\vdash_A \lambda x. M : \sigma\multimap \tau} \ \ (abs)\ \ \ \ \reguno{\Gamma \vdash_A M:\sigma \ \ \  x \mbox{ fresh }}{\Gamma\vdash_A \lambda x. M : \tau \multimap \sigma} \ \ (abs$_\emptyset$)
\bigskip

\noindent \reguno{\Gamma\vdash_A M: \sigma \multimap \tau  \ \ \ \Delta\vdash_A N:\sigma \ \ \ (\mathit{dom}(\Gamma) \cap \mathit{dom}(\Delta))= \emptyset}{\Gamma, \Delta \vdash_A MN:\tau }\ \  (app)
\end{definition}

Proposition~\ref{relat} below  clarifies the relationships between the two type systems, and it justifies calling the types assigned in the system $\Vdash$ as principal. In order to prove Proposition~\ref{relat}, we need the following lemma, which can be easily shown by induction on derivations:

\begin{lemma}\label{lco}
If $\Gamma \vdash_A M:\sigma$, then, for all substitutions $U$,  $U(\Gamma)\vdash_A M:U(\sigma)$.
\end{lemma}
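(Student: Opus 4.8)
The plan is to argue by induction on the structure of the derivation of $\Gamma \vdash_A M:\sigma$, with one case per rule of Definition~\ref{ptsystem}. The two elementary facts that drive every case are that an arbitrary substitution $U$ commutes with the type constructor, $U(\rho \multimap \rho') = U(\rho) \multimap U(\rho')$, and with context formation, so that $U(\Gamma, x:\rho) = U(\Gamma), x:U(\rho)$ and $U(\Gamma, \Delta) = U(\Gamma), U(\Delta)$. Crucially, $U$ acts only on type variables, hence it leaves the term-variable domain of every context unchanged.

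In the base case (var), the derivation is $x:\rho \vdash_A x:\rho$, and applying $U$ yields $x:U(\rho) \vdash_A x:U(\rho)$, which is again an instance of (var). For (abs), with $M \equiv \lambda x.M'$, $\sigma \equiv \rho \multimap \tau$, and premise $\Gamma, x:\rho \vdash_A M':\tau$, the induction hypothesis gives $U(\Gamma), x:U(\rho) \vdash_A M':U(\tau)$, and one application of (abs) produces $U(\Gamma) \vdash_A \lambda x.M' : U(\rho) \multimap U(\tau) = U(\sigma)$. The case (abs$_\emptyset$) is handled identically, the only difference being that the domain of the resulting arrow is arbitrary: from the induction hypothesis on the single premise one applies (abs$_\emptyset$) taking $U$ of the original domain type as the new fresh domain, and the freshness side condition on the bound term variable survives because $U$ does not touch term variables.

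For the application rule (app), with $M \equiv M_1 M_2$ and premises $\Gamma \vdash_A M_1 : \rho \multimap \sigma$ and $\Delta \vdash_A M_2 : \rho$, the induction hypothesis yields $U(\Gamma) \vdash_A M_1 : U(\rho) \multimap U(\sigma)$ and $U(\Delta) \vdash_A M_2 : U(\rho)$, still matched on the common type $U(\rho)$. Since $\mathit{dom}(U(\Gamma)) = \mathit{dom}(\Gamma)$ and $\mathit{dom}(U(\Delta)) = \mathit{dom}(\Delta)$, the disjointness side condition is preserved, and (app) gives $U(\Gamma), U(\Delta) \vdash_A M_1 M_2 : U(\sigma)$, which is $U(\Gamma, \Delta) \vdash_A M_1 M_2 : U(\sigma)$ as required. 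No step presents any genuine obstacle: the whole argument rests on the single observation that $U$ operates on type variables alone, so that all side conditions of $\vdash_A$ — which constrain only term-variable domains and term-level freshness — are automatically stable under $U$. This is exactly why the result is \emph{easily shown by induction on derivations}. (Note that, unlike $\Vdash_A$, the system $\vdash_A$ imposes no disjointness condition on \emph{type} variables, so $U$ may be applied globally and uniformly without further bookkeeping.)
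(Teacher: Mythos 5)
Your proposal is correct and follows exactly the route the paper indicates: the paper dismisses Lemma~\ref{lco} as ``easily shown by induction on derivations,'' and your case analysis is precisely that induction spelled out, with the key observation (that $U$ acts only on type variables, so all side conditions of $\vdash_A$ are stable) correctly identified. Nothing to add.
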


\begin{proposition}\label{relat}
Let $M \in {\bf \Lambda}_A$.
\\ (i) If $\Gamma \Vdash_A M:\sigma$, then, for all substitutions $U$, $U(\Gamma) \vdash_A M:U(\sigma)$.
\\ (ii) If $\Gamma \vdash_A M:\sigma$, then there exists a derivation $\Gamma' \Vdash_A M: \sigma'$ and a  substitution $U$ such that $U(\Gamma')= \Gamma$
and $U(\sigma') = \sigma$.
\end{proposition}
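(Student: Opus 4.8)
The plan is to prove each part by induction on the relevant derivation. Part (i) is the routine \emph{soundness} direction, while part (ii), the \emph{principality} direction, carries the real weight in its application case. Throughout the application case I will write $U''$ for the second m.g.u.\ of the $(app)$ rule of $\Vdash_A$ (renaming the rule's ``$U$'' so as not to clash with the substitution named $U$ in the statement), so that the $(app)$ rule reads $U'=MGU(\sigma,\alpha\multimap\beta)$, $U''=MGU(U'(\alpha),\tau)$, with conclusion $U''\circ U'(\Gamma,\Delta)\Vdash_A MN:U''\circ U'(\beta)$.

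For part (i) I would first prove the special case $\Gamma\Vdash_A M:\sigma\Rightarrow\Gamma\vdash_A M:\sigma$ (the identity substitution) by induction on the $\Vdash_A$-derivation; the full statement then follows by one application of Lemma~\ref{lco}. The $(var)$ case is immediate since the simple $(var)$ rule accepts any type, and the two abstraction cases follow directly from the matching rules of $\vdash_A$, reading the fresh left-hand variable of $(abs_\emptyset)$ as the arbitrary type allowed there. The only delicate case is $(app)$. Applying Lemma~\ref{lco} with $U''\circ U'$ to the inductive conclusions $\Gamma\vdash_A M:\sigma$ and $\Delta\vdash_A N:\tau$, and using that $U'$ unifies $\sigma$ with $\alpha\multimap\beta$ (so $U'(\sigma)=U'(\alpha)\multimap U'(\beta)$) and that $U''$ unifies $U'(\alpha)$ with $\tau$ (so $U''(U'(\alpha))=U''(\tau)$), one sees that $M$ receives the arrow type $U''(\tau)\multimap U''(U'(\beta))$ and $N$ receives $U''(\tau)$ — here using that $U'$ is the identity on $\tau$ and $\Delta$, whose variables are disjoint from those of $\sigma$ and from $\alpha,\beta$. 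The simple $(app)$ rule then yields precisely $U''\circ U'(\Gamma,\Delta)\vdash_A MN:U''\circ U'(\beta)$; disjointness of the term-variable domains is inherited from the side condition and preserved by substitution.

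For part (ii) I would induct on the derivation of $\Gamma\vdash_A M:\sigma$, using the invariant that in both systems the domain of the context equals $FV(M)$, and freely renaming the principal derivations supplied by the induction hypothesis, via Proposition~\ref{unique}(i), so that their type variables are mutually disjoint and disjoint from the fresh $\alpha,\beta$. The $(var)$ case is witnessed by $x:\alpha\Vdash_A x:\alpha$ with $U=\{\alpha\mapsto\sigma\}$, and the abstraction cases are handled by applying the principal $(abs)$/$(abs_\emptyset)$ rule to the principal derivation given by the hypothesis and reusing the same witnessing substitution (extended, in the $(abs_\emptyset)$ case, on the fresh variable by mapping it to the arbitrary left-hand type). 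The crux is the $(app)$ case: from $\Gamma\vdash_A M:\rho\multimap\tau$ and $\Delta\vdash_A N:\rho$ the hypothesis gives principal derivations $\Gamma_0\Vdash_A M:\mu$ and $\Delta_0\Vdash_A N:\nu$ together with a single combined substitution $W$ (defined on disjoint domains) satisfying $W(\Gamma_0)=\Gamma$, $W(\mu)=\rho\multimap\tau$, $W(\Delta_0)=\Delta$, $W(\nu)=\rho$.

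I would then form $V=W\cup\{\alpha\mapsto\rho,\beta\mapsto\tau\}$; since $\alpha,\beta$ are fresh this is well defined and $V(\mu)=\rho\multimap\tau=V(\alpha\multimap\beta)$, so $V$ unifies $\mu$ and $\alpha\multimap\beta$. By the universal property of the m.g.u.\ (Proposition~\ref{martelli} and the definition of $\leq$), $U'=MGU(\mu,\alpha\multimap\beta)$ exists and $V=V'\circ U'$ for some $V'$; as $\nu$ shares no variables with $\mu,\alpha,\beta$, we get $V'(U'(\alpha))=\rho=V'(\nu)$, so $V'$ unifies $U'(\alpha)$ and $\nu$, whence $U''=MGU(U'(\alpha),\nu)$ exists and $V'=V''\circ U''$ for some $V''$. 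This simultaneously discharges the side conditions of the principal $(app)$ rule (the required m.g.u.'s exist) and, taking the witnessing substitution to be $V''$, gives $V''(U''\circ U'(\beta))=V(\beta)=\tau$ and $V''(U''\circ U'(\Gamma_0,\Delta_0))=V(\Gamma_0,\Delta_0)=\Gamma,\Delta$, since $U'$ and $U''$ fix everything in $\Gamma_0,\Delta_0$ lying outside their domains and $V$ agrees with $W$ off $\{\alpha,\beta\}$. I expect the main obstacle to be exactly this bookkeeping: arranging the cross-derivation disjointness of variables and tracking, term by term, which of $U',U'',V',V''$ acts trivially, so that the factorisation of the unifier $V$ through the two most general unifiers delivers the witnessing substitution.
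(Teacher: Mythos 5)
Your proposal is correct and follows essentially the same route as the paper, which simply states that both items are proved by induction on derivations, with Lemma~\ref{lco} invoked in the $(app)$ case of item (i). Your write-up merely fills in the details (the identity-substitution reduction, the factorisation of the unifier through the two m.g.u.'s, and the variable-disjointness bookkeeping) that the paper leaves implicit.
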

\begin{proof}
Both items can be proved by induction on derivations. Lemma~\ref{lco} above is used to prove item (i) in the case of (app)-rule.
\end{proof}

In order to prove Property~\ref{quattro}, \ie\ that principal types are preserved by $\beta$-conversion not within $\lambda$'s, we first prove that this property holds for the simple type assignment system. Then, using, Proposition~\ref{relat}, we derive the property for principal types.

\begin{definition}[Top-level $\beta$-reduction and Conversion]
Let $\rightarrow_A^T$ be the reduction relation defined by  the reduction rules $\beta$, app$_L$, app$_R$ in Definition~\ref{lambda}(i), and  omitting rule $\xi$, and let
$=_A^T$ be the corresponding conversion relation.
\end{definition}

\begin{thermo}[Top-level Subject Conversion of $\vdash_A$] \label{top-lev}
Let $M,M'\in {\mathbf \Lambda}_A$ be such that $M=_A^T M'$. Then
\[ \Gamma \vdash_A M:\tau  \ \Longrightarrow\ \exists \Gamma'.\ (\Gamma' \vdash_A M':\tau \ \wedge \    (\Gamma')_{|  FV(M)\cap FV(M')} = (\Gamma)_{|  FV(M)\cap FV(M')} ) \ . \]
\end{thermo}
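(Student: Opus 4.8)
The plan is to reduce the statement to single-step properties of top-level reduction. Since $=_A^T$ is the equivalence closure of $\rightarrow_A^T$ generated by the rules $\beta$, $\mathit{app_L}$, $\mathit{app_R}$ of Definition~\ref{lambda}(i) (without $\xi$), it suffices to establish both \emph{subject reduction} and \emph{subject expansion} for a single $\rightarrow_A^T$-step, each preserving the type $\tau$ and preserving the typing context on the free variables common to redex and contractum, and then to close these properties under reflexivity, symmetry, transitivity and the two application congruences. First I would record two auxiliary facts about $\vdash_A$: a \emph{cut} (term-substitution) lemma stating that $\Gamma, x{:}\sigma\vdash_A M{:}\tau$ together with $\Delta\vdash_A N{:}\sigma$ on disjoint domains yields $\Gamma,\Delta\vdash_A M[N/x]{:}\tau$, proved by a routine induction (trivial in the affine case, where $x$ occurs at most once), and the already available type-substitution Lemma~\ref{lco}, which lets me instantiate fresh type variables at will. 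I would also use that every affine $\lambda$-term is typable in $\vdash_A$, which follows from the typability result completing Property~\ref{unop} (or by a direct induction).

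For a single step I would argue by induction on the one-hole context in which the redex sits; the congruence cases $\mathit{app_L},\mathit{app_R}$ follow immediately from the induction hypothesis after inverting the $(app)$ rule, so the heart is the root $\beta$-step $(\lambda x.P)Q\rightarrow_A P[Q/x]$. Inverting $(app)$ and then $(abs)$/(abs$_\emptyset$) splits into two cases. When $x\in FV(P)$ the abstraction was typed by $(abs)$, and the cut lemma gives $\Gamma,\Delta\vdash_A P[Q/x]{:}\tau$ with the \emph{same} context: here $FV$ is preserved, so there is nothing to restrict. When $x\notin FV(P)$ the abstraction was typed by (abs$_\emptyset$), the argument $Q$ is discarded, $P[Q/x]\equiv P$, and we keep only the part of the context typing $P$, dropping the part typing $Q$; this is exactly why the conclusion can only guarantee agreement on $FV(M)\cap FV(M')$. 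Subject expansion is the converse: in the non-discarding case I would use an \emph{anti-substitution} lemma (again easy because $x$ occurs at most once, so the unique occurrence of $Q$ in $P[Q/x]$ corresponds to a single subderivation that can be abstracted over a fresh hypothesis $x{:}\sigma$); in the discarding case I would reintroduce a typing of $Q$ on fresh variables and combine with (abs$_\emptyset$) and $(app)$, the freedom in the choice of these fresh types being guaranteed by Lemma~\ref{lco}.

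The main obstacle is the passage from single steps to the full conversion, specifically the transitivity step, and it is genuinely delicate in the affine setting. Because a discarding $\beta$-reduction \emph{erases} free variables, a variable $z\in FV(M)\cap FV(M')$ may disappear in an intermediate term of the conversion chain and reappear later; the two relevant instances of subject expansion/reduction then do not a priori assign it the same type, and --- unlike for ordinary $\beta$-reduction --- one cannot repair this by passing to a common reduct, since top-level reduction is \emph{not} confluent (for instance $(\lambda x.\lambda y.x)((\lambda z.z)w)$ reduces by $\rightarrow_A^T$ to the two distinct normal forms $\lambda y.((\lambda z.z)w)$ and $\lambda y.w$). My plan is to carry along the conversion a single global assignment of types to all free variables occurring anywhere in the chain, forcing $z$ to keep one type throughout, and to realize each reappearance of $z$ by retyping the affine subterm that reintroduces it so that $z$ receives precisely that type: affinity guarantees that $z$ is reintroduced through exactly one subterm, whose principal typing can be instantiated by Lemma~\ref{lco} to the required type, this being compatible because the applicative role of a free variable is preserved along $\rightarrow_A^T$. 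Restricting this global assignment to $FV(M)$ and to $FV(M')$ then yields contexts $\Gamma$ and $\Gamma'$ agreeing on $FV(M)\cap FV(M')$, as required.
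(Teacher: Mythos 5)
Your single-step analysis coincides with the paper's proof: the paper establishes precisely your two one-step facts --- subject reduction for arbitrary contexts and subject expansion for top-level contexts --- by induction on contexts, with a substitution lemma (your ``cut'' lemma, stated there as a bi-implication so that it also serves the expansion direction) handling the root $\beta$-case; the discarding/non-discarding case split is exactly as you describe. You are also right to single out the passage from single steps to the full conversion $=_A^T$ as the delicate point, and right that top-level reduction is not confluent, so the usual common-reduct argument is unavailable; the paper's proof simply asserts that ``the thesis follows from'' the two facts and does not address this.

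However, the repair you propose in the last paragraph --- carrying one global type assignment along the chain and retyping each erased-and-reintroduced subterm accordingly --- cannot be carried out, because the statement in this generality fails for open terms, so no consistent global assignment need exist. Take $M \equiv (\lambda x.\, y)(z_1 z_2)$ and $M' \equiv (\lambda x.\, y)(z_2 z_1)$: both top-level reduce to $y$, so $M =_A^T M'$, and $FV(M)=FV(M')=\{y,z_1,z_2\}$. Any derivation $\Gamma\vdash_A M:\tau$ forces $z_1:\rho\multimap\rho'$ and $z_2:\rho$ for some $\rho,\rho'$, while any $\Gamma'\vdash_A M':\tau$ forces $z_2:\mu\multimap\mu'$ and $z_1:\mu$; agreement on $z_1,z_2$ would give $\rho=(\rho\multimap\rho')\multimap\mu'$, which is impossible. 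This also refutes the claim your plan rests on, namely that ``the applicative role of a free variable is preserved along $\rightarrow_A^T$'': an erased argument can be exchanged, under conversion, for a different affine term using the same free variables in incompatible applicative positions. What does survive --- and is all that is used downstream --- is the single-step statement (hence any chain along which no variable of $FV(M)\cap FV(M')$ is erased at an intermediate stage) and the closed-term instance, Corollary~\ref{convp}; a correct general formulation must either restrict to such chains or weaken the context-agreement clause.
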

\begin{proof}
Let the set of all contexts $C[\ ]$ be defined by: $C[\ ] \ :: = \ [\ ] \ | \ C[\ ] P \ | \ PC[\ ] \ | \ \lambda x. C[\ ] $, and let the set of top-level contexts  $C^T[\ ]$ be defined by omitting $\lambda$-contexts. Let $M\rightarrow_A M'$.
\\ The thesis follows from the following  facts:
\\ (i) $\Gamma \vdash_A C[M]: \tau \ \Longrightarrow \ \Gamma' \vdash_A C[M'] : \tau$, where $\Gamma' = (\Gamma)_{|FV(M')}$;
\\ (ii) $ \Gamma' \vdash_A C^T[M'] : \tau \Longrightarrow \ \exists \Gamma\supseteq \Gamma'. \ (\Gamma \vdash_A C^T[M]:\tau )$.
\\ The proofs of the two facts above proceed by a straightforward induction on contexts; in order to deal  with the base cases, we first need to prove the following result (by induction on derivations):
\\  $\Gamma_1 \vdash_A M_1: \tau \ \wedge \ \Gamma_2  \vdash_A M_2: \sigma \ \Longleftrightarrow\ (\Gamma_1, \Gamma_2)_{| FV(M_1[M_2/x])} \vdash_A M_1[M_2/x]:\tau  \ \wedge \ \Gamma_2  \vdash_A M_2: \sigma$.
\end{proof}

Then we have:

\begin{thermo}[Top-level Subject Conversion of $\Vdash_A$] \label{Top-lev}
Let $M,M'\in {\mathbf \Lambda}_A$ be such that $M=_A^T M'$. Then
\[ \Gamma \Vdash_A M:\tau  \ \Longrightarrow\ \exists \Gamma'.\ (\Gamma' \Vdash_A M':\tau \ \wedge \    (\Gamma')_{|  FV(M)\cap FV(M')} = (\Gamma)_{|  FV(M)\cap FV(M')} ) \ . \]
\end{thermo}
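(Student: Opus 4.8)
The plan is to derive this principal-type statement from its simple-type counterpart, Theorem~\ref{top-lev}, using Proposition~\ref{relat} as a two-way bridge between $\vdash_A$ and $\Vdash_A$, and then to recover the \emph{principality} of the transported type from the symmetry of $=_A^T$.

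First I would run the reduction in one direction. Assume $\Gamma\Vdash_A M:\tau$. Proposition~\ref{relat}(i), applied with the identity substitution, gives $\Gamma\vdash_A M:\tau$. Since $M=_A^T M'$, Theorem~\ref{top-lev} furnishes a context $\Delta$ with $\Delta\vdash_A M':\tau$ and $(\Delta)_{|FV(M)\cap FV(M')}=(\Gamma)_{|FV(M)\cap FV(M')}$. As $M'$ is now $\vdash_A$-typable, Proposition~\ref{relat}(ii) yields a principal derivation $\Gamma_0\Vdash_A M':\sigma_0$ together with a substitution $U$ such that $U(\Gamma_0)=\Delta$ and $U(\sigma_0)=\tau$. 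Thus $\tau$ is only an \emph{instance} of the principal type $\sigma_0$ of $M'$; the content of the theorem is that in fact $\tau$ is already that principal type, i.e.\ $\sigma_0\cong\tau$ up to injective renaming.

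To upgrade ``instance'' to ``isomorphism'' I would exploit that $=_A^T$ is symmetric. Repeating the three steps above on $M'=_A^T M$, starting from the principal judgement $\Gamma_0\Vdash_A M':\sigma_0$, produces a $\vdash_A$-typing of $M$ with type $\sigma_0$, which by Proposition~\ref{relat}(ii) and the uniqueness of principal types (Proposition~\ref{unique}(i)) exhibits $\sigma_0$ as an instance of $\tau$. Hence $\sigma_0$ and $\tau$ are mutual instances, and any two mutually-instantiable types are equal up to injective renaming: a substitution never decreases the number of tree nodes, so $U(\sigma_0)=\tau$ and $\sigma_0=V(\tau)$ with $|\sigma_0|=|\tau|$ force both $U$ and $V$ to send variables to variables and, since $V\circ U=\mathrm{id}$ on $\mathit{var}(\sigma_0)$, to do so injectively. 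Fixing an injective renaming $W$ with $W(\sigma_0)=\tau$ and using that $\Vdash_A$ is closed under injective renaming of type variables (immediate from the shape of the rules), we obtain $W(\Gamma_0)\Vdash_A M':\tau$, and we set $\Gamma':=W(\Gamma_0)$.

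The main obstacle is the remaining bookkeeping for the context restriction $(\Gamma')_{|I}=(\Gamma)_{|I}$, $I=FV(M)\cap FV(M')$. I would localise the mutual-instance analysis to $I$: the forward pass gives $U\big((\Gamma_0)_{|I}\big)=(\Gamma)_{|I}$ (substitution commutes with restriction, and $(\Delta)_{|I}=(\Gamma)_{|I}$), and the reverse pass, combined with the context agreement produced by the reverse use of Theorem~\ref{top-lev}, yields a $V$ with $V\big((\Gamma)_{|I}\big)=(\Gamma_0)_{|I}$. The same size argument then forces $U$ to restrict to an injective renaming on $\mathit{var}\big((\Gamma_0)_{|I}\big)$, so it suffices to choose the renaming $W$ of the previous paragraph to agree with $U$ simultaneously on $\mathit{var}(\sigma_0)$ and on $\mathit{var}\big((\Gamma_0)_{|I}\big)$. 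The crux -- the step I expect to be most delicate -- is checking that $U$ is injective on the \emph{union} of these two variable sets, i.e.\ that no type-variable of $\sigma_0$ is merged by $U$ with a context-variable of $(\Gamma_0)_{|I}$; this is exactly where the binarity of the principal judgement $(\Gamma_0,M',\sigma_0)$ (Proposition~\ref{unique}(ii)) together with uniqueness of principal types (Proposition~\ref{unique}(i)) must be invoked to exclude such spurious merges. Once injectivity on the union is secured, $W$ extends to a global injective renaming with $W(\sigma_0)=\tau$ and $W\big((\Gamma_0)_{|I}\big)=(\Gamma)_{|I}$, delivering at once the typing $\Gamma'\Vdash_A M':\tau$ and the required agreement on the common free variables.
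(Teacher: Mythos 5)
Your proposal follows essentially the same route as the paper's proof: a round trip through Theorem~\ref{top-lev} via Proposition~\ref{relat} in both directions, concluding that the principal judgement of $M'$ and the pair $\bigl((\Gamma)_{|FV(M)\cap FV(M')},\tau\bigr)$ are mutual instances and hence coincide up to injective renaming. You merely make explicit the final renaming and context bookkeeping that the paper leaves implicit; note only that the injectivity of $U$ on the union of the type- and context-variables already follows from the round-trip composite $V\circ U$ fixing the combined judgement, so binarity need not be invoked at that step.
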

\begin{proof}
Let $M,M'\in {\mathbf \Lambda}_A$ be such that $M=_A^T M'$ and $\Gamma \Vdash_A M:\tau $.  By Proposition~\ref{relat}(i), $\Gamma \vdash_A M:\tau $, then  by
Theorem~\ref{top-lev} there exists $\Gamma'$ such that  $\Gamma' \vdash_A M':\tau $ and $(\Gamma)_{| FV(M)\cap FV(M')}= (\Gamma')_{| FV(M)\cap FV(M')} $. By Proposition~\ref{relat}(ii), there exist $\overline{\Gamma}, \overline{\tau}$ such that $\overline{\Gamma}\Vdash_A M':\overline{\tau} $, and  $\Gamma', \tau$  are  instances of $\overline{\Gamma},\overline{\tau}$. To obtain the thesis, we are left to show that also $(\overline{\Gamma})_{| FV(M)\cap FV(M')}$, $\overline{\tau}$ are instances of $(\Gamma')_{| FV(M)\cap FV(M')}$, $\tau$. Namely, by Proposition~\ref{relat}(i), $\overline{\Gamma} \vdash_A M':\overline{\tau}$, hence by Theorem~\ref{top-lev} there exists  $\overline{\Gamma}'$ such that  $\overline{\Gamma}' \vdash_A M:\overline{\tau}$ and $(\overline{\Gamma}' )_{| FV(M)\cap FV(M')}= (\overline{\Gamma})_{| FV(M)\cap FV(M')}$, therefore, from $\Gamma \Vdash_A M:\tau $, by Propositions~\ref{relat}(ii),  it follows that $\overline{\Gamma}'$, $\overline{\tau}$ are instances of $\Gamma$, $\tau$. But then, since $(\overline{\Gamma}' )_{| FV(M)\cap FV(M')}= (\overline{\Gamma})_{| FV(M)\cap FV(M')}$, we also have that $(\overline{\Gamma}')_{| FV(M)\cap FV(M')} $, $\overline{\tau}$ are instances of $(\Gamma)_{| FV(M)\cap FV(M')}$, $\tau$. Finally, since $({\Gamma} )_{| FV(M)\cap FV(M')}= ({\Gamma'})_{| FV(M)\cap FV(M')}$, we have that  $(\overline{\Gamma})_{| FV(M)\cap FV(M')}$, $\overline{\tau}$ are instances of $(\Gamma')_{| FV(M)\cap FV(M')}$, $\tau$.
\end{proof}

\begin{corollary}\label{convp}
Let $M,M'\in {\mathbf \Lambda}_A^0$ be such that $M=_A^T M'$. Then
\[   \Vdash_A M:\tau  \ \Longleftrightarrow\  \Vdash_A M':\tau   \ .\]
\end{corollary}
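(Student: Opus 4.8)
The plan is to read off this corollary almost immediately from Theorem~\ref{Top-lev}, once we record that closed terms are forced to carry the empty context. First I would note the standard invariant of the system $\Vdash_A$: in any derivable judgement $\Gamma\Vdash_A P:\sigma$ the domain of $\Gamma$ coincides with $FV(P)$. This is proved by an immediate induction on derivations, inspecting the four rules: $(var)$ introduces exactly the variable being typed; $(abs)$ discharges the abstracted variable from both the context and the free-variable set; $(abs_\emptyset)$ covers the case where that variable is absent; and $(app)$ merges two contexts with disjoint domains, matching the union of the free variables of the two subterms. In particular, since $M,M'\in\mathbf{\Lambda}_A^0$ we have $FV(M)=FV(M')=\emptyset$, so the only context admissible in a derivation for either term is the empty one; thus $\Vdash_A M:\tau$ is literally $\emptyset\Vdash_A M:\tau$, and likewise for $M'$.

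For the forward implication, assume $\Vdash_A M:\tau$, i.e.\ $\emptyset\Vdash_A M:\tau$. Since $M=_A^T M'$, Theorem~\ref{Top-lev} supplies a context $\Gamma'$ with $\Gamma'\Vdash_A M':\tau$, assigning the very same type $\tau$. Here the side condition relating $(\Gamma')_{|FV(M)\cap FV(M')}$ and $(\Gamma)_{|FV(M)\cap FV(M')}$ is vacuous, as $FV(M)\cap FV(M')=\emptyset$. By the invariant above, $\mathit{dom}(\Gamma')\subseteq FV(M')=\emptyset$, so $\Gamma'$ is empty and we conclude $\Vdash_A M':\tau$.

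For the converse, I would simply invoke the symmetry of the conversion relation $=_A^T$: from $M=_A^T M'$ we also have $M'=_A^T M$, so the identical argument with the roles of $M$ and $M'$ exchanged yields $\Vdash_A M':\tau\Rightarrow\Vdash_A M:\tau$. This gives the desired biconditional. There is essentially no obstacle to overcome, since the entire computational content is already discharged in Theorem~\ref{Top-lev}; the corollary is merely its specialisation to closed terms, where the bookkeeping of contexts collapses. The one point deserving an explicit line is the emptiness of $\Gamma'$ for the closed term $M'$, which is exactly what the context-equals-free-variables invariant delivers (and which could equivalently be argued from the uniqueness of principal types, Proposition~\ref{unique}(i), since any context for $M'$ must be an injective renaming of a context with empty domain).
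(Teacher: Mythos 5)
Your proposal is correct and follows exactly the route the paper intends: the corollary is stated as an immediate specialisation of Theorem~\ref{Top-lev} to closed terms, where the context bookkeeping collapses because the domain of the context in any derivable $\Vdash_A$-judgement is the set of free variables of the term, and the converse follows from the symmetry of $=_A^T$. Your explicit note that $\Gamma'$ must be empty for the closed term $M'$ is a worthwhile detail that the paper leaves implicit.
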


The following counterexample shows that  subject conversion  fails in general, when also reduction under $\lambda$'s is considered.

\begin{counterexample}
Let us consider $\lambda x y z. (\lambda w. x)(yz)$ and its $\beta$-reduct $ \lambda x y z.x$. We have:
\\ $\Vdash_A \lambda x y z.x: \alpha_1 \rightarrow \alpha_2 \rightarrow \alpha_3 \rightarrow \alpha_1$, but we cannot derive $  \Vdash_A \lambda x y z. (\lambda w. x)(yz)  : \alpha_1 \rightarrow \alpha_2 \rightarrow \alpha_3 \rightarrow \alpha_1$. We can derive only $ \Vdash_A \lambda x y z. (\lambda w. x)(yz): \alpha_1 \rightarrow (\alpha_2 \rightarrow \alpha_3) \rightarrow \alpha_2 \rightarrow \alpha_1$, which is an instance of the former, because the variables which are erased, are erased after having been applied, and the principal type keeps track of this.
\end{counterexample}

This phenomenon arises in the affine case, but in the purely linear case subject reduction can be proved to hold in full form, \cite{CDGHLS19}.
As a consequence, principal types induce an affine combinatory algebra which is a linear $\lambda$-algebra, in the sense of \cite{Bar84}, only on the purely linear fragment.

Namely, denoting by $\Lambda_L$ the set of linear $\lambda$-terms, \ie\ terms where each variable appears exactly once, and by $=_L$, $\vdash_L$, $\Vdash_L$ the corresponding conversion and type systems, we have:

\begin{proposition}[Subject Conversion for Linear $\lambda$-calculus]
Let $M,M'\in \Lambda_L$ be such that $M=_L M'$. Then
\[ \Gamma  \Vdash_L M:\tau  \ \Longleftrightarrow\ \Gamma \Vdash_L M':\tau   \ .\]
\end{proposition}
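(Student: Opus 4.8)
The plan is to follow the same two-layer strategy used for the top-level results (Theorems~\ref{top-lev} and~\ref{Top-lev}): first establish subject reduction, and hence conversion, for the simple system $\vdash_L$, and then transfer it to the principal system $\Vdash_L$ by means of the linear analogue of Proposition~\ref{relat}. The essential new feature, compared with the top-level case, is that the induction must now also be closed under the rule $\xi$, i.e.\ reduction \emph{inside} abstractions. This is precisely the step that fails in the affine setting, as the counterexample preceding the statement shows, so the whole argument hinges on isolating where linearity rescues it.

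First I would prove the linear substitution lemma, the analogue of the biconditional used inside the proof of Theorem~\ref{top-lev}: if $\Gamma_1, x:\sigma \vdash_L M_1:\tau$ and $\Gamma_2 \vdash_L M_2:\sigma$ with disjoint domains, then $\Gamma_1,\Gamma_2 \vdash_L M_1[M_2/x]:\tau$, and conversely. Because in a linear term the variable $x$ occurs \emph{exactly once} in $M_1$, this holds with no free-variable restriction on the union context: nothing is duplicated and nothing is erased. The proof is a routine induction on the derivation of $M_1$. I would then prove single-step subject reduction for $\vdash_L$ in the sharp form $\Gamma \vdash_L M:\tau \Leftrightarrow \Gamma \vdash_L M':\tau$, with the \emph{same} context $\Gamma$ and type $\tau$ on both sides, by induction on the position of the contracted redex, i.e.\ on contexts $C[\ ]$ now including $\lambda$-contexts. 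The base case is a top-level $\beta$-redex, dispatched by the substitution lemma; the cases $C[\ ]P$ and $PC[\ ]$ are as before.

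The new case $C[\ ]=\lambda x.C'[\ ]$ is where linearity is decisive. Since the abstracted variable occurs exactly once, the derivation must end with the rule $(\mathit{abs})$ rather than $(\mathit{abs}_\emptyset)$, so $\tau = \rho \multimap \theta$ with $\Gamma, x:\rho \vdash_L C'[M]:\theta$; the inductive hypothesis, applied in the extended context $\Gamma, x:\rho$, yields $\Gamma, x:\rho \vdash_L C'[M']:\theta$ with the context transported \emph{intact}, and re-abstracting gives the claim. The crucial point throughout is that linear reduction erases no variables, so $FV(M)=FV(M')$ and the context never shrinks: the restriction to $FV(M')$ appearing in the affine statement of Theorem~\ref{top-lev} becomes vacuous here, which is exactly what fails for the affine redex $(\lambda w.x)(yz)$ of the counterexample. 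Taking the reflexive, symmetric and transitive closure then gives subject conversion $\Gamma \vdash_L M:\tau \Leftrightarrow \Gamma \vdash_L M':\tau$ for $M =_L M'$.

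Finally I would lift the result to $\Vdash_L$, reproducing the argument of Theorem~\ref{Top-lev} but without the free-variable bookkeeping. From $\Gamma \Vdash_L M:\tau$, Proposition~\ref{relat}(i) gives $\Gamma \vdash_L M:\tau$, subject conversion gives $\Gamma \vdash_L M':\tau$, and Proposition~\ref{relat}(ii) produces a principal derivation $\overline{\Gamma} \Vdash_L M':\overline{\tau}$ of which $\Gamma,\tau$ is an instance. Running the same chain starting from $\overline{\Gamma} \Vdash_L M':\overline{\tau}$ back along $M' =_L M$ shows that $\overline{\Gamma},\overline{\tau}$ is in turn an instance of $\Gamma,\tau$; since two judgements mutually instances of one another differ only by an injective renaming (uniqueness of principal types, Proposition~\ref{unique}(i)), the instantiating substitution is a renaming, and applying it to the principal derivation yields $\Gamma \Vdash_L M':\tau$, as required; the converse is symmetric. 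I expect the main obstacle to be exactly the $\xi$-case of subject reduction: making precise the argument that linearity forbids the erasure responsible for the affine counterexample, and thereby guarantees that the context (including the type of the bound variable) is carried unchanged through the abstraction.
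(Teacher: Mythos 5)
Your proposal is correct and follows essentially the same two-layer route as the paper: first full subject conversion for $\vdash_L$ by induction on contexts (now including $\lambda$-contexts, which linearity makes safe because nothing is erased and the bound variable forces the $(\mathit{abs})$ rule), then transfer to $\Vdash_L$ via Proposition~\ref{relat} and uniqueness of principal types. Your explicit treatment of the $\xi$-case simply fills in what the paper dispatches with the remark that ``by linearity we can now safely consider the full class of contexts.''
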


 \begin{proof}
First we show that full subject conversion holds for $\vdash_L$.  This follows the same pattern as in the proof of Theorem~\ref{top-lev}, where by linearity we can now safely consider the full class of contexts.

Then, we show subject conversion for $\Vdash_L$. Assume that $M=_LM'$ and $\Gamma \Vdash_L M:\tau$, by Proposition~\ref{relat}(i), which holds also for the purely linear case, $\Gamma\vdash_L M:\tau$, and by subject conversion of $\vdash_L$ we have $\Gamma \vdash_L M':\tau$. By Proposition~\ref{relat}(ii), there exist $\overline{\Gamma}$,
 $\overline{\tau}$ such that $\overline{\Gamma} \Vdash_L M':\overline{\tau}$ and $\Gamma$, $\tau$ are instances of
 $\overline{\Gamma}$, $\overline{\tau}$. Then, by Proposition~\ref{relat}(i),  $\overline{\Gamma}\vdash_L M':\overline{\tau}$, and by subject conversion of $\vdash_L$, we have $\overline{\Gamma} \vdash_L M:\overline{\tau}$. By Proposition~\ref{relat}(ii), there exist $\overline{\Gamma}'$, $\overline{\tau}'$ such that $\overline{\Gamma}' \Vdash_L M:\overline{\tau}'$ and  $\overline{\Gamma}$, $\overline{\tau}$   are instances of $\overline{\Gamma}'$, $\overline{\tau}'$. From $\Gamma\Vdash_L M:\tau$ and  $\overline{\Gamma}'\Vdash_L M:  \overline{\tau}'$, by Proposition~\ref{relat}(i), $\Gamma$, $\tau$ coincide with  $\overline{\Gamma}'$, $\overline{\tau}'$ up-to injective substitution. Hence, since both  $\Gamma$, $\tau$ are instances of $\overline{\Gamma}$, $\overline{\tau}$, and  $\overline{\Gamma}$, $\overline{\tau}$ are instances of $\overline{\Gamma}'$, $\overline{\tau}'$, we have that
 also
 $\Gamma$, $\tau$ and $\overline{\Gamma}$, $\overline{\tau}$ coincide up-to injective substitution. Therefore, from  $\overline{\Gamma} \Vdash_L M':\overline{\tau}$, we finally  have $\Gamma \Vdash_L M': \tau$.
 \end{proof}

We are now in the position of proving that all affine $\lambda$-terms receive a unique type in $\Vdash$ (up-to injective substitution on types). We first prove that  all terms receive a type in the simple type assignment system $\vdash_A$. Then, by Proposition~\ref{relat}(ii), all terms receive a principal type in $\Vdash$, which is unique by Proposition~\ref{unique}(i).
This completes the proof of Property~\ref{unop}. We start by proving the following lemma:

\begin{lemma} \label{nft}
\hfill
\\ (i) All normal forms are typable in $\vdash_A$.
\\ (ii) Let $M,M'\in \Lambda_A$ be such that $M'\rightarrow_A M$. Then
\\ $ \Gamma \vdash_A M:\tau \ \Longrightarrow\ \exists \Gamma',\tau', U.\  (\Gamma' \vdash_A M':\tau' \ \wedge\ U(\Gamma)=  (\Gamma')_{| FV(M)} \ \wedge\ U(\tau)= \tau')$.
\end{lemma}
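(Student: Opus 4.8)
The plan for part (i) is a structural induction following the standard grammar of affine normal forms, distinguishing neutral terms $n ::= x \mid n\,v$ from normal forms $v ::= n \mid \lambda x. v$. For neutral terms I would prove the stronger statement that, for \emph{every} type $\rho$, a neutral term admits a derivation with result type exactly $\rho$: the base case $n \equiv x$ is $x:\rho \vdash_A x:\rho$ by (var), and for $n \equiv n' N$ I first type the normal form $N$ with some $\sigma$, then invoke the strengthened hypothesis on $n'$ with the target type $\sigma \multimap \rho$ and close with (app). Affineness guarantees that $FV(n')$ and $FV(N)$ are disjoint, so the side condition on domains is met, while the shared $\sigma$ makes the operator and operand types match. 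For normal forms, $v \equiv n$ is covered by the neutral case, and $v \equiv \lambda x. v'$ follows by (abs) when $x \in FV(v')$ and by (abs$_\emptyset$) otherwise.

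For part (ii) I would write the single reduction step as $M' \equiv C[(\lambda x. P)Q]$ and $M \equiv C[P[Q/x]]$ for a one-hole context $C[\ ]$, and argue by induction on $C[\ ]$. In the base case $C[\ ] = [\ ]$ the contraction is at top level, and I would recover the redex typing from the substitution lemma proved inside Theorem~\ref{top-lev}. If $x \in FV(P)$ (exactly once, by affineness) then $Q$ occurs as a subterm of $M$, so its subderivation yields $\Gamma_2 \vdash_A Q : \sigma$; decomposing $\Gamma \vdash_A P[Q/x]:\tau$ via that lemma gives $\Gamma_1 \vdash_A P : \tau$ with $x:\sigma$, and (abs) followed by (app) rebuilds $(\lambda x.P)Q : \tau$, so here $U$ is the identity and $\tau' = \tau$. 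If $x \notin FV(P)$ the reduct is $P$ itself; I would type the erased $Q$ with some $\rho$, use (abs$_\emptyset$) to obtain $\lambda x. P : \rho \multimap \tau$ and (app) to obtain $(\lambda x. P)Q : \tau$, again with $U$ the identity, the extra context hypotheses being exactly those on $FV(Q)$, which are discarded by the restriction to $FV(M) = FV(P)$.

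The inductive cases $C[\ ] = C'[\ ]\,N$, $C[\ ] = N\,C'[\ ]$ and $C[\ ] = \lambda y. C'[\ ]$ proceed by applying the induction hypothesis to the reduction inside $C'[\ ]$ and then re-applying the rule ((app), or (abs)/(abs$_\emptyset$)) that combines the hole with its surroundings, using Lemma~\ref{lco} to instantiate fresh type variables so that the types line up. I expect the genuine difficulty to be concentrated here: the substitution $U$ delivered by the hypothesis must be composed with a further substitution that reconciles the result type of the hole with the argument (in the (app) cases) or with the domain introduced by (abs$_\emptyset$), and one must keep precise track of which variables the inner reduction erases. A variable bound by an outer $\lambda y$ may disappear from $M$ while still occurring in $M'$ — exactly the mechanism, illustrated by the counterexample to full subject reduction, that forces $U$ to be non-trivial and to specialise the reduct's type to the redex's — so correctly propagating the restriction to $FV(M)$ through these erasures is the delicate bookkeeping. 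Finally, the typability of the erased $Q$ invoked in the base case is legitimate because affine reduction strictly decreases term size, whence $Q$ is strongly normalising and its normal form is typable by part (i); transporting this back along the reduction of the strictly smaller term $Q$ via the same lemma closes the argument, and the combination of (i) and (ii) then yields typability of every affine term by induction on size.
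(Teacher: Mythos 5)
Your overall strategy coincides with the paper's: part (i) by structural induction on normal forms (your neutral/normal stratification, with the strengthened ``any result type'' claim for neutral terms, is a clean way to make that induction go through), and part (ii) by induction on the one-hole context enclosing the redex, with the top-level case handled via the substitution lemma underlying Theorem~\ref{top-lev} and the applicative cases closed by Lemma~\ref{lco}. You are in fact more explicit than the paper on one point it leaves implicit, namely the typability of the erased argument $Q$ in the base case, which you correctly recover from an induction on term size (affine reduction is size-decreasing); this matches how Theorem~\ref{typ} is eventually organised.

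The gap sits exactly where you place the ``delicate bookkeeping'', but it is worse than bookkeeping: in the case $C[\ ]=\lambda y.C'[\ ]$ where $y$ occurs in $M'$ but has been erased in $M$, the ``further substitution that reconciles \dots{} the domain introduced by (abs$_\emptyset$)'' need not exist. Since (abs$_\emptyset$) in $\vdash_A$ admits an \emph{arbitrary} domain type, the given derivation of $M$ may assign the vacuous abstraction a domain with no common instance with the type forced on $y$ inside $M'$. Concretely, take $M'\equiv\lambda y.(\lambda w.z)(y(\lambda u.u))$ and $M\equiv\lambda y.z$, so that $M'\rightarrow_A M$. Every derivation for $M'$ gives it a type of the shape $((\sigma\multimap\sigma)\multimap\theta)\multimap\rho$, because $y$ is applied to the identity; but $z:\delta_0\vdash_A M:((\gamma\multimap(\gamma\multimap\delta))\multimap\epsilon)\multimap\delta_0$ is derivable by (abs$_\emptyset$), and no substitution $U$ satisfies $U(\gamma\multimap(\gamma\multimap\delta))=\sigma\multimap\sigma$, since matching forces $\sigma=\sigma\multimap U(\delta)$. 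So for this choice of $\Gamma\vdash_A M:\tau$ there is no $\tau'$ typing $M'$ with $U(\tau)=\tau'$, and the induction step as you describe it cannot be completed. (The paper's own proof dismisses precisely this subcase with ``can be dealt with similarly'', so you have not overlooked an argument the paper actually supplies.) To repair your proof you must either restrict attention to derivations in which every (abs$_\emptyset$)-domain is a fresh type variable occurring nowhere else --- a class closed under the decomposition your induction performs, and sufficient for Theorem~\ref{typ}, where only existence of a typing is needed and Lemma~\ref{lco} can re-instantiate afterwards --- or weaken the conclusion in the erasing-$\lambda$ case to mere typability of $M'$.
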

\begin{proof}
\hfill
\\ (i) Straightforward, by induction on the structure of normal forms.
\\ (ii) Let $C[(\lambda x.M)N] \rightarrow_A C[M[N/x]]$, and $\Gamma \vdash_A C[M[N/x]]:\tau$.  We proceed  by induction on the structure of the context $C[\ ]$.  If $C[\ ]
\equiv [\ ]$, then the thesis follows from Theorem~\ref{top-lev}. If $C[\ ]\equiv C_1[\ ] P$, then from $\Gamma \vdash_A C_1[M[N/x]]P:\tau$ it follows that there exist $\Gamma_1, \Gamma_2$, $\sigma$ such that $\Gamma \equiv \Gamma_1, \Gamma_2$, $\Gamma_1 \vdash_A C_1[M[N/x]]:\sigma\multimap \tau$ and  $\Gamma_2 \vdash_A P:\sigma$. By induction hypothesis, there exist $\Gamma'_1$, $\sigma'$, $\tau'$, $U$ such that $\Gamma'_1 \vdash_A C_1[(\lambda x. M)N]:\sigma' \multimap \tau'$, $U(\Gamma_1) = (\Gamma'_1)_{| FV(C_1[M[N/x]])}$, $U(\sigma \multimap \tau) = \sigma'\multimap  \tau'$. By Lemma~\ref{lco}, $U(\Gamma_2) \vdash_A P: U(\sigma)$. Hence $U(\Gamma_1, \Gamma_2) \vdash_A  C_1[(\lambda x. M)N]P: U(\tau)$. If $C[\ ] \equiv P C_1[\ ]$, then we proceed in a  way similar to the case above. If $C[\ ] \equiv \lambda y. C_1[\ ]$, then there are various cases, depending on where the variable $y$ appears free: either in  both $C_1[(\lambda x.M)N]$  and $C_1[M[N/x]]$, or only in the first term, or nowhere. We discuss only the case where $y$ appears free in both terms, the other cases can be dealt with similarly. From $\Gamma \vdash_A \lambda y. C_1[M[N/x]]:\tau_1 \multimap \tau_2$, we have that $\Gamma, y:\tau_1  \vdash_A  C_1[M[N/x]]: \tau_2$. By induction hypothesis, there exist $\Gamma'$, $\tau'_1$, $\tau'_2$, $U$ such that  $\Gamma' , y: \tau'_1 \vdash_A  C_1[(\lambda x. M)N]:\tau'_2 $, and $U(\tau_i)= \tau'_i$, $U(\Gamma) = \Gamma'_{| FV(C_1[M[N/x]])}$. Hence we get
$\Gamma' \vdash_A \lambda y. C_1[(\lambda x. M)N]:\tau'_1 \multimap \tau'_2 $.
\end{proof}

\begin{thermo}\label{typ}
For all $M\in \Lambda_A$ there exists a judgement $\Gamma \vdash_A M: \tau$.
\end{thermo}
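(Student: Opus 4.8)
The plan is to exploit the fact that the affine calculus is normalizing, together with the subject-expansion statement already recorded in Lemma~\ref{nft}(ii). First I would observe that every $\beta$-step in $\Lambda_A$ strictly decreases the size of a term: since in a redex $(\lambda x.M)N$ the bound variable $x$ occurs at most once in $M$, the contractum $M[N/x]$ either copies $N$ exactly once (erasing the leaf $x$ and deleting one $\lambda$-node and one application node) or discards $N$ altogether, and in both cases the total number of symbols drops. As the reduction rules $\mathit{app}_L$, $\mathit{app}_R$, $\xi$ only propagate a shrinking step through a fixed surrounding context, the whole term strictly decreases at each step, so no infinite reduction sequence exists. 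Hence every $M\in\Lambda_A$ admits a reduction $M\rightarrow_A^* M_0$ to some normal form $M_0$.

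Next I would type the normal form. By Lemma~\ref{nft}(i) every normal form is typable in $\vdash_A$, so there is a judgement $\Gamma_0\vdash_A M_0:\tau_0$. This is the place where the structural analysis actually happens, since an affine normal form has the shape $\lambda x_1\ldots x_n.\, y\,N_1\cdots N_m$ with the $N_i$ again normal, and a variable in head position may be assigned whatever arrow type its arguments demand.

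Finally I would pull typability backwards along the chosen reduction. Writing it as a finite chain $M\equiv P_0\rightarrow_A P_1\rightarrow_A\cdots\rightarrow_A P_k\equiv M_0$, I would argue by downward induction on the index: $P_k=M_0$ is typable by the previous step, and whenever $P_{i+1}$ is typable, Lemma~\ref{nft}(ii) applied to the single reduction $P_i\rightarrow_A P_{i+1}$ produces $\Gamma',\tau',U$ with $\Gamma'\vdash_A P_i:\tau'$, so $P_i$ is typable as well. Iterating down to $i=0$ yields a judgement $\Gamma\vdash_A M:\tau$, which is exactly the thesis.

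The substance of the argument is thus entirely delegated to the two facts we are permitted to assume, namely Lemma~\ref{nft}(i) for normal forms and the subject-expansion Lemma~\ref{nft}(ii). The only genuinely new point, and the one most deserving of care, is the normalization claim, since it is what guarantees that a normal form $M_0$ from which to expand exists at all; fortunately affinity makes this immediate, each step being size-decreasing, so no reducibility-candidate machinery is needed. I would also stress that Lemma~\ref{nft}(ii) is invoked only one step at a time and iterated, rather than restated for $\rightarrow_A^*$, which keeps the bookkeeping of the intermediate contexts and substitutions trivial.
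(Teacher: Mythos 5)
Your argument is correct and follows exactly the paper's route: the paper's proof is the one-line ``by induction on the number of reduction steps to normal form, using Lemma~\ref{nft}'', i.e.\ type the normal form via Lemma~\ref{nft}(i) and propagate typability backwards one step at a time via the subject-expansion statement of Lemma~\ref{nft}(ii). Your only addition is to make explicit the normalization of $\Lambda_A$ (each affine $\beta$-step is size-decreasing), a presupposition the paper leaves implicit, and that observation is accurate.
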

\begin{proof}
By induction on the number of reduction steps to normal form, using Lemma~\ref{nft}.
\end{proof}

By Theorem~\ref{typ}, Proposition~\ref{relat}, and Proposition~\ref{unique}, we finally have:

\begin{corollary}[Uniqueness] \label{unic}
Let $M\in \Lambda_A$. Then there exists a unique judgement up-to injective substitution on types, $\Gamma \Vdash_A M:\tau$, which is derivable in $\Vdash_A$.
\end{corollary}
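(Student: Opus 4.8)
The plan is to obtain Corollary~\ref{unic} by assembling the three results that immediately precede it, splitting the statement into an \emph{existence} part and a \emph{uniqueness} part. The genuine mathematical content has already been discharged in Theorem~\ref{typ}, Proposition~\ref{relat}, and Proposition~\ref{unique}; what remains is to chain them correctly and to verify that the substitution witnessing uniqueness is \emph{injective}, so that ``unique'' genuinely means ``unique up-to injective renaming of type variables'' rather than merely up-to arbitrary instantiation.

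For existence, I would start from an arbitrary $M \in \Lambda_A$ and invoke Theorem~\ref{typ} to obtain some simple-type judgement $\Gamma \vdash_A M:\tau$. Since the system $\Vdash_A$ was designed precisely so that its judgements are principal with respect to $\vdash_A$, I would then apply Proposition~\ref{relat}(ii) to this judgement: it yields a derivation $\Gamma' \Vdash_A M:\sigma'$ together with a substitution $U$ such that $U(\Gamma') = \Gamma$ and $U(\sigma') = \tau$. The pair $\Gamma' \Vdash_A M:\sigma'$ is the desired principal judgement, establishing that $M$ is typable in $\Vdash_A$.

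For uniqueness, I would suppose that two judgements $\Gamma_1 \Vdash_A M:\sigma_1$ and $\Gamma_2 \Vdash_A M:\sigma_2$ are both derivable and apply Proposition~\ref{unique}(i) directly, obtaining an injective substitution $U$ with $U(\Gamma_1) = \Gamma_2$ and $U(\sigma_1) = \sigma_2$. Because $U$ is injective it is an injective renaming of type variables, which is exactly the equivalence up-to which the corollary asserts uniqueness. Combining the two parts gives both the existence and the uniqueness claimed.

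I do not expect a substantial obstacle here: the corollary is essentially a bookkeeping composition of previously proved facts. The only point requiring a moment of care is conceptual rather than technical, namely keeping straight the two directions of Proposition~\ref{relat} (that every $\Vdash_A$-judgement instantiates to every $\vdash_A$-judgement, and conversely that every $\vdash_A$-judgement is an instance of some $\Vdash_A$-judgement), and noting that the \emph{injectivity} of the substitution in Proposition~\ref{unique}(i) is what upgrades ``unique up-to instantiation'' to the sharper ``unique up-to injective renaming''. All the laborious inductions on derivations and on unification steps have already been carried out in the cited statements, so the proof itself is just two short invocations.
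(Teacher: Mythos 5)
Your proposal is correct and follows exactly the paper's route: existence via Theorem~\ref{typ} combined with Proposition~\ref{relat}(ii), and uniqueness via the injective substitution of Proposition~\ref{unique}(i). This matches the paper's own (one-line) derivation of the corollary from those three results.
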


Property \ref{5} amounts to the following proposition:
 \begin{proposition}[Principal Types determine Normal Forms]\label{p5}
 Let  $M,N\in {\mathbf \Lambda}_A$ be normal forms such that $\Gamma\Vdash M:\sigma$ and $\Gamma\Vdash N:\sigma$, then $M=_A N$.
\end{proposition}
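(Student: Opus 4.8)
The plan is to prove the stronger statement that $M$ and $N$ are syntactically identical up to $\alpha$-conversion; since the affine calculus is confluent, this is equivalent to $M =_A N$ for normal forms, and it is what the structure of principal types will yield. I would argue by induction on the combined size of $M$ and $N$. The engine of the proof is an auxiliary observation about \emph{neutral} normal forms, i.e.\ terms of the form $y\,M_1\cdots M_p$ with a variable in head position: reading off the $(app)$-rule and using that the head receives, via $(var)$, a \emph{fresh} atomic type, one shows by induction on $p$ that the principal type of $y\,M_1\cdots M_p$ is a single type variable $\rho$, and that in the resulting context the head has type $\theta_1\multimap\cdots\multimap\theta_p\multimap\rho$, where each $\theta_i$ is exactly the principal type of $M_i$. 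The point is that at every application the operator has atomic type, so its domain is a fresh variable which the $(app)$-unifier merely instantiates to the argument type, without constraining it back.

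From this observation, the number of leading $\multimap$'s along the rightmost spine of a principal type equals the number of leading $\lambda$-abstractions of the term: abstractions (whether by $(abs)$ or $(abs_\emptyset)$) add arrows on the left, while the neutral core contributes the atomic tail $\rho$. Hence $M$ and $N$, having the same $\sigma$, begin with the same number $n$ of abstractions; after stripping them and $\alpha$-renaming to a common list of bound variables, I am reduced to comparing two neutral normal forms with the same atomic type $\rho$ under the same context $\Gamma$. By binarity (Proposition~\ref{unique}(ii)) the variable $\rho$ occurs at most twice in the judgement; one occurrence is the subject type, and the auxiliary observation forces the other to be the final codomain of the head's type (the fresh result variable cannot occur inside the argument types $\theta_i$). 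Therefore the head is the unique variable of $\Gamma$ whose type contains $\rho$, which is the same for $M$ and $N$; the arity $p$ is then the rightmost-spine length of $\Gamma(y)$, again common to both, and the principal types $\theta_i$ of the corresponding arguments coincide.

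It remains to match the arguments pairwise and invoke the induction hypothesis, for which I must show that each argument pair $M_i,N_i$ carries not only the same type $\theta_i$ but the same context. Here the disjointness side-conditions of the $(app)$-rule do the work: they guarantee that the type variables used in the sub-derivations of distinct arguments are kept apart, so that in the final judgement the relation ``sharing a type variable'' partitions $\mathit{dom}(\Gamma)\setminus\{y\}$ into connected components, one per argument, each anchored to the copy of $\theta_i$ sitting inside $\Gamma(y)$. Consequently $FV(M_i)$, and hence the restricted context $\Gamma_{|FV(M_i)}$, is determined by $\Gamma$ and by the position of $\theta_i$ within $\Gamma(y)$ alone, so it is identical for $M$ and $N$; each pair $M_i,N_i$ thus shares a full principal judgement and the induction hypothesis gives $M_i\equiv N_i$, whence $M\equiv N$. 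I expect this last step --- proving rigorously that binarity together with the disjointness conditions forces the context split to be uniquely determined --- to be the main obstacle; it is precisely the point at which the occurrence/partial-involution structure of Proposition~\ref{occurence-vs-types} is implicitly exploited, and it must be stated carefully to cover closed arguments, whose type variables are internally paired and link to no context variable.
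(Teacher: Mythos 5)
Your proposal is correct and follows essentially the same route as the paper's proof: both hinge on the observation that the rightmost (tail) type variable of $\sigma$ must reappear in the type of the head variable, so binarity (Proposition~\ref{unique}(ii)) pins down the head uniquely, after which one recurses on the (shorter) arguments --- the paper phrases this as a minimal-counterexample argument where you use induction on size. Your version is more explicit about the atomic type of neutral normal forms and about why the argument subterms inherit matching judgements (the context-splitting step), details the paper's proof leaves implicit.
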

\begin{proof} Consider the shortest pair of closed derivable judgements in $\Vdash_A$ with the same type but different terms in normal form. The rightmost  type variable in the type $\sigma$ must  occur also in the type of the head variable of the term, because all normal forms have a head variable. Since the judgements are binary, by Proposition \ref{unique}, that type variable  can occur only in the type of that term variable, so the head variable is uniquely determined and it must coincide, together with its type, in both judgements. Hence the difference between the two $\beta$-normal forms must be in the arguments of the the head variable, where the head variable does not occur, because the term is affine. These are shorter, hence we have a contradiction.
\end{proof}

In order to address Property \ref{6}, we need a proposition, whose proof is immediate:
\begin{proposition}  \label{ptycomb}\hfill \\
$\Vdash_A ({\mathbf I})_{\lambda}: \alpha \multimap \alpha$\\
$\Vdash_A ({\mathbf K})_{\lambda}: \alpha  \multimap  \beta  \multimap  \alpha$\\
$\Vdash_A ({\mathbf B})_{\lambda}: (\alpha  \multimap  \beta) \multimap  (\gamma  \multimap  \alpha) \multimap  \gamma  \multimap  \beta$\\
$\Vdash_A ({\mathbf C})_{\lambda}:( \alpha  \multimap  \beta  \multimap  \gamma) \multimap  \beta \multimap  \alpha  \multimap  \gamma$\\
\end{proposition}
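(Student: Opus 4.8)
The plan is to establish each of the four judgements by exhibiting a single explicit derivation in $\Vdash_A$ ending in the stated type; since by Proposition~\ref{unique}(i) the type produced by $\Vdash_A$ is unique up to injective renaming (and every affine term is typable, Corollary~\ref{unic}), one such derivation certifies that the displayed type is indeed the principal one. In each case I construct the derivation compositionally, starting from the $(var)$-axioms at the leaves, typing the body of the combinator, and then discharging the binders $\lambda x,\lambda y,\lambda z$ from the inside out. For $({\mathbf I})_{\lambda} = \lambda x.x$ the derivation is a single $(abs)$ applied to $x{:}\alpha \Vdash_A x{:}\alpha$, giving $\alpha \multimap \alpha$. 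For $({\mathbf K})_{\lambda} = \lambda x y.x$, since $y \notin FV(x)$, one first applies $(\text{abs}_\emptyset)$ to the same axiom to obtain $x{:}\alpha \Vdash_A \lambda y.x : \beta\multimap\alpha$ with $\beta$ fresh, and then $(abs)$ on $x$, giving $\alpha\multimap\beta\multimap\alpha$.

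The substance lies in $({\mathbf B})_{\lambda}=\lambda x y z.\,x(yz)$ and $({\mathbf C})_{\lambda}=\lambda x y z.\,(xz)y$, whose bodies contain applications and therefore invoke the $(app)$-rule with its two successive most general unifiers. For $\mathbf C$ I first type $xz$ from $x{:}\rho$ and $z{:}\zeta$: the rule computes $U' = MGU(\rho,\mu\multimap\nu)=[\rho{:=}\mu\multimap\nu]$ and $U = MGU(\mu,\zeta)=[\mu{:=}\zeta]$, yielding $x{:}\zeta\multimap\nu,\ z{:}\zeta \Vdash_A xz:\nu$. Applying this operator to $y{:}\eta$ runs the analogous pair of unifications and gives $x{:}\zeta\multimap\eta\multimap\nu',\ y{:}\eta,\ z{:}\zeta \Vdash_A (xz)y:\nu'$; three $(abs)$-steps then produce $(\zeta\multimap\eta\multimap\nu')\multimap\eta\multimap\zeta\multimap\nu'$, which is the stated type under $\zeta,\eta,\nu'\mapsto\alpha,\beta,\gamma$. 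The derivation for $\mathbf B$ is entirely parallel: typing $yz$ gives $y{:}\gamma\multimap\delta,\ z{:}\gamma \Vdash_A yz:\delta$, feeding this to $x{:}\rho$ gives $x{:}\delta\multimap\nu,\ y{:}\gamma\multimap\delta,\ z{:}\gamma \Vdash_A x(yz):\nu$, and abstracting yields $(\delta\multimap\nu)\multimap(\gamma\multimap\delta)\multimap\gamma\multimap\nu$, matching the claim under $\delta,\nu\mapsto\alpha,\beta$.

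The only delicate point --- and the reason the proposition is nonetheless routine --- is the bookkeeping of the fresh variables introduced at each $(app)$-step and the check that none of the unifications ever triggers the occurs-check failure clause of Proposition~\ref{martelli}. Here they cannot: in every $(app)$ above, the operator type is itself a fresh variable ($\rho$, or the fresh result variable $\nu$ of the previous application), so each $MGU$ reduces to a plain variable-for-term binding with no cyclic constraint. Once the accumulated substitutions are applied to the recorded contexts as indicated, reading off the final arrow type and comparing it, up to injective renaming, with the claimed types is immediate.
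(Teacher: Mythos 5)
Your derivations are correct, and this is exactly what the paper intends: its proof consists of the single word ``immediate,'' and your proposal is just the routine expansion of that claim, constructing each judgement bottom-up through $(var)$, $(abs)$, $(abs_\emptyset)$ and $(app)$ with the evident most general unifiers. The explicit bookkeeping of fresh variables and the observation that no occurs-check can fire are sensible additions but do not constitute a different route.
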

The above types are the well-known principal types for the basic combinators and they will be taken as the types of the combinators in the following proposition, which addresses Property \ref{6}.

\begin{proposition}\label{p6}
Let $M\in {\mathbf \Lambda}_A$. Then  $\Gamma \Vdash_A M:\sigma \ \Longleftrightarrow\ \Gamma \Vdash_A ((M)_{CL})_{\lambda}:\sigma $.
\end{proposition}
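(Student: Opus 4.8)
The plan is to prove the stronger statement that $M$ and $((M)_{CL})_\lambda$ carry the very same principal type up to injective renaming; the biconditional then follows at once from the uniqueness of principal types (Proposition~\ref{unique}(i), Corollary~\ref{unic}): both terms are typable, and, since every derivable judgement is an injective renaming of the principal one, two terms sharing a principal judgement are typed by exactly the same judgements. I would establish the equality of principal types by induction on the structure of $M$. The variable case is trivial, as $(x)_{CL}\equiv x$ and hence $((x)_{CL})_\lambda\equiv x$. The application case is immediate too: since both $(\ )_{CL}$ and $(\ )_\lambda$ are homomorphisms with respect to application, $((PQ)_{CL})_\lambda\equiv((P)_{CL})_\lambda\,((Q)_{CL})_\lambda$; the inductive hypotheses give that $P$ and $((P)_{CL})_\lambda$, respectively $Q$ and $((Q)_{CL})_\lambda$, share principal types, and the (app) rule computes the principal type of an application solely from the (disjointly renamed) principal types of its two operands, so $PQ$ and $((PQ)_{CL})_\lambda$ receive the same principal type.

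The whole weight of the argument therefore falls on the abstraction case $M\equiv\lambda x. P$, where $((\lambda x. P)_{CL})_\lambda\equiv(\lambda^* x.\,(P)_{CL})_\lambda$. Writing $Q\equiv(P)_{CL}$, the inductive hypothesis tells us that $(Q)_\lambda\equiv((P)_{CL})_\lambda$ shares the principal type of $P$, and, since the (abs) and (abs$_\emptyset$) rules determine the principal type of a $\lambda$-abstraction from that of its body, it suffices to isolate and prove the following abstraction lemma: for every $\mathbf{CL}_A$-term $Q$, the terms $(\lambda^* x.\,Q)_\lambda$ and $\lambda x.\,(Q)_\lambda$ have the same principal type. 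I would prove this lemma by induction on the clauses defining $\lambda^* x.\,Q$ in Definition~\ref{op}, computing principal types explicitly from the principal types of the combinators recorded in Proposition~\ref{ptycomb}. The two base clauses are direct: $\lambda^* x. x=\mathbf{I}$ gives $(\mathbf{I})_\lambda\equiv\lambda x. x$, which literally is $\lambda x.(x)_\lambda$; and $\lambda^* x. y=\mathbf{K}y$ gives $(\mathbf{K})_\lambda\,y$, whose principal type $\alpha\multimap\beta$ in the context $y{:}\beta$ coincides with the type assigned to $\lambda x. y$ through the (abs$_\emptyset$) rule.

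For the inductive clauses I would feed the inner inductive hypothesis on the immediate subterm into an (app)-level computation. In the clause $\lambda^* x. MN=\mathbf{C}(\lambda^* x. M)N$ (the case $x\in FV(M)$, whence $x\notin FV(N)$ by affineness), the inner hypothesis gives that $(\lambda^* x. M)_\lambda$ shares the principal type of $\lambda x.(M)_\lambda$, and, using again that application depends only on the operands' types, I would unify the principal type of $(\mathbf{C})_\lambda$, namely $(\alpha\multimap\beta\multimap\gamma)\multimap\beta\multimap\alpha\multimap\gamma$, successively against this type and against the type of $(N)_\lambda$, checking that the resulting right-hand side is exactly the type the (abs) rule assigns to $\lambda x.\,(M)_\lambda(N)_\lambda$. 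The clause with $\mathbf{B}$ (the case $x\in FV(N)$) is symmetric, and the clause with $\mathbf{K}$ (the case $x\notin FV(MN)$) matches the type delivered by the (abs$_\emptyset$) rule.

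The step I expect to be the genuine obstacle is precisely this verification in the $\mathbf{C}$ and $\mathbf{B}$ clauses that the composite unification reproduces the type assigned by $\lambda$-abstraction. Equivalently, one may phrase it as showing that the two head $\beta$-reductions unfolding the combinator, followed by a single substitution of the bound variable $(\lambda x.(M)_\lambda)\,z\to(M)_\lambda[z/x]$, preserve the principal type, even though this last reduction takes place under a $\lambda$. This is where affineness is essential: because $x$ genuinely occurs (we are in the branch $x\in FV(M)$, resp.\ $x\in FV(N)$), the substituted fresh variable $z$ is neither erased nor duplicated, so the reduction is a pure renaming and leaves the principal type untouched, whereas the analogous under-$\lambda$ reduction in the erasing branch is exactly the phenomenon ruled out by the Counterexample. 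The matching freshness bookkeeping of the (abs$_\emptyset$) rule against the $\mathbf{K}$-clause records erasures without altering the conclusion, which is what makes the overall correspondence go through.
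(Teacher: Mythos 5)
Your proposal is correct and follows essentially the same route as the paper: induction on $M$, with the application case discharged by the homomorphism property, the abstraction case handled by a case analysis tracking the clauses of the abstraction operation $\lambda^*$ and an explicit verification---via the principal types of $\mathbf{B}$, $\mathbf{C}$, $\mathbf{K}$ and the unification-based (app) rule---that the result matches the type assigned by (abs)/(abs$_\emptyset$), and the converse implication derived from typability and uniqueness (Corollary~\ref{unic}). Your only real departure is organizational: you factor the abstraction case into a standalone lemma on $\mathbf{CL}_A$-terms with its own structural induction, which if anything tidies the paper's induction measure (the paper applies its hypothesis to $\lambda x.M'_2$, which is smaller than but not a subterm of $\lambda x.M'_1M'_2$).
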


\begin{proof}\hfill\\
($\Rightarrow$) By induction on $M$.
\\ If $M\equiv x$, then the thesis is immediate.
\\ If $M\equiv M_1 M_2$, then  the thesis follows by applying the  induction hypothesis, since $((M_1 M_2)_{CL})_{\lambda}
= ((M_1)_{CL})_{\lambda} ((M_2)_{CL})_{\lambda}$.
\\  If $M\equiv \lambda x.M'$, then  $((M)_{CL})_{\lambda}= (\lambda^* x. M')_{\lambda}$ and
there are various cases, according to the shape of $M'$. If $M'\equiv x$, then $((M)_{CL})_{\lambda}=( \mathbf{I})_{\lambda}= M$, and the thesis is immediate. If $M'\equiv y$, $y\not \equiv x$, then $((M)_{CL})_{\lambda}=( \mathbf{K}y)_{\lambda}$, and the thesis is immediate.
If $M'=M'_1M'_2$ and $x\in FV(M'_2)$, then from $\Gamma \Vdash_A \lambda x. M'_1 M'_2:\tau \multimap \sigma$, we have  $\Gamma, x:\tau \Vdash_A  M'_1 M'_2:\sigma$.
Hence there exist $\sigma_1, \sigma_2, \sigma_3$ such that $\Gamma \Vdash_A M'_1:\sigma_1\rightarrow \sigma_2$, $\Gamma, x:\tau_1 \Vdash_A M'_2:\sigma_3$, and $\sigma$ is the result of the resolution between $\sigma_1\multimap \sigma_2$ and $\sigma_3$, and $\tau$ is the result of the application of the resolvent substitution to $\tau_1$. By applying the induction hypothesis, we obtain that $\Gamma \Vdash_A ((M'_1)_{CL})_{\lambda} :\sigma_1\rightarrow \sigma_2$ and
  $\Gamma  \Vdash_A (\lambda^*x.M'_2)_{\lambda}:\tau_1\multimap \sigma_3$.  It is  straightforward, using the uniqueness of principal types, to check that applying $(\mathbf B)_{\lambda}$ first to $((M'_1)_{CL})_{\lambda}$ and then to $(\lambda^*x.M'_2)_{\lambda}$ yields the result. The remaining cases are dealt with similarly.\\
($\Leftarrow$)  Let  $\Gamma \Vdash_A ((M)_{CL})_{\lambda}:\sigma$. Then by Corollary~\ref{unic} there exist $\Gamma', \sigma'$ such that $\Gamma' \Vdash_A M:\sigma'$. Hence, by the implication $(\Rightarrow)$ above, $\Gamma' \Vdash_A ((M)_{CL})_{\lambda}:\sigma'$, therefore,  by uniqueness of principal types up-to injective substitution, we  finally have $\Gamma \Vdash_A M:\sigma$.
\end{proof}

\section{The Model of Partial Involutions}\label{modin}

In the following definition, we introduce the model of partial involutions induced by binary types.
As already observed (see Proposition~\ref{occurence-vs-types}(ii) of Section~\ref{tasy}), each binary type $\tau$  induces a partial involution ${\mathcal R}(\tau)$ on the language $O_{\Sigma}$ of type variable occurrences.
On these partial involutions, we define a notion of linear application in the  GoI-style, see \eg\ \cite{AL05,Abr05}. However, proving that partial involutions induced by binary types are closed under application requires a number of results, which build up to Corollary~\ref{clos} below. Once established this result, the applicative  structure of partial involutions is easily shown to be an affine combinatory algebra, and hence to provide a GoI semantics for $\mathbf{CL}_A$ and $\lambda_A$.

\begin{definition}[The Model of Partial Involutions $\mathcal I$]\label{cdot}\hfill
\\ (i) $ {\mathcal I}$ is the set of {\it partial involutions}  induced by binary types, \ie\ ${\mathcal I} = \{ {\mathcal R}(\tau) \ | \ \tau\in T_\Sigma \ \wedge\  \tau \mbox{ binary}\}$.
\\ (ii) Given binary types $\sigma,\tau \in T_\Sigma$, we define ${\mathcal R}(\tau)\, \hat{;}\, {\mathcal R}(\sigma)$ as  ``unification and postfix composition'', namely
\\ $ {\mathcal R}(\tau)\, \hat{;}\, {\mathcal R}(\sigma) = \{\langle U(u[\alpha]),U(v'[\beta]) \rangle \ \mid \ \langle u[\alpha],u'[\alpha]\rangle \in {\mathcal R}(\tau),\ \langle v[\beta],v'[\beta]\rangle \in {\mathcal R}(\sigma),\ $
\\ \hspace*{9cm} $U \mbox{ occ-unifier of } u'[\alpha] \mbox{ and } v[\beta]\}$\\
(iii) The notion of {\em linear application} is defined, for $f,g\in {\mathcal I}$, by  $$f\cdot g =\ f_{rr}\cup (f_{rl}\, \hat{;}\, g\, \hat{;}\, (f_{ll}\, \hat{;}\, g)^*\, \hat{;}\, f_{lr})\ ,$$ where $f_{ij} \ =\ \{ \langle u,v\rangle\ |\ \langle i(u),j(v)\rangle \in f\}$, for $i,j \in \{r,l\}$ (see Fig.~\ref{fig:lappdiag}). We take variables in different pairs of $f \cdot g$ to be disjoint.\\
(iv) We define:  ${\mathcal O}(f\cdot g)=\{u\mid\exists v.\ \langle u,v\rangle\in f\cdot g\}.$
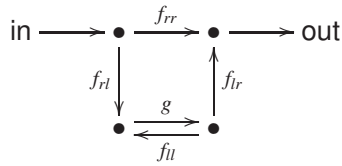
\begin{figure}[!h]
\[\xymatrix{
\mathsf{in}\ar[r] & \bullet\ar[r]^{f_{rr}}\ar[d]_{f_{rl}} & \bullet\ar[r] & \mathsf{out}\\
                  & \bullet\ar@<.5ex>[r]^{g} & \bullet\ar@<.5ex>[l]^{f_{ll}}\ar[u]_{f_{lr}} &
}\]
\caption{Flow of control in executing $f\cdot g$.}\label{fig:lappdiag}
\end{figure}
\end{definition} 

Most  of this section will be  devoted  to proving closure of $\mathcal I$ under linear application. But to provide intuition, before we address this issue, we introduce
 the interpretation of $\mathbf{CL}_A$ on the set  of partial involutions $\mathcal I$, and, via the abstraction procedure, the interpretation of the $\lambda_A$-calculus:

\begin{definition}[GoI Semantics]\hfill\label{GoIsem}
\\ (i) The \emph{GoI semantics} of closed combinatory terms, $\pI{\ }: \mathbf{CL}_A^0\rightarrow {\mathcal I}$, is defined by induction on terms $M\in \mathbf{CL}_A^0$ as follows:

{\begin{tabular}{lll}
$\pI{\mathbf B}$ & = &   $\{ r^3 \alpha \leftrightarrow lr \alpha \ , \  l^2 \alpha \leftrightarrow rlr \alpha \ , \   rl^2 \alpha \leftrightarrow r^2 l \alpha \}$ \\
    $\pI{\mathbf I}$ & = &  $\{ l\alpha \leftrightarrow r\alpha\} $\\
 $\pI{\mathbf C}$ & = &   $\{ l^2\alpha \leftrightarrow r^2 l\alpha\ , \  lrl \alpha\leftrightarrow rl\alpha \ , \  lr^2 \alpha\leftrightarrow r^3 \alpha\} $\\
 $\pI{\mathbf K}$ & = &  $\{l\alpha \leftrightarrow r^2\alpha\}$ \\
 $\pI{MN} $ & = & $\pI{M} \cdot \pI{N}$
\end{tabular}} \label{gs} %
\label{tabsem}
\\ where $u\alpha  \leftrightarrow v\alpha$ is an abbreviation for  the pairs $\langle u[\alpha], v[\alpha] \rangle$, $\langle v[\alpha], u[\alpha] \rangle$.
\\
 (ii) The GoI semantics of closed $\lambda_A$-terms, $\pI{\ }: {\mathbf \Lambda}_A^0 \rightarrow {\mathcal I}$, is defined, for any $M\in {\mathbf \Lambda}_A^0$, by: $$\pI{M} = \pI{(M)_{CL}}\ .$$
\end{definition}

Please appreciate that the semantics of the combinators given above corresponds precisely to the occurrences of the type variables in the corresponding principal types in Proposition~\ref{ptycomb}.\\
In order to clarify the working of the above semantics we  provide some examples.

\begin{example}$\pI{\mathbf{I}}\cdot \pI{\mathbf I}=\pI{\mathbf I}$\\
By  Definition \ref{GoIsem} we have:\\
\begin{tabular}{lll}
$\p{\mathbf{I}}^{\mathcal I}_{rl}= \p{\mathbf{I}}^{\mathcal I}_{lr}$&=& $\alpha\leftrightarrow \alpha$\\
$\pI{\mathbf{I}}\cdot \pI{\mathbf I}$&=& $\boldsymbol{\it r\alpha} \Leftrightarrow r\alpha \Leftrightarrow l\alpha \Leftrightarrow \boldsymbol{\it l\alpha}$
\end{tabular}\\
where we denote by $\Leftrightarrow$ the sequence of the effects of the relational compositions after the substitutions have been put into place. It is symmetric. The resulting pair appears in bold characters. $\qed$
\end{example}

\begin{example} \pI{\mathbf{CIII}}=\pI{\mathbf I}\\
By  Definition \ref{GoIsem} we have:\\
\begin{tabular}{lll}
$\p{\mathbf{C}}_{rl}^{\mathcal I} =( \p{\mathbf{C}}_{lr}^{\mathcal I})^{-1}$ & = &   $\{ rl\alpha \leftrightarrow l\alpha\ , \  r^2 \alpha\rightarrow r^2 \alpha\}$ \\
$\pI{\mathbf{C}}\cdot \pI{\mathbf I}$ & = & $\{\boldsymbol{\it r^2\alpha} \Leftrightarrow r^2\alpha \Leftrightarrow lr\alpha \Leftrightarrow\boldsymbol{\it rlr\alpha}, \boldsymbol{\it l\alpha} \Leftrightarrow rl\alpha \Leftrightarrow l^2\alpha \Leftrightarrow\boldsymbol{\it rl^2\alpha}\}$\\
&  principal type & $\alpha\multimap (\alpha \multimap \beta)\multimap \beta$\\
$\p{\mathbf{CI}}_{rl}^{\mathcal I} =( \p{\mathbf{CI}}_{lr}^{\mathcal I})^{-1}$ & = &   $\{ l^2\alpha\rightarrow\alpha \}$ \\
$\p{\mathbf{CI}}_{rr}^{\mathcal I} $ & = &   $\{ r\alpha\leftrightarrow lr\alpha \}$ \\
$\pI{\mathbf{CI}}\cdot \pI{\mathbf I}$ & = & $\p{\mathbf{CI}}_{rr}^{\mathcal I}  \cup \{\boldsymbol{\it l^2r\alpha} \Leftrightarrow r\alpha \Leftrightarrow l\alpha \Leftrightarrow\boldsymbol{\it l^3\alpha}\}$\\
$\p{\mathbf{CII}}^{\mathcal I}$ & = &   $\{  r\alpha \leftrightarrow  l r\alpha, l^3\alpha \leftrightarrow l^2r\alpha \}$\\
&  principal type & $((\alpha\multimap\alpha) \multimap \beta)\multimap \beta$\\
$\p{\mathbf{CII}}_{rl}^{\mathcal I} =( \p{\mathbf{CII}}_{lr}^{\mathcal I})^{-1}$ & = &   $\{ \alpha \rightarrow r \alpha \}$ \\
$\p{\mathbf{CII}}_{ll}^{\mathcal I}$ & = &   $\{  l^2 \alpha \leftrightarrow lr \alpha \}$\\
$\pI{\mathbf{CII}}\cdot \pI{\mathbf I}$ & = & $\{\boldsymbol{\it r\alpha} \Leftrightarrow r^2\alpha \Leftrightarrow lr\alpha \Leftrightarrow l^2\alpha \Leftrightarrow rl\alpha \Leftrightarrow\boldsymbol{\it l\alpha}\}$\\
&  principal type & $\alpha\multimap \alpha$\\
\end{tabular}\\
notice that in general the relations $\p{\ }_{rl }^{\mathcal I}$ are not symmetric. \qed
\end{example}

The proof that the set of partial involutions is closed under application requires a fine analysis and a number of technical results on linear application.

In the following proposition, we spell out application in terms of {\em trajectories}: namely,  we have $\langle u, v\rangle\in f\cdot g$ if and only if there exists a {\em trajectory}, \ie\ a suitable sequence of pairs of variable occurrences,
$\langle u_1[\alpha_1], u'_1[\alpha_1]\rangle, \ldots , \langle u_{n+1}[\alpha_{n+1}], u'_{n+1}[\alpha_{n+1}]\rangle$, together with occurrence unifiers for the pairs $\langle u'_i[\alpha_i], u_{i+1}[\alpha_{i+1}]\rangle$, for all $i$. More precisely:

\begin{proposition}\label{multiple} Let $f,g\in {\mathcal I}$. Then  $\langle u,v \rangle \in f\cdot g$ if and only if there exists a sequence, $\langle u_1[\alpha_1],u'_1[\alpha_1]\rangle,\ldots, \langle u_{n+1}[\alpha_{n+1}],u'_{n+1}[\alpha_{n+1}]\rangle$, $n$ even, such that:
\begin{itemize}
\item either $n=0$ and $\langle u_1[\alpha_1],u'_1[\alpha_1] \rangle \in f_{rr}$ or $n>0$, $\langle u_1[\alpha_1],u'_1[\alpha_1] \rangle \in f_{rl}$, $\langle u_{n+1}[\alpha_{n+1}],u'_{n+1}[\alpha_{n+1}]\rangle\in f_{lr}$,  $\langle u_i[\alpha_i],u'_i[\alpha_i] \rangle\in g$,  for $i<n$, $i$ even,  and $\langle u_i[\alpha_i],u'_i[\alpha_i] \rangle \in f_{ll}$, for $1< i < n+1$, $i$ odd;
\item the set of types $\Pi =  \{\langle {\mathcal T}_Z(u'_i[\alpha_i]),{\mathcal T}_Z (u_{i+1}[\alpha_{i+1}])\rangle  \mid 1\leq i\leq n\}$ (where  $Z$-variables used in different types are different) is unifiable with m.g.u. $U$, and $u=U(u_1[\alpha])$,  $v=U(u_{n+1}[\alpha_{n+1}])$.
\end{itemize}
The sequence $\langle u_1[\alpha_1],u'_1[\alpha_1]\rangle,\ldots, \langle u_{n+1}[\alpha_{n+1}],u'_{n+1}[\alpha_{n+1}]\rangle$ is called a	\emph{trajectory} and $\langle u,v \rangle$ its \emph{output}.
\end{proposition}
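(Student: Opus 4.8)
The plan is to characterize membership in $f\cdot g$ by unfolding, in turn, the Kleene star and the operator $\hat{;}$ appearing in Definition~\ref{cdot}, and then to reconcile the many small occ-unifications performed along the iterated $\hat{;}$ with the single simultaneous unification of the set $\Pi$ in the statement. First I would rewrite $(f_{ll}\hat{;}g)^\ast=\bigcup_{k\geq 0}(f_{ll}\hat{;}g)^k$, so that
\[f\cdot g \;=\; f_{rr}\;\cup\;\bigcup_{k\geq 0}\bigl(f_{rl}\hat{;}g\hat{;}(f_{ll}\hat{;}g)^k\hat{;}f_{lr}\bigr).\]
The summand $f_{rr}$ accounts precisely for the trajectories with $n=0$, where $\Pi=\emptyset$, $U=\mathit{id}$, and the output pair is just the chosen pair of $f_{rr}$. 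The $k$-th summand of the union will be shown to account for the trajectories of length $n+1=2k+3$, whose alternation $f_{rl},g,f_{ll},g,\ldots,g,f_{lr}$ is exactly the membership pattern prescribed in the first bullet.

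The technical core is a composition lemma, proved by induction on $m$: given occurrence relations $R_1,\ldots,R_m$, each instantiated with its own fresh, pairwise disjoint variables, one has $\langle u,v\rangle\in R_1\hat{;}\cdots\hat{;}R_m$ if and only if there are pairs $\langle a_i[\gamma_i],b_i[\gamma_i]\rangle\in R_i$ such that the family of occurrence problems $\{\langle b_i,a_{i+1}\rangle \mid 1\leq i<m\}$ is unifiable with m.g.u.\ $U$, and $u=U(a_1)$, $v=U(b_m)$. The base case $m=1$ is immediate. For the inductive step I would peel off the outermost $\hat{;}$ using its definition, obtaining a single occ-unifier for the last gluing constraint, and splice it with the m.g.u.\ supplied by the induction hypothesis for the remaining constraints. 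Instantiating this lemma along the chain $f_{rl},g,f_{ll},g,\ldots,f_{lr}$ then reproduces the trajectory, its gluing set $\Pi$, and the output $\langle U(u_1[\alpha_1]),U(u'_{n+1}[\alpha_{n+1}])\rangle$.

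The main obstacle is exactly the inductive step of this lemma: I must show that the occ-unifiers computed one at a time by successive applications of $\hat{;}$, when composed along the chain, coincide (up to renaming) with the simultaneous m.g.u.\ of the whole constraint family $\Pi$. This is an instance of the compositionality of most general unifiers---the m.g.u.\ of a disjoint union of equation sets may be obtained by solving one block, propagating its substitution, and then solving the rest---which is guaranteed by the Martelli--Montanari procedure of Proposition~\ref{martelli} together with its confluence. Two points need attention. The fresh $Z$-variables introduced by ${\mathcal T}_Z$ at different steps must be kept pairwise disjoint, so that no spurious identification is forced when the blocks are merged; and Remark~\ref{pons-asinorum} must be invoked to guarantee that unifying one gluing constraint never makes two distinct occurrences belonging to the same relation unifiable, so that propagating a substitution never corrupts a later pair. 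Granting these, incremental and simultaneous unification agree, and, since $\Pi$ is finite and each occ-unifier touches only finitely many variables, the equivalence follows in both directions.
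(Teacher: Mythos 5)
Your proposal is correct and takes essentially the same route as the paper, whose entire proof is the observation that the proposition is ``just a rephrasing of Definition~\ref{cdot} from the perspective of ancestor types, using Proposition~\ref{martelli}.'' You have simply made explicit the details the paper leaves implicit---unfolding the Kleene star and $\hat{;}$, and the composition lemma reconciling step-by-step occ-unification with the simultaneous m.g.u.\ of $\Pi$ via the Martelli--Montanari procedure and the disjointness of fresh variables.
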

\begin{proof}
This is just a rephrasing of Definition \ref{cdot} from the perspective of ancestor types, using Proposition \ref{martelli}.
\end{proof}

In the following lemma, we study the shape of the pairs $\langle u[\alpha], v[\beta]\rangle$ belonging to ${\mathcal R}(\sigma_1\multimap \sigma_2)\cdot {\mathcal R}(\tau)$, for $\sigma_1 \multimap \sigma_2, \tau$ binary types.

\begin{lemma}\label{coherence}
Let $\sigma_1\multimap \sigma_2$ and $\tau$ be binary types and let $\langle u[\alpha], v[\beta]\rangle\in {\mathcal R}(\sigma_1\multimap \sigma_2)\cdot {\mathcal R}(\tau)$ via the trajectory $\pi$.
Then we have:
\\ (i)  $\alpha =\beta$.
\\ (ii) $u[\alpha]$ and $v[\alpha]$ are not unifiable.
\\ (iii) if $\langle u'[\alpha'], v'[\alpha']\rangle \in {\mathcal R}(\sigma_1\multimap \sigma_2)\cdot {\mathcal R}(\tau)$ is  an output pair of a trajectory, different from $\langle u[\alpha], v[\alpha]\rangle$, then $u[\alpha]$ is not unifiable with $u'[\alpha']$.
\\ (iv) $\langle  v[\alpha], u[\alpha]\rangle$ is also an output pair of a trajectory.
\end{lemma}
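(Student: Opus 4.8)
The plan is to work entirely through the trajectory characterisation of Proposition~\ref{multiple}, so that each of the four claims becomes a statement about the sequence $\langle u_1[\alpha_1],u'_1[\alpha_1]\rangle,\ldots,\langle u_{n+1}[\alpha_{n+1}],u'_{n+1}[\alpha_{n+1}]\rangle$ and its m.g.u.\ $U$. The decisive structural fact I would isolate first is that each intermediate pair lies in an involution (either $\mathcal{R}(\tau)$, via the $g$-steps, or in $f_{ll}$, which is a restriction of $\mathcal{R}(\sigma_1\multimap\sigma_2)$), and that each is a pair of occurrences \emph{of the same variable}: indeed $\langle u_i[\alpha_i],u'_i[\alpha_i]\rangle$ carries a single type variable $\alpha_i$ by the notation $u\alpha\leftrightarrow v\alpha$. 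For (i), I would note that the output variables $\alpha,\beta$ are, after applying $U$, simply the variables $\alpha_1$ and $\alpha_{n+1}$ suitably renamed along the occ-unifiers; since the whole trajectory is linked by occ-unifiers identifying $u'_i[\alpha_i]$ with $u_{i+1}[\alpha_{i+1}]$, the variable carried is propagated consistently along the chain, and the endpoints $u=U(u_1[\alpha_1])$ and $v=U(u_{n+1}[\alpha_{n+1}])$ must carry one and the same variable. Thus $\alpha=\beta$, and from now on I write the output pair as $\langle u[\alpha],v[\alpha]\rangle$.

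For (ii), the key is Remark~\ref{pons-asinorum} (the \emph{pons asinorum}): two distinct occurrences within a single binary type are never unifiable, and no occ-unifier applied along the way can make them so. The plan is to argue that $u[\alpha]$ and $v[\alpha]$ arise as instances (under $U$) of two distinct occurrences of $\alpha$ within the binary type $\sigma_1\multimap\sigma_2$ — namely the images of the two endpoints of the trajectory, both of which sit inside $\mathcal{R}(\sigma_1\multimap\sigma_2)$ through $f_{rr}$, $f_{rl}$, $f_{lr}$. Since in a binary type each variable occurs at most twice and distinct occurrences have incomparable paths, Remark~\ref{pons-asinorum} gives directly that $u[\alpha]$ and $v[\alpha]$ are not unifiable. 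For (iii), I would apply the same Remark to the two output pairs: $u[\alpha]$ and $u'[\alpha']$ are (after the respective unifiers) instances of occurrences living in the same binary type $\sigma_1\multimap\sigma_2$; if they were unifiable their paths would be comparable, which for a binary type forces them to be the \emph{same} occurrence, hence (by involutivity and determinacy of trajectories) the same output pair — contradicting distinctness.

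For (iv), symmetry, the plan is to reverse the trajectory: given the trajectory $\pi$ realising $\langle u[\alpha],v[\alpha]\rangle$, the reversed sequence $\langle u'_{n+1}[\alpha_{n+1}],u_{n+1}[\alpha_{n+1}]\rangle,\ldots,\langle u'_1[\alpha_1],u_1[\alpha_1]\rangle$ is again a legitimate trajectory. This uses that $f_{rr}$, $f_{ll}$ and $g$ are all involutions (so each reversed pair is again a member of the relevant relation) and that $f_{rl}$ reversed is $f_{lr}$, so the endpoint membership conditions of Proposition~\ref{multiple} hold for the reversed sequence with the roles of the two ends swapped; the linking occ-unifiers are symmetric, so the same m.g.u.\ $U$ unifies the reversed linking set, yielding output $\langle v[\alpha],u[\alpha]\rangle$.

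I expect the main obstacle to be (ii)/(iii), and specifically the bookkeeping needed to justify that the two endpoints of a trajectory really do project to \emph{distinct} occurrences of the same variable inside one binary type, so that Remark~\ref{pons-asinorum} applies cleanly; the care lies in tracking how $U$ acts on each endpoint occurrence and in ruling out the degenerate case where the two endpoints coincide. Claims (i) and (iv) should then follow by the variable-propagation and trajectory-reversal arguments sketched above with only routine verification.
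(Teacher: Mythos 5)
Your treatment of (i), (iii) and (iv) follows essentially the same route as the paper: (i) by propagation of the carried variable along the occ-unifiers of Proposition~\ref{multiple}, (iii) by determinacy of trajectories in a binary type (two pairs with the same first component must have the same second component, by Remark~\ref{pons-asinorum}), and (iv) by reversing the trajectory using involutivity of $f$ and $g$ and symmetry of the linking set $\Pi$. Those three parts are fine.

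Part (ii), however, has a genuine gap, and you have located it yourself without closing it. Your argument via Remark~\ref{pons-asinorum} only works when the two $\sigma_2$-endpoints of the trajectory (the first component of the initial $f_{rl}$-pair and the $\sigma_2$-component of the final $f_{lr}$-pair) are \emph{distinct} occurrences in the binary type $\sigma_1\multimap\sigma_2$; in that case the remark does give non-unifiability of their $U$-images. But the ``degenerate case where the two endpoints coincide'' is not a bookkeeping detail to be tracked --- it is the entire content of (ii), because when the endpoints are the same occurrence the two outputs $u[\alpha]$ and $v[\alpha]$ are both extensions of that single occurrence and nothing in Remark~\ref{pons-asinorum} prevents one from being a prefix of the other. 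The paper excludes this case by a separate argument: if the endpoints coincide then, by the same determinacy used in (iii), the trajectory must equal its own reversal, i.e.\ be a palindrome; since a trajectory has $n+1$ pairs with $n$ even, a palindrome has a middle pair equal to its own reverse, hence of the form $\langle w[\gamma],w[\gamma]\rangle$, which is impossible because the relations ${\mathcal R}(\cdot)$, $f_{ll}$ and $g$ consist of pairs of \emph{different} occurrences. Without this palindrome-plus-odd-length-plus-irreflexivity step (or an equivalent one), your proof of (ii) does not go through.
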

\begin{proof}\hfill
\\ (i) Immediate from Proposition~\ref{multiple}.
\\ (ii)   We prove the result by contradiction. If $u[\alpha]$ and $v[\alpha]$ are unifiable, then, by Remark \ref{pons-asinorum}, the components in $\sigma_2$ of the trajectory $\pi$ must coincide. But then the sequence of  pairs in the trajectory $\pi$, must be symmetric. But the sequence is odd, hence there exists a pair which is itself the identity. This is a contradiction, since by definition the $\mathcal R$'s are irreflexive.
\\ (iii) If $u[\alpha]$ and $u'[\alpha']$ are unifiable, then they must  arise from the same initial occurrence $u_1[\alpha_1]$ of $\sigma_2$ in the trajectories $\pi$ and $\pi'$. Assume that the two trajectories coincide up to stage $k$ and that they differ at stage $k+1$, because the next pairs are  $\langle u_{k+1}[\alpha_{k+1}], u'_{k+1}[\alpha_{k+1}]\rangle$ in $\pi$ and $\langle v_{k+1} [\beta_{k+1}], v'_{k+1}[\beta_{k+1}]\rangle$ in $\pi'$, with $u_{k+1}[\alpha_{k+1}]$ and $v_{k+1}[\beta_{k+1}]$ non-unifiable. But this is impossible since the pairs in the $\mathcal R$'s  arise from binary types and hence there cannot appear two pairs with the same first pair and different second pair by  Remark \ref{pons-asinorum}.
 \\ (iv)  This follows from Proposition~\ref{multiple}, since ${\mathcal R}(\sigma_1\multimap \sigma_2)$ and ${\mathcal R}(\tau)$ are partial involutions, and the set $\Pi$ of pairs of types generated by the trajectory which yields $\langle u[\alpha], v[\beta]\rangle$ is symmetric to the one yielding $\langle v[\beta], u[\alpha]\rangle$.
\end{proof}

Now we are in the position of proving that the set of pairs of variable occurrences arising from application, $\cdot$,  of partial involutions induces a binary type:

\begin{proposition}\label{make-type}
Let  $\sigma_1\multimap \sigma_2$ and $\tau$ be binary types, then ${\mathcal T}_ { Z}({\mathcal O}({\mathcal R}(\sigma_1\multimap \sigma_2)\cdot {\mathcal R}(\tau)))$ is a well-defined binary type.
\end{proposition}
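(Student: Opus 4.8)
The plan is to derive the statement directly from Lemma~\ref{coherence}, applied to $f={\mathcal R}(\sigma_1\multimap\sigma_2)$ and $g={\mathcal R}(\tau)$, together with the construction of ${\mathcal T}_Z$ in Proposition~\ref{occurence-vs-types}(iii). Two things must be verified: that the set of occurrences ${\mathcal O}(f\cdot g)$ satisfies the prefix hypothesis of Proposition~\ref{occurence-vs-types}(iii), so that ${\mathcal T}_Z({\mathcal O}(f\cdot g))$ is a well-defined type; and that this type is binary. All the genuine combinatorial work has already been carried out in Lemma~\ref{coherence}, so this proof is essentially an assembly of its four clauses.

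First I would establish well-definedness. By Definition~\ref{cdot}(iv), every element of ${\mathcal O}(f\cdot g)$ is the first component $u[\alpha]$ of some output pair $\langle u[\alpha],v[\alpha]\rangle\in f\cdot g$ (the two components carrying the same variable by Lemma~\ref{coherence}(i)). Given two \emph{distinct} occurrences $u[\alpha],u'[\alpha']\in{\mathcal O}(f\cdot g)$, they are first components of two output pairs which are necessarily distinct, since an ordered pair has a unique first component and hence equal pairs would force $u[\alpha]\equiv u'[\alpha']$. Lemma~\ref{coherence}(iii) then guarantees that $u[\alpha]$ and $u'[\alpha']$ are not unifiable, i.e.\ neither path is an initial prefix of the other. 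This is exactly the condition required in Proposition~\ref{occurence-vs-types}(iii), so ${\mathcal T}_Z({\mathcal O}(f\cdot g))$ is a well-defined type, the possibly missing leaves being tagged with fresh $Z$-variables.

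Next I would check binarity. By Lemma~\ref{coherence}(iv) the relation $f\cdot g$ is symmetric, and by the convention in Definition~\ref{cdot}(iii) variables occurring in different (unordered) output pairs are taken disjoint; combined with Lemma~\ref{coherence}(i), this means that each type variable $\alpha$ belongs to exactly one unordered pair $\{u[\alpha],v[\alpha]\}$. Its two ordered representatives $\langle u[\alpha],v[\alpha]\rangle$ and $\langle v[\alpha],u[\alpha]\rangle$ contribute precisely the two first components $u[\alpha]$ and $v[\alpha]$ to ${\mathcal O}(f\cdot g)$, and these are distinct by Lemma~\ref{coherence}(ii). Hence $\alpha$ labels exactly two leaves of ${\mathcal T}_Z({\mathcal O}(f\cdot g))$. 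Since the $Z$-variables introduced for the missing leaves are pairwise distinct by construction, every variable of the resulting type occurs at most twice, so the type is binary.

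The statement itself is therefore a straightforward corollary of Lemma~\ref{coherence}: the real difficulty lies entirely upstream, in the proof of that lemma (in particular clause (iii), whose argument rests, via Remark~\ref{pons-asinorum}, on the fact that in a binary type the same first component cannot be paired with two different second components). The only points demanding care at this stage are the bookkeeping identifications: translating the notion of unifiability into the ``initial prefix'' condition of Proposition~\ref{occurence-vs-types}(iii), and tracking the renaming convention of Definition~\ref{cdot}(iii) so that the count of occurrences of each variable comes out to at most two.
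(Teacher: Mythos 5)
Your proof is correct and follows essentially the same route as the paper's: Lemma~\ref{coherence} supplies the non-unifiability of distinct occurrences, Proposition~\ref{occurence-vs-types}(iii) then yields a well-defined type, and the disjointness convention of Definition~\ref{cdot}(iii) gives binarity. The only cosmetic difference is that the paper cites clause (ii) of Lemma~\ref{coherence} alongside clause (iii) to rule out unifiability of the two occurrences arising from one symmetric output pair, whereas you reach that case by treating the pair and its reverse as two distinct ordered pairs and applying clause (iii); either reading works.
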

\begin{proof} By Lemma \ref{coherence}(ii) and (iii), pairs of different occurrences in  $\mathcal O({\mathcal R}(\sigma_1\multimap \sigma_2)\cdot {\mathcal R}(\tau))$ are not unifiable and therefore, by Proposition~\ref{occurence-vs-types}(iii), they coherently define a type, where missing leaves are tagged by different fresh variables. ${\mathcal T}_Z({\mathcal O({\mathcal R}(\sigma_1\multimap \sigma_2)\cdot {\mathcal R}(\tau))})$ is binary by definition (variables in different pairs of the application are disjoint).
\end{proof} 

Finally, from the above proposition we immediately  have:

\begin{theorem}[Closure under application of $\mathcal I$]\label{clos}\hfill\\
For all binary types $\sigma_1 \multimap \sigma_2$ and $\tau$,
${\mathcal R}(\sigma_1 \multimap \sigma_2) \cdot {\mathcal R}(\tau) = {\mathcal R}({\mathcal T}_Z({\mathcal O}({\mathcal R}(\sigma_1\multimap \sigma_2)\cdot {\mathcal R}(\tau)))).$
\end{theorem}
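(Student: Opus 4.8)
The plan is to read both sides as sets of pairs of occurrences and to verify that the round-trip map ${\mathcal R}\circ{\mathcal T}_Z\circ{\mathcal O}$ acts as the identity on $A := {\mathcal R}(\sigma_1\multimap\sigma_2)\cdot{\mathcal R}(\tau)$. Write $S := {\mathcal O}(A)$ for the set of occurrences appearing in $A$ and $T := {\mathcal T}_Z(S)$. By Proposition~\ref{make-type}, $T$ is a well-defined binary type, the crucial input being that $S$ is coherent (no path is a prefix of a different occurrence's path), which is exactly Lemma~\ref{coherence}(ii) and~(iii); hence ${\mathcal R}(T)$ is a legitimate partial involution and both sides are objects of the same kind. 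The whole task then reduces to the set equality ${\mathcal R}(T)=A$.

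The first step is to track variable names through the construction of $T$. I would show, by induction on the length of $u$ and using coherence of $S$ (so that, descending along $u$, the recursion of ${\mathcal T}_Z$ reaches exactly the singleton base case $\{[\alpha]\}$), that for every occurrence $u[\alpha]\in S$ the leaf of $T$ at path $u$ carries precisely the variable $\alpha$; moreover every leaf of $T$ lying over a path not occurring in $S$ carries a fresh $Z$-variable, and all these $Z$-variables are pairwise distinct by the definition of ${\mathcal T}_Z$. This is essentially Proposition~\ref{occurence-vs-types}(iv) specialised to the occurrence set $S$.

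The second step is to match the pairings. Since the $Z$-variables are all distinct, a variable occurring twice in $T$ must be one of the original names carried by two occurrences $u[\alpha],v[\alpha]\in S$. Now by Lemma~\ref{coherence}(i) the two components of any pair of $A$ share the same variable, whereas by the disjointness convention in Definition~\ref{cdot}(iii) occurrences arising from different pairs of $A$ carry disjoint variables; hence two occurrences of $S$ bear the same name if and only if they are the two components of a single pair of $A$. Consequently a variable occurs exactly twice in $T$ precisely when its two occurrences form a pair of $A$, with no leaf name repeated otherwise. Taking ${\mathcal R}(T)$, which by definition collects all pairs of distinct occurrences of a common variable, therefore returns exactly the pairs $\langle u[\alpha],v[\alpha]\rangle\in A$; the matching of orderings is supplied by Lemma~\ref{coherence}(iv) on the left (so that $A$ is already symmetric) and by the symmetric definition of ${\mathcal R}$ on the right. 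This yields both inclusions simultaneously, establishing ${\mathcal R}(T)=A$.

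The only delicate point, and the place where care is needed, is the name-bookkeeping in the second step: one must be certain that ${\mathcal T}_Z$ neither identifies two genuinely distinct variables (excluded by coherence together with the freshness and distinctness of the $Z$-variables) nor separates the two occurrences of a single pair (excluded by Lemma~\ref{coherence}(i)). Once the coherence and binariness furnished by Proposition~\ref{make-type} are in hand, everything else is a direct unwinding of the definitions of ${\mathcal O}$, ${\mathcal T}_Z$ and ${\mathcal R}$, which is why the theorem follows immediately from that proposition.
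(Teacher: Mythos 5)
Your proposal is correct and follows the paper's own route: the paper derives Theorem~\ref{clos} ``immediately'' from Proposition~\ref{make-type} without further argument, and your write-up simply makes explicit the round-trip bookkeeping (coherence of the occurrence set via Lemma~\ref{coherence}(ii)--(iii), preservation of leaf names as in Proposition~\ref{occurence-vs-types}(iv), the disjointness convention of Definition~\ref{cdot}(iii), and symmetry via Lemma~\ref{coherence}(iv)) that the paper leaves implicit. No gaps.
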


\section{Relating GoI Semantics and Principal Types}\label{rgs}
This section is devoted to showing that the GoI semantics of a closed $\mathbf{CL}_A$-term coincides with the partial involution induced by its principal type. By Proposition~\ref{p6}, we have then that the GoI semantics of a closed $\lambda_A$-term corresponds to the partial involution induced by its principal type.

The above follows once we prove that  the partial involution obtained via the GoI application between partial involutions, induced by the principal types of the closed $\lambda_A$-terms $M,N$, corresponds to the principal type of the term $MN$. We already know, from the previous section, that ${\mathcal I}$ is closed under application, \ie\ that the result of the application $\mathcal{R}(\sigma_1 \multimap \sigma_2)\cdot \mathcal{R}(\tau)$ is a partial involution corresponding to a binary type. Here we are left to show that, if $\sigma_1$ and $\tau$ are unifiable via m.g.u. $\overline{U}$, then  $ {\mathcal R}(\sigma_1 \multimap \sigma_2) \cdot {\mathcal R}(\tau) = {\mathcal R}(\overline{U}(\sigma_2))$.
Intuitively, this is achieved by proving that, if $\sigma_1$ and $\tau$ are unifiable as types, by the m.g.u. $\overline{U}$, then the overall effect of the occ-unifiers arising from trajectories determined by the GoI application corresponds to that of $\overline{U}$.

\begin{lemma} \label{oplus} Let $U_1,U_2$ be substitutions such that, for all $i=1,2$, $U_i\leq \overline{U}$,  then the following substitution is well defined:
$$U_1 \oplus U_2 =MGU(\{\langle U_1(\beta ),U_{2}(\beta )\rangle \mid  \beta \in TVar\})\ .$$
\noindent Moreover, $U_1 \oplus U_2 \leq \overline{U}$.
\end{lemma}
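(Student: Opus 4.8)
The plan is to show that $\overline{U}$ is itself a unifier of the (essentially finite) system $E=\{\langle U_1(\beta),U_2(\beta)\rangle\mid \beta\in TVar\}$; once this is established, both claims fall out from the basic properties of the m.g.u. First I would observe that $U_1(\beta)=\beta=U_2(\beta)$ for every $\beta\notin \mathit{dom}(U_1)\cup \mathit{dom}(U_2)$, so all but finitely many pairs in $E$ are of the form $\langle\beta,\beta\rangle$ and are discarded by the second rule of the algorithm in Proposition~\ref{martelli}; hence $E$ reduces to a finite set and the unification algorithm is applicable.

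The heart of the argument is to prove that $\overline{U}\circ U_i=\overline{U}$ for $i=1,2$. From $U_i\leq\overline{U}$ we obtain substitutions $V_i$ with $V_i\circ U_i=\overline{U}$. Since $U_1,U_2$ are m.g.u.'s computed as in Proposition~\ref{martelli}, they are in solved form and therefore idempotent, i.e.\ $U_i\circ U_i=U_i$. Composing, $\overline{U}\circ U_i=(V_i\circ U_i)\circ U_i=V_i\circ(U_i\circ U_i)=V_i\circ U_i=\overline{U}$. Consequently $\overline{U}(U_1(\beta))=\overline{U}(\beta)=\overline{U}(U_2(\beta))$ for every $\beta$, so $\overline{U}$ unifies every pair of $E$.

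This immediately yields both halves of the statement. Since $\overline{U}$ is a unifier, $E$ is unifiable, so the algorithm terminates successfully and $U_1\oplus U_2=MGU(E)$ is a well-defined substitution. Moreover, being the most general unifier, $MGU(E)$ is more general than any unifier of $E$, and in particular $U_1\oplus U_2=MGU(E)\leq\overline{U}$, as required.

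I expect the main obstacle to be precisely the step $\overline{U}\circ U_i=\overline{U}$: it is false for arbitrary substitutions. For a non-idempotent renaming such as the transposition $\{x\mapsto y,\,y\mapsto x\}$ one has $\overline{U}\circ U_i\neq\overline{U}$, and for such $U_i$ the conclusion $U_1\oplus U_2\leq\overline{U}$ genuinely breaks down. The proof therefore relies essentially on the idempotency of $U_1,U_2$, which is guaranteed here because these substitutions arise as m.g.u.'s in solved form; making this hypothesis explicit is the only delicate point, the remaining steps being routine consequences of the defining property of the m.g.u.
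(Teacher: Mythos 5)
Your proof is correct, and it supplies exactly the content that the paper compresses into the single word ``straightforward'': reduce $E$ to a finite system, show that $\overline{U}$ unifies it via $\overline{U}\circ U_i=\overline{U}$, and conclude by the defining property of the m.g.u. The one point where you go beyond the paper is also the most valuable one: you correctly observe that the step $\overline{U}\circ U_i=\overline{U}$, and with it the lemma as literally stated, fails for arbitrary substitutions (with $U_1$ the transposition $\{x\mapsto y,\,y\mapsto x\}$, $U_2$ the identity and $\overline{U}=U_1$, the set $E$ has m.g.u. $\{x\mapsto y\}$, which is not $\leq\overline{U}$), so an idempotency hypothesis on $U_1,U_2$ is genuinely needed. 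This hypothesis is satisfied where the lemma is used in Proposition~\ref{cuore}, since the $U_{\pi_i}$ there are most general unifiers in solved form produced by the algorithm of Proposition~\ref{martelli}, but the paper never makes it explicit; your proof is the more honest one.
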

\begin{proof}
Straightforward.
\end{proof}

Notice that  associativity of $\oplus$ follows from the non-deterministic nature of the MGU-algorithm.

\begin{proposition}  \label{cuore} Let $\sigma_1\multimap \sigma_2$ and $\tau$ be binary types,  let
$\Theta$ be the set of type-variables in $TVar(\sigma_1 \multimap \sigma_2,\tau)$ which are not involved in any trajectory of  ${\mathcal R}(\sigma_1\multimap \sigma_2)\cdot {\mathcal R}(\tau)$,
and let $U_{\pi_1}, \ldots , U_{\pi_n} $ be the unifiers arising from all trajectories $\pi_1, \ldots , \pi_n$ of ${\mathcal R}(\sigma_1\multimap \sigma_2)\cdot {\mathcal R}(\tau)$. If
$\sigma_1$ and $\tau$ are unifiable with m.g.u. $\overline{U}$, then:\\
(i) $U_{\pi_i}\leq \overline{U}$ for all $i$;\\
(ii) $\bigoplus_i U_{\pi_i} =\overline{U}_{\restriction{(TVar\setminus \Theta)}}$;\\
 (iii)  ${\mathcal T}_Z({\mathcal O}({\mathcal R}(\sigma_1\multimap \sigma_2)\cdot {\mathcal R}(\tau)))=\overline{U}(\sigma_2)$.
\end{proposition}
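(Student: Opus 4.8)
The guiding idea is that a trajectory of ${\mathcal R}(\sigma_1\multimap\sigma_2)\cdot{\mathcal R}(\tau)$ is a ``bottom-up'' trace of the unification of $\sigma_1$ with $\tau$: by the $f_{rl}/g/f_{ll}/f_{lr}$ typing of its pairs (Proposition~\ref{multiple}), each junction $\langle u'_i[\alpha_i],u_{i+1}[\alpha_{i+1}]\rangle$ pairs an occurrence lying in $\sigma_1$ with one lying in $\tau$, the two sides alternating along the trajectory. I would first dispatch~(i). To show $U_{\pi_i}\le\overline U$ it suffices, by minimality of the m.g.u.\ $U_{\pi_i}$ of the set $\Pi$, to check that $\overline U$ is itself a unifier of $\Pi$. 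Fix a junction pair $\langle {\mathcal T}_Z(u'_i[\alpha_i]),{\mathcal T}_Z(u_{i+1}[\alpha_{i+1}])\rangle$. Its two paths are prefix-comparable (the occurrences are unifiable), one sitting in $\sigma_1$ and the other in $\tau$; since $\overline U(\sigma_1)=\overline U(\tau)$, the subtrees rooted at these comparable positions coincide after applying $\overline U$. Extending $\overline U$ on the fresh $Z$-variables of the two spine types ${\mathcal T}_Z(\cdot)$ to the corresponding $\overline U$-subtrees (these $Z$-variables are distinct across pairs, so there is no conflict) then makes each pair unified, whence $\overline U$ unifies $\Pi$ and (i) follows.

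For~(ii), the inequality $\bigoplus_i U_{\pi_i}\le\overline U$ is immediate from~(i) together with Lemma~\ref{oplus}, which also guarantees that the iterated merge is well defined (associativity being noted after that lemma). The substance is the reverse inclusion on $TVar\setminus\Theta$, and here the $\Theta$-free $Z$-ancestors ${\mathcal F}^\Theta_Z(\sigma_1)$ and ${\mathcal F}^\Theta_Z(\tau)$ of Definition~\ref{ancestor} are the natural objects against which to compare: I would show that $\bigoplus_i U_{\pi_i}$ is a genuine unifier of ${\mathcal F}^\Theta_Z(\sigma_1)$ and ${\mathcal F}^\Theta_Z(\tau)$, so that $\overline U_{\restriction(TVar\setminus\Theta)}\le\bigoplus_i U_{\pi_i}$ by minimality, which with the opposite inequality yields~(ii) up to injective renaming. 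Concretely, one matches a run of the Martelli--Montanari algorithm (Proposition~\ref{martelli}) on $\{\langle\sigma_1,\tau\rangle\}$ against the trajectory structure: each decomposition step corresponds to descending in parallel into $\sigma_1$ and $\tau$, and each variable-elimination step $\langle\alpha,\sigma\rangle$ corresponds to following, through the involutions, the second occurrence of the variable carried at that position. Because the types are binary, Remark~\ref{pons-asinorum} and Lemma~\ref{coherence} ensure that these connections are deterministic and coherent, so the local binding produced at a junction is consistent with every other junction sharing that variable; the role of $\oplus$ is exactly to reassemble these \emph{local} bindings into the \emph{global} substitution that M--M computes, leaving untouched precisely the variables that no trajectory visits, i.e.\ those in $\Theta$.

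I expect this last correspondence to be the main obstacle: M--M produces a single global substitution, whereas each $U_{\pi_i}$ records only the unification performed along one trajectory, so the delicate point is to verify that $\bigoplus_i U_{\pi_i}$ neither over- nor under-constrains any non-$\Theta$ variable. In particular, variables trapped in closed (non-output) cycles, which $\overline U$ may still instantiate, must be shown to be exactly those quotiented away by the restriction to $TVar\setminus\Theta$, and one must check that such cycles never reach $\sigma_2$. Granting~(ii), part~(iii) follows: by Proposition~\ref{make-type} the set ${\mathcal O}({\mathcal R}(\sigma_1\multimap\sigma_2)\cdot{\mathcal R}(\tau))$ is prefix-coherent, and by Proposition~\ref{multiple} its elements are the images $U_{\pi_i}(u[\alpha])$ of the $\sigma_2$-endpoints $u[\alpha]$ of the trajectories. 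Each leaf $\alpha$ of $\sigma_2$ is either the endpoint of some trajectory, in which case $U_{\pi_i}$ reshapes the occurrence $u[\alpha]$ exactly as $\overline U$ instantiates $\alpha$ by~(ii), or it carries a variable in $\Theta$, which $\overline U$ leaves alone and which ${\mathcal T}_Z$ fills with a fresh variable on both sides. Hence ${\mathcal O}({\mathcal R}(\sigma_1\multimap\sigma_2)\cdot{\mathcal R}(\tau))$ and ${\mathcal O}(\overline U(\sigma_2))$ agree up to injective renaming, and applying ${\mathcal T}_Z$ together with Proposition~\ref{occurence-vs-types}(iv) gives ${\mathcal T}_Z({\mathcal O}({\mathcal R}(\sigma_1\multimap\sigma_2)\cdot{\mathcal R}(\tau)))=\overline U(\sigma_2)$.
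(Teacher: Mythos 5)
Your proposal follows essentially the same route as the paper's proof: (i) from the minimality of the m.g.u.\ via Proposition~\ref{multiple}, (ii) by combining Lemma~\ref{oplus} with the identification of $\bigoplus_i U_{\pi_i}$ as the m.g.u.\ of the $\Theta$-free $Z$-ancestors ${\mathcal F}^\Theta_Z(\sigma_1)$ and ${\mathcal F}^\Theta_Z(\tau)$, and (iii) by observing that the output occurrences realize $\bigoplus_i U_{\pi_i}(\sigma_2)$ while $\Theta$-variables are harmlessly renamed. If anything, you spell out more carefully than the paper does the delicate correspondence between the Martelli--Montanari run and the trajectory structure, which the paper's proof asserts rather tersely.
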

\begin{proof} \hfill
\\ (i) This follows from Proposition~\ref{multiple}.
\\ (ii) By Lemma~\ref{oplus}, $\bigoplus_i U_{\pi_i} \leq \overline{U}$.  By Proposition~\ref{multiple}, each $U_{\pi_i}$ is the m.g.u. of the set of pairs of types arising from the occurrences in the trajectory $\pi_i$. Hence, since $\sigma_1 $ and $\tau$ are unifiable, $\bigoplus_i U_{\pi_i}$ is the m.g.u. of the types $\mathcal{F}^{\Theta}_Z (\sigma_1) $ and $\mathcal{F}^{\Theta}_Z (\tau) $, and
$MGU(\mathcal{F}^{\Theta}_Z (\sigma_1) , \mathcal{F}^{\Theta}_Z (\tau) )= (MGU(\sigma_1, \tau ))_{\restriction{TVar \setminus \Theta}}$. Therefore, $\bigoplus_i U_{\pi_i} =\overline{U}_{\restriction{(TVar\setminus \Theta)}}$.
\\ (iii) First of all, notice that $\overline{U}_{\restriction{(TVar\setminus \Theta)}} (\sigma_2)= \overline{U}(\sigma_2)$ (up-to injective renaming of variables). Moreover,
${\mathcal T}_Z({\mathcal O}({\mathcal R}(\sigma_1\multimap \sigma_2)\cdot {\mathcal R}(\tau)))= \bigoplus_i U_{\pi_i} (\sigma_2)$, since output occurrences of different non-symmetric trajectories are not unifiable. Then the thesis follows by item (ii) of this proposition.
\end{proof}

The following lemma amounts to the main result of this section.

\begin{lemma}\label{la}
Let $\sigma_1\multimap \sigma_2$ and $\tau$ be binary types such that $\sigma_1$ and $\tau$ are unifiable with m.g.u. $\overline{U}$, then
\[ {\mathcal R}(\sigma_1 \multimap \sigma_2) \cdot {\mathcal R}(\tau) = {\mathcal R}(\overline{U}(\sigma_2))\ .\]
\end{lemma}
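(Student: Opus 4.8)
The plan is to obtain the lemma directly by chaining the two immediately preceding results, Theorem~\ref{clos} and Proposition~\ref{cuore}(iii), whose hypotheses are precisely those assumed here. The content of the statement is that linear application of the involutions induced by binary types is exactly type resolution: one unifies the domain $\sigma_1$ of the operator's type against the operand's type $\tau$ and applies the resulting m.g.u.\ $\overline{U}$ to the codomain $\sigma_2$.

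First I would invoke Theorem~\ref{clos}, which holds for \emph{all} binary types $\sigma_1 \multimap \sigma_2$ and $\tau$ (it needs no unifiability assumption), to rewrite the left-hand side as
\[ {\mathcal R}(\sigma_1 \multimap \sigma_2) \cdot {\mathcal R}(\tau) = {\mathcal R}\big({\mathcal T}_Z({\mathcal O}({\mathcal R}(\sigma_1\multimap \sigma_2)\cdot {\mathcal R}(\tau)))\big) \ . \]
This reduces the claim to identifying the reconstructed type ${\mathcal T}_Z({\mathcal O}({\mathcal R}(\sigma_1\multimap \sigma_2)\cdot {\mathcal R}(\tau)))$ with $\overline{U}(\sigma_2)$. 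I would then apply Proposition~\ref{cuore}(iii), which under the standing hypothesis that $\sigma_1$ and $\tau$ unify with m.g.u.\ $\overline{U}$ gives precisely
\[ {\mathcal T}_Z({\mathcal O}({\mathcal R}(\sigma_1\multimap \sigma_2)\cdot {\mathcal R}(\tau))) = \overline{U}(\sigma_2) \ . \]
Substituting this equality into the previous display, and using that ${\mathcal R}$ is well defined on binary types up-to the injective renaming of variables built into equality of types, yields ${\mathcal R}(\sigma_1 \multimap \sigma_2) \cdot {\mathcal R}(\tau) = {\mathcal R}(\overline{U}(\sigma_2))$, which is the thesis.

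At this level the argument is a one-step combination, and I expect no obstacle in carrying it out. The genuine work sits upstream, in Proposition~\ref{cuore}: its item (ii) is where one must show that the join $\bigoplus_i U_{\pi_i}$ of the occurrence-unifiers collected along all trajectories coincides with $\overline{U}$ off the erased variables $\Theta$ --- that is, that the bottom-up, variable-occurrence-oriented unification performed by the Execution Formula reproduces the ordinary top-down m.g.u.\ of $\sigma_1$ and $\tau$. Together with Lemma~\ref{coherence} (which guarantees that the output occurrences assemble into a well-defined binary type, so that ${\mathcal T}_Z({\mathcal O}(\cdot))$ makes sense) and the associativity of $\oplus$, that is where all the care lies; the present lemma merely repackages these facts into the desired equation.
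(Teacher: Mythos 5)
Your proposal is correct and follows exactly the paper's own proof, which reads ``By item (iii) of Proposition~\ref{cuore} and Theorem~\ref{clos}''; your additional remarks about where the real work lies (Proposition~\ref{cuore}(ii) and Lemma~\ref{coherence}) accurately locate the substance upstream of this one-step combination.
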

\begin{proof}
By item (iii) of Proposition~\ref{cuore} and Theorem~\ref{clos}.
\end{proof}

\begin{thermo}
For any closed  term $M$  of the affine combinatory logic, we have:  $\pI{M} = {\mathcal R}(\sigma)$, where $\sigma$ is the principal type of $(M)_{\lambda}$.
\end{thermo}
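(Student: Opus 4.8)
The plan is to prove the statement by induction on the structure of the closed combinatory term $M$, exploiting the fact that the GoI semantics $\pI{\ }$ is itself defined by induction on $\mathbf{CL}_A^0$ (Definition~\ref{GoIsem}(i)), and that the principal type of $(M)_{\lambda}$ is computed compositionally through the $(app)$-rule of $\Vdash_A$. For the base cases $M\in\{\mathbf{B},\mathbf{C},\mathbf{I},\mathbf{K}\}$ I would simply compute ${\mathcal R}(\sigma)$ for each of the four principal types listed in Proposition~\ref{ptycomb} and check, by inspection of the variable occurrences, that it coincides with the partial involution prescribed in Definition~\ref{GoIsem}(i). For instance, the type $\alpha\multimap\alpha$ of $(\mathbf{I})_{\lambda}$ has its two occurrences of $\alpha$ at the paths $l$ and $r$, so ${\mathcal R}(\alpha\multimap\alpha)=\{l\alpha\leftrightarrow r\alpha\}=\pI{\mathbf{I}}$; the other three are entirely analogous. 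This is exactly the correspondence already announced in the remark following Definition~\ref{GoIsem}.

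For the inductive step $M\equiv M_1M_2$, I would take $\sigma$ and $\tau$ to be the principal types of $(M_1)_{\lambda}$ and $(M_2)_{\lambda}$, chosen with disjoint type variables (by injective renaming, using Proposition~\ref{unique}(i)). By the induction hypothesis $\pI{M_1}={\mathcal R}(\sigma)$ and $\pI{M_2}={\mathcal R}(\tau)$, so that $\pI{M_1M_2}=\pI{M_1}\cdot\pI{M_2}={\mathcal R}(\sigma)\cdot{\mathcal R}(\tau)$. Since $(M_1)_{\lambda}$ is closed, its principal type is an arrow type $\sigma=\sigma_1\multimap\sigma_2$ (a bare variable type is impossible, as it would make the term inhabit every type). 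Moreover $(M_1M_2)_{\lambda}=(M_1)_{\lambda}(M_2)_{\lambda}$ is an affine $\lambda$-term, hence typable by Corollary~\ref{unic}; inspecting the $(app)$-rule, this forces $\sigma_1$ and $\tau$ to be unifiable, say with m.g.u.\ $\overline{U}$. Lemma~\ref{la} then yields ${\mathcal R}(\sigma_1\multimap\sigma_2)\cdot{\mathcal R}(\tau)={\mathcal R}(\overline{U}(\sigma_2))$, so that it only remains to identify $\overline{U}(\sigma_2)$ with the principal type of $(M_1M_2)_{\lambda}$.

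This last identification is where the two accounts of application must be reconciled. The principal type of $(M_1)_{\lambda}(M_2)_{\lambda}$ is obtained by the $(app)$-rule from the premises $\Vdash_A(M_1)_{\lambda}:\sigma_1\multimap\sigma_2$ and $\Vdash_A(M_2)_{\lambda}:\tau$ (unique up to renaming by Corollary~\ref{unic}), whose contexts are empty, so the domain- and type-variable-disjointness side conditions are vacuous. Because $\sigma=\sigma_1\multimap\sigma_2$ is already an arrow, the preprocessing unifier $U'=MGU(\sigma,\alpha\multimap\beta)$ acts only on the fresh variables, binding $\alpha\mapsto\sigma_1$ and $\beta\mapsto\sigma_2$; hence $U'(\alpha)=\sigma_1$, $U=MGU(U'(\alpha),\tau)=MGU(\sigma_1,\tau)=\overline{U}$, and the resulting type $U\circ U'(\beta)=\overline{U}(\sigma_2)$. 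This matches the output of Lemma~\ref{la}, closing the induction and giving $\pI{M_1M_2}={\mathcal R}(\overline{U}(\sigma_2))={\mathcal R}(\text{principal type of }(M_1M_2)_{\lambda})$.

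I expect the main obstacle to be precisely this reconciliation of the $(app)$-rule's two-stage unification (first $U'$ against a fresh $\alpha\multimap\beta$, then $U$ against $\tau$) with the single m.g.u.\ $\overline{U}$ appearing in Lemma~\ref{la}. The delicate points will be to justify rigorously that for a closed operator the type $\sigma$ is a genuine arrow, so that the first stage $U'$ is inessential, and that the freshness and disjointness bookkeeping imposed by the $(app)$-rule leaves $\overline{U}(\sigma_2)$ invariant up to injective renaming of type variables. By contrast, the base-case verifications and the appeal to Lemma~\ref{la} are routine.
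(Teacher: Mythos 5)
Your proposal is correct and follows essentially the same route as the paper: induction on the structure of $M$, direct verification for the base combinators against Proposition~\ref{ptycomb}, and an appeal to Lemma~\ref{la} for the application case. The paper's own proof is just a terse version of yours; the extra care you take in showing that the operator's principal type is an arrow and that the two-stage unification of the $(app)$-rule collapses to the single m.g.u.\ $\overline{U}$ of Lemma~\ref{la} fills in details the paper leaves implicit.
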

\begin{proof}
By induction on the structure of $M$. For $M$ a base combinator, one can directly check that the partial involution interpreting $M$ coincides with the relation induced by its principal type.
If $M\equiv M_1M_2$, then, by induction hypothesis, $\pI{M_1} = {\mathcal R}(\sigma_1 \multimap \sigma_2)$ and $\pI{M_2} = {\mathcal R}(\tau)$, where $\sigma_1 \multimap \sigma_2$ and $\tau$ are the principal types of $M_1$ and $M_2$, respectively. The thesis follows by Lemma~\ref{la}.
\end{proof}

Finally, from Proposition~\ref{p6}, we have:

\begin{corollary}\label{finale}
For any $M\in {\mathbf \Lambda}_A^0$, we have: $\pI{M} ={\mathcal R}(\sigma)$, where $\sigma$ is the principal type of $M$.
\end{corollary}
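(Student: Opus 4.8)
The plan is to read this statement off directly as a corollary of the two results that immediately precede it, with no new combinatorial content required. First I would unfold the definition of the GoI semantics on $\lambda$-terms: for $M\in{\mathbf \Lambda}_A^0$, Definition~\ref{GoIsem}(ii) gives $\pI{M}=\pI{(M)_{CL}}$, so it suffices to compute the interpretation of the associated closed combinatory term $(M)_{CL}\in\mathbf{CL}_A^0$.

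Next I would invoke the preceding Theorem (the one stating $\pI{N}={\mathcal R}(\sigma)$ for closed $N\in\mathbf{CL}_A$, with $\sigma$ the principal type of $(N)_\lambda$) applied to $N\equiv(M)_{CL}$. This yields $\pI{(M)_{CL}}={\mathcal R}(\sigma')$, where $\sigma'$ is the principal type of $((M)_{CL})_\lambda$. It then remains only to identify $\sigma'$ with the principal type $\sigma$ of $M$ itself, and this is exactly what Proposition~\ref{p6} provides: since $\Gamma\Vdash_A M:\rho \iff \Gamma\Vdash_A ((M)_{CL})_\lambda:\rho$, the terms $M$ and $((M)_{CL})_\lambda$ share the same principal type, so $\sigma'$ and $\sigma$ coincide up to injective renaming of type variables (uniqueness being guaranteed by Corollary~\ref{unic}). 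Chaining the three equalities gives $\pI{M}={\mathcal R}(\sigma)$.

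The one point worth a sentence of care --- and essentially the only thing to verify --- is that the statement is well posed modulo the ``up to injective renaming'' that pervades this development: the principal type is unique only up to such renaming (Corollary~\ref{unic}), so I would note that ${\mathcal R}$ is invariant under injective renaming of variables, since an injective renaming permutes the variable tags while leaving the underlying paths --- and hence the pairing of occurrences of a common variable --- unchanged. With that observation the identification $\sigma'=\sigma$ transfers to ${\mathcal R}(\sigma')={\mathcal R}(\sigma)$. There is no genuine obstacle here: all the substantive work --- closure of ${\mathcal I}$ under application (Theorem~\ref{clos}), the agreement of GoI application with unification of principal types (Lemma~\ref{la}), and the preservation of principal types under the combinatory encoding (Proposition~\ref{p6}) --- has already been carried out, so this corollary is purely a matter of assembling them.
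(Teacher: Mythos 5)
Your proposal is correct and follows exactly the paper's route: the corollary is obtained by unfolding $\pI{M}=\pI{(M)_{CL}}$, applying the preceding theorem to $(M)_{CL}$, and using Proposition~\ref{p6} to identify the principal type of $((M)_{CL})_\lambda$ with that of $M$. Your additional remark that ${\mathcal R}$ is invariant under injective renaming is a sensible piece of care that the paper leaves implicit.
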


As a consequence, we have also:

\begin{proposition}
The partial involutions of the combinatory algebra ${\mathcal I}$ which are denotations of closed $\lambda_A$-terms are those induced by their principal types.
\end{proposition}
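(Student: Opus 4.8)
The plan is to obtain this proposition as an immediate repackaging of Corollary~\ref{finale}. That corollary already asserts that for every $M \in {\mathbf \Lambda}_A^0$ the denotation $\pI{M}$ equals ${\mathcal R}(\sigma)$, with $\sigma$ the principal type of $M$. Since the proposition merely asks to identify the class of partial involutions in ${\mathcal I}$ arising as denotations of closed $\lambda_A$-terms with the class of those induced by the principal types of such terms, the single equation supplied by Corollary~\ref{finale} settles both inclusions at once. Note in particular that the denotations form a \emph{proper} subset of ${\mathcal I}$ (not every binary type is the principal type of a closed term), so there is genuine content in pinning down exactly which partial involutions occur, even though the argument is short.

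Concretely, I would argue as follows. For the first inclusion, take any $f \in {\mathcal I}$ of the form $f = \pI{M}$ for some $M \in {\mathbf \Lambda}_A^0$; by Corollary~\ref{unic}, $M$ has a principal type $\sigma$, unique up to injective renaming of type variables, and Corollary~\ref{finale} yields $f = {\mathcal R}(\sigma)$, so $f$ is induced by the principal type of $M$. For the reverse inclusion, if $f = {\mathcal R}(\sigma)$ with $\sigma$ the principal type of some $M \in {\mathbf \Lambda}_A^0$, the same corollary gives $f = \pI{M}$, exhibiting $f$ as a denotation. Hence the two classes coincide.

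The one point that deserves a line of care, rather than a genuine obstacle, is the renaming convention: both $\pI{\ }$ and ${\mathcal R}(\ )$ are well-defined only up to injective renaming of (type) variables, so the identity $\pI{M} = {\mathcal R}(\sigma)$ should be read modulo such renaming, exactly as in Corollary~\ref{finale} and consistently with Proposition~\ref{occurence-vs-types}(iv). Since all the substantial work---closure of ${\mathcal I}$ under application (Theorem~\ref{clos}), the match between GoI application and type resolution (Lemma~\ref{la}), and the passage between $\lambda$-terms and combinatory terms (Proposition~\ref{p6})---has already been carried out in establishing Corollary~\ref{finale}, there is no remaining difficulty; the proposition is a direct corollary, and I expect its proof to consist of little more than the two inclusions above.
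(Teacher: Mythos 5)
Your proposal is correct and matches the paper exactly: the paper offers no separate argument, presenting the proposition simply as an immediate consequence of Corollary~\ref{finale}, which is precisely the repackaging into two inclusions that you spell out. The extra remarks on uniqueness of principal types and on reading the identity up to injective renaming are consistent with the paper's conventions and do not change the argument.
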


This provides an answer to the open problem raised in \cite{Abr05}.

Another consequence of Corollary~\ref{finale} is  that, since principal types induce an affine combinatory algebra which is a $\lambda$-algebra only on the linear fragment of $\lambda$-calculus, the same holds for the algebra of partial involutions.
Hence the algebra of partial involutions  fails to be a $\lambda$-algebra already on the affine fragment, without considering the \emph{replication} operator used for accommodating exponentials, see \cite{Abr05}.

We conclude this section by showing why we need to refer to ancestor types in establishing the exact relationship  between resolution of principal types and GoI application of the corresponding partial involutions. There exist types which do not produce any resolution, but nonetheless when their respective partial involutions are GoI applied, do yield in fact some result. Ancestor types permit to overcome this mismatch. Namely there exist ancestor types of the types which do not resolve, which do in fact resolve and yield exactly the type whose corresponding partial involution is the result of the GoI application of their corresponding involutions.

\begin{example} Consider the following types which yield the empty resolution:
 \[\sigma \equiv ((\alpha \multimap \beta)\multimap (\gamma \multimap(\gamma \multimap \delta)\multimap \delta))\multimap \alpha \multimap \beta \ \ \ \ \ \ \ \ \ \ \
\tau \equiv (\alpha \multimap \alpha)\multimap (\gamma \multimap \gamma).\]
However we have:\\
${\mathcal R}(\sigma) =\{ rlx\! \leftrightarrow\! lllx, rrx\!\leftrightarrow\!llrx, lrlx\!\leftrightarrow\! lrrlx, lrrlrx\!\leftrightarrow\! lrrrx\}$\\
$ \mathcal R ( \tau) = \{ llx\leftrightarrow lrx, rlx\leftrightarrow rrx\}$\\
$ \mathcal R(\sigma)\cdot \mathcal R (\tau)= \{lx\leftrightarrow rx\}$

\noindent The effect of GoI application can be achieved considering suitable \emph{ancestral types} $\sigma'$ of $\sigma$ and $\tau'$ of $\tau$, which yield the resolution $\alpha\multimap \alpha$ as follows:
\[\sigma'\equiv ((\alpha \multimap \beta)\multimap \gamma)\multimap \alpha \multimap \beta\ \ \ \ \ \ \ \ \ \ \ \tau' \equiv (\alpha \multimap \alpha)\multimap \gamma\]
${\mathcal R}(\sigma') =\{ rlx\! \leftrightarrow\! lllx, rrx\!\leftrightarrow\!llrx\}$\\
$ \mathcal R ( \tau') = \{ llx\leftrightarrow lrx\}$\\
$ \mathcal R(\sigma)\cdot \mathcal R (\tau) = \mathcal R(\sigma')\cdot \mathcal R (\tau')= \{lx\leftrightarrow rx\}.$
\end{example}

\subsection{Two notions of unification}\label{twoviews}
In this section we formally state,   what was pointed out also in \eg\ \cite {H23}, namely that GoI application gives rise to a bottom-up {\em variable-occurrence oriented}  characterization of unification alternate to the standard one.  For simplicity, we consider only binary types where each variable occurs exactly twice. We put:
\begin{definition}[GoI-unification] Let $\sigma,\tau\in T_\Sigma$ be types.
The types $\sigma$ and $\tau$ {\em GoI-unify} if
\\ (i) for every  $\langle u[\alpha],v[\alpha]\rangle \in {\mathcal{R}(\sigma)} $
there exists $\langle u'[\gamma],v'[\gamma]\rangle \in {\mathcal R}(\tau)\hat{;}({\mathcal R}(\sigma)\hat{;}{\mathcal R}(\tau))^*$, such that $uw=u'$ and $vw=v'$, and
\\ (ii) for every  $(u[\alpha],v[\alpha])\in {\mathcal{R}(\tau)}$
there exists $\langle u'[\gamma],v'[\gamma]\rangle \in {\mathcal R}(\sigma)\hat{;}({\mathcal R}(\tau)\hat{;}{\mathcal R}(\sigma))^*$, such that $uw=u'$ and $vw=v'$.

\noindent {\em I.e.:}  \[{\mathcal R}(\tau)\widehat{\subseteq}{\mathcal R}(\sigma)\hat{;}({\mathcal R}(\tau)\hat{;}{\mathcal R}(\sigma))^* \ \ \mbox{ and } \ \ {\mathcal R}(\sigma)\widehat{\subseteq}{\mathcal R}(\tau)\hat{;}({\mathcal R}(\sigma)\hat{;}{\mathcal R}(\tau))^*\]
where $\widehat{\subseteq}$ denotes ``inclusion up-to substitution''.
\end{definition}
\begin{proposition}
Let $\sigma,\tau\in T_\Sigma$ be binary types where each variable occurs exactly twice. Then $\sigma,\tau$ unify if and only if $\sigma,\tau$ GoI unify.
\end{proposition}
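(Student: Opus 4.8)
The plan is to read both ``unify'' and ``GoI-unify'' through the common lens of \emph{trajectories} (Proposition~\ref{multiple}) adapted to the symmetric feedback, together with the occurrence-unifier calculus of Lemma~\ref{oplus} and Proposition~\ref{cuore}. Since each variable occurs exactly twice, ${\mathcal R}(\sigma)$ and ${\mathcal R}(\tau)$ are fixed-point-free involutions that pair up \emph{all} leaf occurrences; hence the expression ${\mathcal R}(\tau)\hat{;}({\mathcal R}(\sigma)\hat{;}{\mathcal R}(\tau))^*$ enumerates exactly the maximal alternating bounce-paths obtained by starting at an occurrence, following the pairing of $\tau$, then of $\sigma$, and so on, each step being legal only when the connecting occurrences are unifiable (one a prefix of the other). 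A pair $\langle u'[\gamma],v'[\gamma]\rangle$ lies in this set precisely when such a trajectory carries $u$ to $u'$ and $v$ to $v'$ after extension by a common suffix $w$, and the relation $\widehat{\subseteq}$ is just the assertion that every pairing of one involution is realised, up to such a suffix, by a trajectory through the other. The crucial structural remark is that here a trajectory can terminate \emph{only} when the next connecting occurrences become incomparable (the paths diverge), never for lack of a partner.

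For the direction $(\Rightarrow)$, I would assume $\sigma,\tau$ unify with m.g.u.\ $\overline{U}$ and set $\rho=\overline{U}(\sigma)=\overline{U}(\tau)$. Fixing a pairing $\langle u[\alpha],v[\alpha]\rangle\in{\mathcal R}(\sigma)$, I trace it inside $\rho$: the substitution $\overline{U}$ expands the two occurrences of $\alpha$ into matching copies of the finite subtree $\overline{U}(\alpha)$, while the tree of $\tau$ coincides with $\rho$ at every position. Reading off how the variable-pairings of $\tau$ connect these expanded positions produces, step by step, an alternating trajectory through $\tau$ and $\sigma$ whose output is $\langle u[\alpha],v[\alpha]\rangle$ extended by the suffix $w$ selecting a leaf of $\overline{U}(\alpha)$. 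Collecting these over all leaves gives ${\mathcal R}(\sigma)\widehat{\subseteq}{\mathcal R}(\tau)\hat{;}({\mathcal R}(\sigma)\hat{;}{\mathcal R}(\tau))^*$, and the symmetric argument gives the other inclusion. Finiteness of each trajectory is free because $\overline{U}(\alpha)$ is a finite tree (the m.g.u.\ passes the occurs-check), so the $*$-closure bites only finitely often.

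For the direction $(\Leftarrow)$, I would assume GoI-unification and build a unifier directly from the (now guaranteed terminating) trajectories, as in Proposition~\ref{cuore}: each trajectory $\pi$ carries an occ-unifier $U_\pi$, and I assemble the $U_\pi$ into a single substitution, verifying \emph{directly} that it identifies $\sigma$ and $\tau$ position-by-position. The two coverage conditions ensure that every pairing of ${\mathcal R}(\sigma)$ and of ${\mathcal R}(\tau)$ is reproduced, up to the common suffix $w$, by a bounce through the other type, which forces the two trees to agree after substitution; Lemma~\ref{coherence} (non-unifiability of distinct outputs and irreflexivity of the $\mathcal R$'s) guarantees that the assembly is clash-free, so the union is a well-defined substitution with $\overline{U}(\sigma)=\overline{U}(\tau)$. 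Only \emph{a posteriori}, once unification is established, does Lemma~\ref{oplus} apply to recognise this substitution as $\bigoplus_\pi U_\pi$; I must avoid invoking that lemma prematurely, since its hypothesis already presupposes a common upper bound.

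The main obstacle I expect is in $(\Leftarrow)$, namely ruling out a spurious occurs-check: one must show that the bounded, terminating trajectories granted by GoI-unification cannot encode a cyclic binding $\alpha\mapsto t$ with $\alpha$ occurring properly in $t$. Because each variable occurs exactly twice the alternating bounce is deterministic, so a cyclic binding would manifest as a strictly path-lengthening trajectory that never reaches a divergence point; making this contradiction precise --- tracking how the suffix $w$ grows along the trajectory and showing it must stabilise for the corresponding output to land in the $*$-closure and match a pairing of ${\mathcal R}(\sigma)$ --- is the delicate combinatorial core. The surrounding bookkeeping of ``inclusion up-to substitution'' $\widehat{\subseteq}$, i.e.\ keeping the common suffix synchronised on both legs $u\mapsto uw=u'$ and $v\mapsto vw=v'$ of each pairing, is the routine-but-error-prone ingredient underlying both directions.
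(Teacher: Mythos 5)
Your plan is a legitimate proof strategy, but it takes a genuinely different --- and substantially heavier --- route than the paper. The paper does not argue directly on trajectories at all: it \emph{reduces} the statement to the already-established correspondence between GoI application and resolution. For each variable $\alpha$ of $\sigma$ with occurrences $u[\alpha],v[\alpha]$, it splits them into $\alpha_1,\alpha_2$ and forms the auxiliary type $\sigma[\alpha_1,\alpha_2]\multimap\alpha_1\multimap\alpha_2$, so that the feedback loop ${\mathcal R}(\sigma)\hat{;}({\mathcal R}(\tau)\hat{;}{\mathcal R}(\sigma))^*$ of the GoI-unification definition becomes literally the $f_{ll}\hat{;}g$ loop of an ordinary linear application ${\mathcal R}(\sigma[\alpha_1,\alpha_2]\multimap\alpha_1\multimap\alpha_2)\cdot{\mathcal R}(\tau)$, with the exposed $\alpha_1\multimap\alpha_2$ reading off the computed value of $U(\alpha)$. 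Both directions of the equivalence then fall out of Proposition~\ref{cuore} and Theorem~\ref{clos}. What that reduction buys is precisely the two points you yourself flag as the ``delicate combinatorial core'': termination of the bounce paths and the absence of a spurious occurs-check are not re-proved but inherited from Lemma~\ref{coherence}, Proposition~\ref{multiple} and Proposition~\ref{cuore}, where that work was already done once. Your direct argument is more self-contained and arguably more illuminating about \emph{why} the two notions coincide, but as written it leaves exactly those hardest steps open: in $(\Rightarrow)$ the one-line claim that finiteness of trajectories is ``free because $\overline{U}(\alpha)$ is a finite tree'' is too quick, since a single bounce path threads through the expansions of many different variables and needs a global well-founded measure (e.g.\ the size of the common instance $\rho$, or the triangular order of the m.g.u.); and in $(\Leftarrow)$ the occurs-check contradiction is announced but not carried out. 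Both gaps are fillable, but if you notice the auxiliary-type reduction you can avoid filling them at all.
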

\begin{proof}
W.l.o.g. we consider only the case of $\sigma$.
For each type variable $\alpha\in \sigma$, let $\sigma[\alpha,\alpha]$ be $\sigma$, where we have highlighted the two occurrences of a variable $\alpha$, $u[\alpha],v[\alpha]$. Consider the new type $\sigma[\alpha_1,\alpha_2]\multimap \alpha_1\multimap \alpha_2$ and compute ${\mathcal R}(\sigma[\alpha_1,\alpha_2]\multimap \alpha_1\multimap \alpha_2)\cdot {\mathcal R}(\tau)$. Then, by Proposition~\ref{cuore} and Theorem~\ref{clos}, $\sigma$ and $\tau$  unify with unifier $U$ if and only if $U(\alpha)={\mathcal T}_Z(S )$, for a suitable set $Z$ of fresh variables, where $S$ is the collection of all possible outcomes $w[\xi_\alpha]$, for some fresh variable $\xi_\alpha$, one for each type variable $\alpha\in \sigma$:
 \[\xymatrix{
\alpha_i \ar[r] & \bullet\ar[d]_{\{\langle \alpha_1,u[\alpha_1]\rangle, \langle \alpha_2,v[\alpha_2]\rangle\}} & \bullet\ar[r] & w[\xi_\alpha]\\
                  & \bullet\ar@/^/[r]^{{\mathcal R}(\tau)} & \bullet\ar@/^/[l]^{{\mathcal R}(\sigma[\alpha_1,\alpha_2])}\ar[u]_{{\{\langle \alpha_1,u[\alpha_1]\rangle, \langle \alpha_2,v[\alpha_2]\rangle\}}} &
}\]
\end{proof}

\section{Final Remarks and Directions for Future Work}\label{finrem}
In this paper, we have established the structural analogy between the interpretation of affine $\lambda$-terms  as partial involutions in a  GoI model {\em \`a la} Abramsky  and their principal types. This allows for understanding  GoI linear application as  resolution, albeit using a variable-directed implementation of unification. We have given full proofs of the equivalence of the finitary type semantics {\em \`a la} Coppo-Dezani (\cite{CDHL82}) and the GoI semantics.

We have argued also that the use of combinators is necessary if we are willing to highlight the correspondence with principal types without having to do any quotienting.

We are confident that this approach extends to full untyped $\lambda$-calculus, as well as to its computational complexity restricted subcalculi, see \cite{Girard1,Girard11,Girard}. To this end,  it is necessary to generalize the type discipline along the lines of \cite{CDGHLS19,HLS20}, using modal  and intersection operators, and to extend the MGU-algorithm to deal with these new constructors. We believe that an appropriate notion of principal type can be introduced also in that case.

As a by-product of our work, we have given also a first answer, for the affine case, to the open problem raised  in \cite{Abr05}, concerning which partial involutions are interpretations of combinatory terms. We are confident that the result that we have obtained will naturally extend to the full $\lambda$-calculus, \ie\ the partial involutions interpreting combinatory terms on the full combinatory algebra are  exactly those structurally isomorphic to their principal types in the general sense.

We believe that the present work sheds more  light on \cite{Girard2}, where the connection between GoI and resolution was originally pointed out in the   context of $C^*$-algebras (see also \cite{Baillot01}).  We think that, following the approach of the present paper and of \cite{DL22}, where  $\lambda$-nets are related to principal types, further connections will arise between principal types and other GoI models, such as \eg\ token machines, categorical semantics, and context semantics. This will contribute to establishing precise connections  between the various GoI models arising in the literature. Another intriguing line of investigation builds on the connection between Levy labels  and types stemming from the seminal work \cite{L78} and further developed in calculi of explicit substitutions, and on the connections between GoI and optimal reductions.

\end{document}